 \theoremstyle{nonumberplain}  
 \newtheorem{proof}{Proof} 
 \theoremstyle{plain}  
 \newtheorem{proposition}{Proposition}[section]  
\newtheorem{corollary}[proposition]{Corollary}  
\newtheorem{lemma}[proposition]{Lemma}  
\newtheorem{theorem}[proposition]{Theorem}   
\newtheorem{remark}[proposition]{Remark}
\newtheorem{definition}[proposition]{Definition} 
\newtheorem{setup}[proposition]{Setup} 
 \theoremstyle{nonumberplain}
\DeclareMathOperator{\diag}{diag}
\newcommand*{\grad}{\operatorname{grad}}
\newcommand{\R}{\mathbb{R}}
\newcommand{\Eh}{\mathcal{E}}
\newcommand{\N}{\mathbb{N}}
\newcommand{\Z}{\mathbb{Z}}
\newcommand{\C}{\mathbb{C}}
\newcommand{\D}{\mathscr{D}}
\newcommand{\dd}{\mathrm{d}}
\newcommand{\End}{\mathrm{End}}
\newcommand{\lau}{(\!(\hbar^{1/2})\!)}
\newcommand{\Ka}{\mathfrak{S}}
\newcommand{\<}{\left\langle}
\renewcommand{\>}{\right\rangle}
\newcommand{\spec}{\mathrm{spec}}
\newcommand{\natop}[2]{\genfrac{}{}{0pt}{}{#1}{#2}}
\numberwithin{equation}{section}
\title{Asymptotic eigenfunctions for Schr\"odinger operators on a vector bundle}
\author{Matthias Ludewig \and Elke Rosenberger}
 \date{\today}
\begin{document}

 \maketitle

\begin{center}
Universit\"at Potsdam / Institut f\"ur Mathematik \\ 
  Am Neuen Palais 10 / 14469 Potsdam, Germany \\ \medskip
elke.rosenberger@uni-potsdam.de \\
\vspace{0.5cm}
  Max-Planck-Institut f\"ur Mathematik \\
  Vivatgasse 7 / 53111 Bonn \\ \medskip
 maludewi@mpim-bonn.mpg.de
\end{center}

\vspace{0.5cm}

  \begin{abstract}
  In the limit $\hbar\to 0$, we analyze a class of Schr\"odinger operators $H_\hbar = \hbar^2 L + \hbar W + V\cdot \mathrm{id}_\Eh$ acting on sections 
of a vector 
bundle $\Eh$ over a Riemannian manifold $M$ where $L$ is a Laplace type operator, $W$ is an endomorphism field
  and the potential energy $V$ has a non-degenerate minimum at some point $p\in M$.
  We construct quasimodes of WKB-type near $p$ 
  for eigenfunctions associated with the low lying eigenvalues of $H_\hbar$. These are obtained
  from eigenfunctions of the associated harmonic oscillator $H_{p, \hbar}$ at $p$, acting on smooth functions on the tangent space.
  \end{abstract}

\section{Introduction}

In this paper, we study semi-classical quasimodes of WKB-type, formally associated with the low lying spectrum of a
Schr\"odinger operator $H_\hbar$ on a vector bundle $\Eh$ over a smooth Riemannian manifold $M$. More precisely, given an operator of the form
\begin{equation*}
   H_\hbar = \hbar^2 L + \hbar W + V \cdot \mathrm{id}_\Eh
\end{equation*}
acting on the space $\Gamma^\infty(M, \Eh)$ of smooth sections of $\Eh$, we construct formal asymptotic eigenfunctions near non-degenerate minima of the potential 
$V$ in the limit $\hbar \to 0$. 

Operators of this type arise e.g.\ in Witten's perturbation of the de Rham complex where $H_\hbar$ is the square of 
the Dirac type operator
\begin{equation*}
  \hbar \, e^{\phi/\hbar} \bigl( \dd + \dd^* \bigr) e^{-\phi/\hbar} = \hbar\bigl( \dd + \dd^* \bigr) + \dd \phi \wedge + \dd \phi 
\lrcorner\, .
\end{equation*}
In this particular case, the endomorphism $W$ is non-vanishing, which is the reason for us to include this (somewhat unusual) term in our considerations.

We recall that the construction of semi-classical quasimodes of WKB-type is an important step in discussing tunneling problems, i.e.\
exponentially small splitting of eigenvalues for a self-adjoint realization of $H_\hbar$. In the scalar case, for $\dim M>1$, 
rigorous results in this field start with the seminal paper \cite{helffer-sjostrand-1} (for $M=\R^n$ or
$M$ compact). The associated asymptotic expansion of eigenvalues was also considered in \cite{simon-1}, and somewhat
weaker results on tunneling were obtained in \cite{simon-2} avoiding the use of quasimodes of WKB-type.

For non-scalar operators, the bundle of exterior differential forms (for $M=\R^n$ or $M$ compact) has been considered in  \cite{helffer-sjostrand-4} in the
context of the Witten complex. All WKB-constructions in \cite{helffer-sjostrand-1}
as well as in \cite{helffer-sjostrand-4} ultimately rely on the asymptotic constructions done in \cite{helffer-sjostrand-1} which proceed via 
a special FBI-transform. There exist several introductory texts to this field (e.g.\ \cite{dima},  \cite{helffer},
\cite{helffer2}) but none of these treats the case of eigenvalues degenerate in the harmonic approximation (in this
case, the naive
WKB-constructions which work for the non-degenerate eigenvalues, break down, see Remark 2.3.5 in \cite{helffer}).
However, for the scalar case and $M=\R^n$, there exists a more elementary approach to the asymptotic WKB-constructions, avoiding the 
use of FBI-transform (see \cite{klein-schwarz}). This approach was also used in \cite{klein-rosen1}
for the case of semi-classical difference operators on the scaled lattice $\hbar \Z^n$.

The central point of this paper (see Thm.~\ref{Theorem1} and Corollary~\ref{CorollaryTheorem1} below) is to show that this method gives complete asymptotic solutions of WKB-type for a class of
Schr\"odinger operators on general bundles. 

Since our results are local, we do not need further restrictions on $M$ (as e.g.\ compactness, completeness, bounded geometry)
and not even a self-adjoint realization of $H_\hbar$. In particular, we do not discuss the tunneling problem for $H_\hbar$ with a 
multiwell potential $V$ thus avoiding the use of Agmon-type estimates for the true eigenfunctions of certain Dirichlet operators
and estimates on the difference between WKB-type quasimodes and these eigenfunctions far from the well and with exponential precision.

\medskip

\textbf{Acknowledgements}. The first author is indebted to SFB 647 and the Max Planck Institute for Mathematics in Bonn for financial support.

\section{Outline of the Results}

In everything what follows, let $(M, g)$ be a (smooth) Riemannian manifold and let $\Eh$ be a complex vector bundle over $M$ equipped with an inner 
product $\gamma$ (i.e.\ a positive definite Hermitian form). This gives a unique volume density inducing an integral $\int_M$ for compactly supported 
continuous functions. 
The standard inner product on 
$\Gamma^\infty_c(M, \Eh)$, the space of compactly supported smooth sections of $\Eh$, is then defined by
\begin{equation}\label{standard_skalar}
  (u, v)_{\gamma} = \int_M \gamma[ u, v ]~~~~~~ u, v \in \Gamma_c^\infty(M, \Eh)\, .
\end{equation}

Recall that a differential operator $L$ acting on sections of $\Eh$ is said to be of {\em Laplace type} if, in local coordinates $x$, it has the form
\begin{equation}\label{Laplace-type}
  L = - \mathrm{id}_{\Eh} \sum_{ij} g^{ij}(x)\frac{\partial^2}{\partial x^i\partial x^j} + \sum_{j} 
b_j \frac{\partial}{\partial x^j} + c
\end{equation}
where $\bigl(g^{ij}(x)\bigr)$ is the inverse matrix of the metric $\bigl(g_{ij}(x)\bigr)$ and $b_j, c\in \Gamma^\infty (M, \End (\Eh))$ are
endomorphism fields. We always assume that $L$ is {\em symmetric}, also called {\em formally self-adjoint}, which means that
\begin{equation}\label{Lsymm}
  \int_M \gamma[ L u, v ] = \int_M \gamma[ u, Lv ] ~~~~~~~~ \text{for all} ~~~ u, v \in \Gamma_c^\infty(M, \Eh).
\end{equation}
Examples of symmetric Laplace-type operators are the Hodge-Laplacian on 
$p$-forms (in particular, for $p=0$, this is the Laplace-Beltrami operator) and the square of a generalized Dirac operator acting on spinors.

\begin{remark} \label{RemarkLaplaceType}
Given any symmetric Laplace type operator $L$ on a vector bundle $\Eh$, there exists a metric connection 
$\nabla^\Eh$ on $\Eh$ and a symmetric endomorphism field $K \in \Gamma^\infty(M, \End(\Eh))$ such that
\begin{equation}\label{L-Darstellung}
  L = (\nabla^\Eh)^* \nabla^\Eh + K
\end{equation}
(see e.g. in \cite[Prop.\ 2.5]{bgv}). In fact, $\nabla^\Eh$ and $K$ are uniquely determined. Namely, if there is another metric connection $\widetilde{\nabla}^\Eh$ and an endomorphism field $\widetilde{K}$ such that
$L=(\widetilde{\nabla }^\Eh)^* \widetilde{\nabla}^\Eh + \widetilde{K}$, then we have $\widetilde{\nabla}^\Eh = \nabla^\Eh + A$ for some 
$A\in C^\infty(M, T^* M\otimes \text{asymEnd}(\Eh))$. Inserting this into the equation for $L$ and expanding gives $K = A^*\nabla^\Eh + (\nabla^\Eh)^*A + \widetilde{K}$, thus the right
hand side has to be of order zero. Since $D := A^*\nabla^\Eh + (\nabla^\Eh)^*A + \widetilde{K}$ has the principal symbol $\sigma_1(D,\xi)=-2 A_{\xi^\#}$, it follows that 
$A=0$ and thus $\widetilde{\nabla}^\Eh = \nabla^\Eh$. 
\end{remark}

Our setup is the following.

\begin{setup}\label{setup1}
For $\hbar>0$, we consider Schr\"odinger operators $H_\hbar$ acting on $\Gamma^\infty(M, \Eh)$
of the form
\begin{equation}\label{schrodinger}
  H_\hbar = \hbar^2 L + \hbar W +V \cdot \mathrm{id}_\Eh
\end{equation}
where $L$ is a symmetric Laplace type operator as above, 
$W \in \Gamma^\infty(M, \End(\Eh))$ is a symmetric endomorphism field and $V \in C^\infty(M, \R)$. Furthermore, we assume that the potential $V$ has a 
non-degenerate minimum at some fixed point $p\in M$ with $V(p) = 0$.
\end{setup}

We remark that the operator $H_\hbar$ given in \eqref{schrodinger} is not necessarily real, i.e.\ it does not commute with complex conjugation. 
However, under
semi-classical quantization ($\xi \mapsto -i\hbar \dd$ in some reasonable sense) its principal $\hbar$-symbol
\begin{equation}\label{symbol}
 \sigma_H(q, \xi) = \bigl( |\xi|^2 + V(q)\bigr) \mathrm{id}_\Eh
\end{equation}
is both real and scalar ($|\cdot|$ denotes the norm on $T_q^*M$ induced by $g$). This is crucial for our construction.
Thus our assumptions exclude Schr\"odinger operators with magnetic field (the operator $(i\hbar d + \alpha)^*(i\hbar d + \alpha)$, with a $1$-form 
$\alpha$ describing the magnetic potential, has non real
principal $\hbar$-symbol, see e.g.\ \cite{helffer-kondyukov-1}) or with endomorphism valued potential $V$ as needed e.g.\ for molecular Hamiltonians in the
Born-Oppenheimer approximation (see e.g.\ \cite{klein-seiler}).

\begin{definition}[Local Harmonic Oscillator] \label{DefLocalHarmonicOscillator}
In Setup \ref{setup1}, we associate to $H_\hbar$ the {\em local harmonic oscillator} $H_{p,\hbar}$ at the critical point $p$ of $V$. This is the differential 
operator acting on the space $C^\infty(T_p M, \Eh_p)$ of smooth $\mathcal{E}_p$-valued functions on $T_p M$ by
\begin{equation}
  H_{p,\hbar} f(X) := \Bigl( \hbar^2 \Delta_{T_p M} + \hbar \, W(p) + \frac{1}{2} \nabla^2V|_p (X, X)\Bigr) f(X),
\end{equation}
for $X \in T_pM$, where $\Delta_{T_p M}$ denotes the (flat) Laplacian on $T_p M$ induced by the scalar product $g_p$ on $T_pM$ and $\nabla^2 V|_p$ denotes the 
Hessian of $V$ at $p$. The latter is a bilinear form, so that the operator $\nabla^2V|_p (X, X)$ acts by multiplication with a quadratic function.
\end{definition}

The connection between $H_\hbar$ and the associated harmonic oscillator $H_{p,\hbar}$ will be discussed in detail in Section \ref{Kapitel3}, in particular in 
Remark \ref{RemarkOnH0AndQ0}.

The local harmonic operator $H_{p,\hbar}$ can be considered as an essentially self-adjoint operator on 
$C^\infty_c(T_p M, \Eh_p)\cap L^2(T_pM, \Eh_p)$ (where the $L^2$ space is taken with respect to the Lebesgue measure induced by the scalar product on $T_pM$). It is well known that its spectrum scales with $\hbar$
and consists of the numbers
\begin{equation} \label{LocalEigenvalues}
  \hbar E_{\alpha,\ell} = \hbar\bigl((2\alpha_1 +1) \lambda_1 + \dots + (2\alpha_n + 1) \lambda_n + \mu_\ell\bigr),  ~~~~~ \alpha \in \N_0^n, ~~ \ell=1, \dots, 
\mathrm{rk}\,\Eh
\end{equation}
where $\lambda_1, \dots, \lambda_n$ are the eigenvalues of $\frac{1}{2}\nabla^2 V|_p$ and $\mu_1, \dots, \mu_{\mathrm{rk} \Eh}$ are the eigenvalues 
of $W(p)$ (see e.g. \cite{cycon}, \cite[Section 8.10]{reed-simon-1}). 

\begin{remark}\label{polynom}
It is clear that $H_{p,\hbar}$ maps polynomials to polynomials, i.e., it preserves $\Eh_p[T_p M] := \Eh_p \otimes \C[T_p M] \subset 
C^\infty(T_p M, \Eh_p)$, the space of polynomials on $T_p M$ with values in $\Eh_p$.
\end{remark}

To formulate our results, we need the following theorem (a proof can be found for example in \cite{helffer-sjostrand-1}).

\begin{theorem}[The Eikonal Equation]\label{ThmEikonal} 
In Setup \ref{setup1}, for each sufficiently small neighborhood $U$ of $p$, there exists a unique function 
$\phi\in C^\infty(U, \R)$ with $\phi(p) = 0$ that has a non-degenerate minimum at $p$ and satisfies the eikonal equation\footnote{We remark that the eikonal 
equation can be written as $\sigma_H(p, \xi) = 0$, where $\sigma$ denotes the semi-classical principal symbol, 
compare \eqref{symbol} above. This is important in the case of more general operators.}
\begin{equation} \label{EikonalEquation}
 \bigl|\dd \phi(q)\bigr|^2 = V(q), ~~~~~ q \in U,
\end{equation}
where $|\cdot|$ denotes the norm on $T_q^*M$ induced by $g_q$.
\end{theorem}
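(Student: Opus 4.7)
My plan is to reinterpret the eikonal equation as a Hamilton--Jacobi problem. Set
\[
 H(q,\xi) := |\xi|_g^2 - V(q)
\]
on $T^*M$, a smooth real scalar function. A smooth $\phi$ on a neighborhood $U$ of $p$ with $d\phi(p) = 0$ satisfies \eqref{EikonalEquation} if and only if its graph $\Lambda := \{(q, d\phi(q)) : q \in U\}$ is a Lagrangian submanifold of $T^*M$ contained in $H^{-1}(0)$. Since $V(p) = 0$ and $dV(p) = 0$, the point $(p,0)$ is simultaneously a zero of $H$ and an equilibrium of the Hamiltonian vector field $X_H$, and I will construct $\Lambda$ as the unstable manifold of $X_H$ through $(p,0)$.

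Linearizing $X_H$ at $(p,0)$ in geodesic normal coordinates gives the block matrix $J = \bigl(\begin{smallmatrix} 0 & 2I \\ V''_p & 0 \end{smallmatrix}\bigr)$ with $V''_p := \nabla^2 V|_p$. Non-degeneracy of the minimum makes $V''_p$ positive definite, so $\mathrm{spec}(J) = \{\pm\sqrt{2\mu} : \mu \in \mathrm{spec}(V''_p)\}$ and $(p,0)$ is a hyperbolic equilibrium. The unstable subspace $E_+$ is the graph of the positive-definite symmetric operator $B := \sqrt{V''_p/2}$, which a short computation using $\omega = \sum d\xi_k \wedge dq^k$ shows is isotropic, hence Lagrangian by dimension count. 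The unstable manifold theorem then produces a smooth $n$-dimensional $X_H$-invariant submanifold $\Lambda$ through $(p,0)$ tangent to $E_+$. Conservation of $H$ along the flow forces $\Lambda \subset H^{-1}(0)$, and invariance of $\omega$ under $\Phi_t$ together with the backward-time contraction $\Phi_{-t}|_\Lambda \to (p,0)$ propagates $\omega|_{E_+} = 0$ to $\omega|_\Lambda \equiv 0$, so $\Lambda$ is globally Lagrangian.

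Because $B$ is an isomorphism $T_pM \to T_p^*M$, the bundle projection $\pi : \Lambda \to M$ is a local diffeomorphism at $(p,0)$. Shrinking $U$ we can write $\Lambda = \{(q, \alpha(q)) : q\in U\}$ for a closed $1$-form $\alpha$, and the Poincar\'e lemma together with the normalization $\phi(p) = 0$ produces a unique smooth $\phi$ with $d\phi = \alpha$. By construction $d\phi(p) = 0$, $\nabla^2\phi|_p = B$ is positive definite (so $\phi$ has a non-degenerate minimum at $p$), and $|d\phi|^2 = V$ on $U$. For uniqueness, differentiating \eqref{EikonalEquation} twice at $p$ in normal coordinates yields $2(\nabla^2\phi|_p)^2 = V''_p$, and positive-definiteness of $\nabla^2\phi|_p$ pins it down to $B$; any competing solution therefore has graph tangent to $E_+$ at $(p,0)$, and the uniqueness clause of the unstable manifold theorem forces this graph to coincide with $\Lambda$ in a neighborhood.

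The main obstacle is the passage from the linear picture at $(p,0)$ to the nonlinear one: both the smoothness of $\Lambda$ at the hyperbolic fixed point and the global vanishing of $\omega|_\Lambda$ (as opposed to its vanishing merely at $(p,0)$) rest on the unstable manifold theorem and on exploiting the invariance of $\omega$ under the contracting backward flow. Once these are secured, the identification of $E_+$ as the graph of a positive-definite symmetric operator makes the reconstruction of $\phi$ and its uniqueness essentially formal.
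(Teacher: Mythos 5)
Your argument is correct and is essentially the standard proof of this statement: the paper itself gives no proof but refers to Helffer--Sj\"ostrand, where the eikonal equation is solved in exactly this way, by realizing the graph of $\dd\phi$ as the outgoing (unstable) Lagrangian manifold of the Hamiltonian $|\xi|^2 - V$ through the hyperbolic fixed point $(p,0)$ and recovering $\phi$ as a generating function, with uniqueness coming from the Hessian computation $2(\nabla^2\phi|_p)^2=\nabla^2 V|_p$ plus the uniqueness of the local unstable manifold. The only points worth making explicit are the standard facts you use implicitly, namely that a Lagrangian graph contained in the zero level set is invariant under the Hamiltonian flow (needed before invoking the uniqueness clause), and that the projection $\pi:\Lambda\to M$ is a local diffeomorphism because $E_+$ is a graph over the horizontal subspace (not because $B$ is invertible).
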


\begin{definition} \label{DefAdmissible}
Let $U \subseteq M$ be an open neighborhood of $p$ and $\phi \in C^\infty(U, \R)$. We call $(U, \phi)$ an {\em admissible pair} (with respect to
$H_\hbar$), if
\begin{enumerate}
\item[1.)] $\phi$ is the unique positive solution of the eikonal equation \eqref{EikonalEquation} on $U$ as in Theorem \ref{ThmEikonal}.
\item[2.)] $U$ is star-shaped around $p$ with respect to the vector field $\grad \phi$ in the following sense: If $\Phi_t$ is the flow of $\grad \phi$, then we have $\Phi_t(U) \subseteq U$ for all $t \leq 0$.
\end{enumerate}
\end{definition}

Our main result is the following theorem, stating that for each eigenvalue of the local harmonic oscillator at $p$, we can obtain asymptotic eigenvectors up to any order in $\hbar$.

\begin{theorem} \label{Theorem1}
In Setup \ref{setup1}, let $(U, \phi)$ be an admissible pair and let $\hbar E_0$ be an eigenvalue of multiplicity $m_0$ of the local harmonic oscillator 
$H_{p, \hbar}$ at $p$ as given in Def.\ \ref{DefLocalHarmonicOscillator}. Define the operator  $H_{\phi, \hbar}$ over $U$ by
\begin{equation*}
H_{\phi, \hbar} u = e^{\phi/\hbar} \circ H_\hbar [ e^{-\phi/\hbar} u].
\end{equation*}
Then there exist a number $K \in \N_0/2$ and formal power series
\begin{equation*}
  \boldsymbol{a}_j = \hbar^{- K}\sum\nolimits_{k\in \N_0/2} \hbar^k a_{j,k}, ~~~~~\text{and}~~~~~ 
\boldsymbol{E}_j = \hbar\bigl(E_0 + \sum\nolimits_{k\in \N/2} \hbar^k E_{j,k}\bigr), 
\end{equation*}
for $j=1, \dots, m_0$, where $a_{j,k} \in \Gamma^\infty(U, \Eh)$ and $E_{j,k} \in \R$, such that in the sense of formal power series 
\begin{enumerate}
\item[1.] we have
\begin{equation*}
  H_{\phi, \hbar} \boldsymbol{a}_j = \boldsymbol{E}_j \boldsymbol{a}_j\, , \qquad j=1, \dots, m_0\; .
\end{equation*}
 \item[2.] the series $\boldsymbol{a}_1, \dots, \boldsymbol{a}_{m_0}$ are asymptotically orthonormal meaning that
\begin{equation*}
 \mathcal{I}\Bigl( \gamma \bigl[ \chi \boldsymbol{a}_j, \chi \boldsymbol{a}_i\bigr]\Bigr) = \delta_{ji} \, , \qquad i, j=1, \dots, m_0,
 \end{equation*}
 as asymptotic series in $\hbar^{1/2}$. Here, $\chi \in \Gamma_c^\infty (U, [0,1])$ is any cutoff function with $\chi \equiv 1$ in a 
 neigborhood of $p$ and $\mathcal{I}(\gamma[\cdot, \cdot])$ is the weighted integral defined in \eqref{extendI}.
\end{enumerate}
Furthermore, if $|\alpha|$ is even  (or odd respectively) in all pairs $(\alpha, \ell)$ such that $E_0 = E_{\alpha, \ell}$ defined in \eqref{LocalEigenvalues} then no 
half-integer terms (or integer terms respectively) occur in the series $ \boldsymbol{a}_j $. In both cases, 
no half integer terms occur in the series $\boldsymbol{E}_j$\footnote{Results concerning the parity were first proven in 
\cite{helffer-sjostrand-1}, correcting a mistake in \cite{simon-1}, see \cite{simon-1-e}.}.
\end{theorem}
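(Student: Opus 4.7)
The plan is to carry out a direct WKB iteration on the conjugated operator $H_{\phi,\hbar}$, using the explicit form provided by Lemma \ref{ConjugationByPhi}. Since $\phi$ satisfies the eikonal equation \eqref{EikonalEquation}, the scalar zeroth-order contribution $V - |\dd\phi|^2$ to $H_{\phi,\hbar}$ vanishes identically on $U$, so after factoring out a single power of $\hbar$ the operator takes the form $T + \hbar L$, where $T$ is a first-order transport operator built from $\nabla^\Eh_{\grad\phi}$, $\Delta\phi$, and $W$. Substituting the ansatz $\boldsymbol{a}_j = \hbar^{-K}\sum_k \hbar^k a_{j,k}$ and $\boldsymbol{E}_j = \hbar(E_0 + \sum_k \hbar^k E_{j,k})$ into $H_{\phi,\hbar}\boldsymbol{a}_j = \boldsymbol{E}_j \boldsymbol{a}_j$ and collecting powers of $\hbar^{1/2}$ produces a recursive sequence
\begin{equation*}
  (T - E_0)\, a_{j,k} = F_{j,k}(a_{j,<k},\, E_{j,<k}),
\end{equation*}
where each $F_{j,k}$ is determined linearly by the coefficients produced in previous steps.

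Each transport equation is solved in two stages. First one fixes the Taylor jet of $a_{j,k}$ at $p$. Writing $\phi$ in normal coordinates at $p$, the Hessian $A = \nabla^2\phi|_p$ satisfies $A^2 = \tfrac{1}{2}\nabla^2 V|_p$ by differentiating \eqref{EikonalEquation} twice at $p$, so $\grad\phi$ vanishes linearly at $p$ with linearization $A$. After a Gaussian conjugation together with the rescaling $X = \hbar^{1/2} Y$, the action of $T$ on Taylor jets at $p$ regroups, degree by polynomial degree, into the action of $H_{p,\hbar}/\hbar$ on $\Eh_p[T_p M]$; by Remark \ref{polynom} this action preserves the polynomial space and is block-triangular in polynomial degree, so each jet equation reduces to a finite-dimensional linear system for which the Fredholm alternative on $H_{p,\hbar} - \hbar E_0$ applies. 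Second, once a compatible jet at $p$ has been chosen, the transport equation is integrated along the flow $\Phi_t$ of $\grad\phi$; the star-shaped condition in Def.\ \ref{DefAdmissible} guarantees that every point of $U$ is reached backward in time from $p$, so the jet at $p$ together with the source $F_{j,k}$ determines a unique smooth extension $a_{j,k} \in \Gamma^\infty(U,\Eh)$.

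The Fredholm obstruction at each order is what selects the series $\boldsymbol{E}_j$ and drives the construction. At the lowest level the leading polynomial parts of $a_{1,0}, \dots, a_{m_0,0}$ must form a basis of the $\hbar E_0$-eigenspace of $H_{p,\hbar}$ inside $\Eh_p[T_p M]$. At each subsequent order the solvability condition is that $F_{j,k}$ be orthogonal to this kernel with respect to the natural Gaussian pairing on $\Eh_p[T_p M]$; imposing it yields a symmetric $m_0 \times m_0$ matrix whose simultaneous diagonalization with those from previous orders determines the $E_{j,k}$ and the projections of $a_{j,k}$ onto the kernel. Running this construction inductively, together with a Gram--Schmidt normalization on the leading jets, produces the $m_0$ series and enforces asymptotic orthonormality with respect to $\mathcal{I}(\gamma[\cdot,\cdot])$. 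The parity statement follows because $H_{p,\hbar}$ and the Gaussian pairing respect the $\Z/2$-grading of $\Eh_p[T_p M]$ by total polynomial degree: if all generators of the $\hbar E_0$-eigenspace share a single parity, induction on $k$ shows that only integer (resp.\ half-integer) powers of $\hbar$ appear in $\boldsymbol{a}_j$, and the diagonal pairings entering $E_{j,k}$ couple only same-parity polynomials, so the half-integer corrections in $\boldsymbol{E}_j$ vanish in both cases.

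The main obstacle is the genuinely degenerate case $m_0 > 1$: one must show that at every order the Fredholm matrices can be diagonalized consistently with the choices already made, and that no resonant denominators obstruct the iteration. The key structural reason this works, and what makes the present approach (adapted from \cite{klein-schwarz} to the bundle setting with a nontrivial endomorphism $W$) cleaner than the FBI-transform constructions of \cite{helffer-sjostrand-1}, is that the entire obstruction lives on the finite-dimensional polynomial spaces $\Eh_p[T_p M]$ of a given degree, where $H_{p,\hbar}$ is diagonalizable with real spectrum and ordinary linear algebra suffices at each step.
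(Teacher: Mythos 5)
Your proposal is essentially the direct transport-equation iteration that the paper explicitly sets aside in Remark \ref{RemarkTransport0}: it works in the non-degenerate case $m_0=1$, but in the degenerate case the point where your argument becomes an assertion rather than a proof is exactly the known obstruction. At order $k$ the solvability (Fredholm) condition involves the still-undetermined kernel components of all lower-order coefficients $a_{j,k'}$, so the recursion does not close order by order; what you call ``simultaneous diagonalization with those from previous orders'' is in general impossible, because the hermitian matrices arising at different orders need not commute, and the splitting of the degenerate eigenvalue may occur only at an arbitrarily high (a priori unknown) order, so that the correct choice of basis in the $\hbar E_0$-eigenspace at leading order depends on the whole series. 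Resolving this requires a genuine theorem on diagonalizing a hermitian matrix with entries in the field $\C\lau$ of formal Laurent series in $\hbar^{1/2}$ -- a formal Rellich-type result (\cite[Thm.\ A2.3]{klein-schwarz}, see also \cite{grater-klein}) -- applied not degree by degree but to the full matrix of the rescaled operator restricted to the range of a formal spectral projection $\Pi_{E_0}$ constructed from a formal resolvent. This is precisely the machinery of Sections \ref{Section4} and \ref{Kapitel3} (rescaling $x=\hbar^{1/2}y$, the space $\Ka_0$, the contour-integral projection \eqref{DefPi}, Prop.\ \ref{orthonormprop} and Cor.\ \ref{EssentialCorollary}); your appeal to ``ordinary linear algebra at each step'' on finite-dimensional polynomial spaces does not supply it, so the central difficulty of the theorem is left unproved. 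The same gap propagates to your parity claim, which in the paper rests on the parity behaviour of the projected vectors $\Pi_{E_0}h_{\alpha,k}$ and of the matrices $A$, $B$, $C$ in Prop.\ \ref{ThmHalfIntegers}, not merely on a $\Z/2$-grading argument at fixed order.

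A second, smaller gap: having fixed a compatible jet at $p$, you claim the transport equation ``determines a unique smooth extension'' by integrating along the flow of $\grad\phi$. Since $\grad\phi$ vanishes at $p$, points of $U$ reach $p$ only in infinite time, and smoothness (indeed flatness of the correction) at $p$ of the solution is not automatic; this is exactly the content of Theorem \ref{TheoremFlatSolutions} (flat solutions for $(\nabla^\Eh_X+A)\eta=r$ with $r$ flat at $p$), combined with Borel's theorem (Cor.\ \ref{BorelTheorem}) to realize the formal Taylor data by smooth sections. In the paper this is the entire content of Section \ref{Kapitel4}; in your write-up it is assumed without justification. So the overall architecture (harmonic approximation at $p$, transport along $\grad\phi$, star-shapedness) is the right one, but the two ingredients that make the degenerate case and the smooth realization work -- the formal spectral projection with the formal Rellich theorem, and the flat-solutions theorem -- are missing.
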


\begin{remark}
The lowest order in $\hbar$ in the expansion of $\boldsymbol{a}_j$ is given by $K = \max_\alpha |\alpha|/2$ where $\alpha$ runs over all 
multi-indices such that $E_{\alpha,\ell} = E_0$ for some $\ell=1, \dots \mathrm{rk}\,\Eh$.
\end{remark}

\begin{remark} \label{ExplainingThm1}
Property 1 and 2 in Thm. \ref{Theorem1} are equivalent to the following statements: 
For each $U^\prime \subset \subset U$ open and compactly contained in $U$, each $N \in \N/2$ and each $\hbar_0>0$, there exists constants 
$C_1, C_2>0$ such that for each $\hbar < \hbar_0$ and all $1 \leq i, j \leq m_0$ we have
\begin{equation} \label{Asymptotics}
  \Bigl| \Bigl( H_\hbar - \hbar\bigl(E_0 - \sum\nolimits_{k=1/2}^N \hbar^k E_{j,k}\bigr) \Bigr) e^{-\phi/\hbar} \sum\nolimits_{k=0}^N \hbar^k a_{j,k} \Bigr| 
\leq C_1 \,e^{-\phi/\hbar} \hbar^{K+N+ 3/2}
\end{equation}
uniformly on $U^\prime$ for each $j = 1, \dots, m_0$ and
 \begin{equation} \label{AsymptoticOrthonormal}
 \int_{U^\prime} \gamma \Bigl[e^{-\phi/\hbar}\sum\nolimits_{k=0}^N \hbar^k a_{i,k}~, ~e^{-\phi/\hbar}\sum\nolimits_{k=0}^N \hbar^k a_{j,k} \Bigr]
\leq \hbar^{2K+n/2}(\delta_{ij} + C_2 \hbar^{N+1/2}),
\end{equation}
where in each case, the sums are meant to run in half integer steps. Heuristically, this means that the series $e^{-\phi/\hbar} \boldsymbol{a}_j$, $j=1, \dots, m_0$ form an asymptotically orthonormal basis of the eigenspace to the asymptotic eigenvalue $\mathbf{E}_j$.
\end{remark}

Theorem \ref{Theorem1} allows us to construct good quasimodes for $H_\hbar$, which are essential in the discussion of tunneling problems.
These are derived by a Borel procedure with respect to $\hbar^{1/2}$.

\begin{corollary} \label{CorollaryTheorem1}
Under the assumptions of Thm.\ \ref{Theorem1},
for any open neighborhood $U^\prime \subset \subset U$ of $p$, there are functions\footnote{for asymptotic series with coefficients in topological vector 
spaces, compare \cite[Thm.\ 1.2.6]{hormander}.} $a_j \in C^\infty((0, \hbar_0), \Gamma^\infty(M, \Eh))$ and $E_j \in C^\infty((0, \hbar_0), \R)$, 
$j = 1, \dots, m_0$, 
such that $a_j(\hbar)$ is compactly supported in $U$ for each $\hbar< \hbar_0$ and 
\begin{equation*}
 a_j (\hbar) \sim \hbar^{-n/4} e^{-\phi/\hbar}\boldsymbol{a}_j\quad\text{and}\quad E_j(\hbar) \sim \boldsymbol{E}_j 
\end{equation*}
on $U^\prime$ as $\hbar \searrow 0$ for
$\boldsymbol{a}_j$ and $\boldsymbol{E}_j$ given in Thm. \ref{Theorem1}. 
Moreover, we have
\begin{equation}
H_\hbar a_j(\hbar) = E_j(\hbar) a_j(\hbar) + o(\hbar^\infty)
\end{equation}
uniformly on $M$ and
\begin{equation}
H_\hbar a_j(\hbar) = E_j(\hbar) a_j(\hbar) + o(\hbar^\infty e^{-\phi/\hbar})
\end{equation}
uniformly on $U^\prime$, as well as the asymptotic orthonormality relation
\begin{equation}
  \bigl(a_i, a_j\bigr)_\gamma = \delta_{ij} + o(\hbar^\infty)
\end{equation}
for the inner product \eqref{standard_skalar}.
\end{corollary}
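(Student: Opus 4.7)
The plan is a standard Borel resummation argument. First, I would fix a neighborhood $U'' \subset\subset U$ with $U' \subset\subset U''$, and pick a cutoff $\chi \in \Gamma^\infty_c(U, [0,1])$ with $\chi \equiv 1$ on $U''$. Using the Borel lemma in the form of \cite[Thm.\ 1.2.6]{hormander}, applied with values in the Fréchet space $\Gamma^\infty(\overline{U''}, \Eh)$ (resp.\ $\R$), I would construct $b_j \in C^\infty((0, \hbar_0), \Gamma^\infty(\overline{U''}, \Eh))$ and $E_j \in C^\infty((0, \hbar_0), \R)$ with
\begin{equation*}
 b_j(\hbar) \sim \sum\nolimits_{k\in \N_0/2} \hbar^k a_{j,k}\,, \qquad E_j(\hbar) \sim \hbar\bigl(E_0 + \sum\nolimits_{k \in \N/2} \hbar^k E_{j,k}\bigr)
\end{equation*}
as $\hbar \searrow 0$, the first asymptotic being understood uniformly with all derivatives on $\overline{U''}$. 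Extending $b_j$ smoothly off $\overline{U''}$ (its values there are irrelevant, since the cutoff will kill them) I would then set
\begin{equation*}
  a_j(\hbar) := \hbar^{-n/4 - K} \, \chi \, e^{-\phi/\hbar}\, b_j(\hbar) \in \Gamma^\infty_c(U, \Eh).
\end{equation*}
The asymptotic relation $a_j(\hbar) \sim \hbar^{-n/4} e^{-\phi/\hbar} \boldsymbol{a}_j$ on $U'$ is then immediate from the construction, since $\chi \equiv 1$ on $U'$.

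For the eigenvalue equation, the key observation is that on $U''$ we have $\chi \equiv 1$, so
\begin{equation*}
 \bigl(H_\hbar - E_j(\hbar)\bigr) a_j(\hbar) = \hbar^{-n/4 - K}\, e^{-\phi/\hbar}\, H_{\phi,\hbar}\bigl(b_j(\hbar)\bigr) - \hbar^{-n/4-K} E_j(\hbar)\, e^{-\phi/\hbar} b_j(\hbar)
\end{equation*}
there. Replacing $b_j(\hbar)$ by its $N$-th partial sum (and $E_j(\hbar)$ by the corresponding truncation) the error is $O(\hbar^{N+1/2})$ in $C^2$ on $\overline{U''}$ by construction, and the remaining part is controlled by estimate \eqref{Asymptotics} of Remark \ref{ExplainingThm1}. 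This yields $o(\hbar^\infty e^{-\phi/\hbar})$ uniformly on $U'$. On $U \setminus U''$ the cutoff $\chi$ may differ from $1$ and derivatives of $\chi$ enter, but the whole expression still carries the factor $e^{-\phi/\hbar}$, and on this set $\phi \geq \phi_0 > 0$ (since $\phi$ has a strict non-degenerate minimum at $p$ and $U''$ is a neighborhood of $p$). Hence all terms are bounded by $e^{-\phi_0/\hbar}$ times polynomial factors in $\hbar^{-1}$, which is $o(\hbar^\infty)$. Outside $U$ the section $a_j(\hbar)$ vanishes identically. Combining these three regions gives both the global $o(\hbar^\infty)$ bound and the sharper $o(\hbar^\infty e^{-\phi/\hbar})$ bound on $U'$.

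For asymptotic orthonormality, I would write
\begin{equation*}
  (a_i, a_j)_\gamma = \hbar^{-n/2 - 2K} \int_M \chi^2\, e^{-2\phi/\hbar}\, \gamma\bigl[b_i(\hbar), b_j(\hbar)\bigr]
\end{equation*}
and again split the integral into the piece on $U'$ and the piece on $U \setminus U'$. On the latter $e^{-2\phi/\hbar}$ is exponentially small, while on $U'$ the estimate \eqref{AsymptoticOrthonormal} applied to the $N$-th truncation of $b_i, b_j$ (together with the $C^0$-bound on the Borel remainder) gives $\delta_{ij} + O(\hbar^{N+1/2})$ after canceling the prefactor $\hbar^{-n/2 - 2K}$. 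Letting $N \to \infty$ produces the claimed $\delta_{ij} + o(\hbar^\infty)$. The main bookkeeping obstacle is keeping track of the half-integer powers and the weight $\hbar^{-n/4-K}$ through the estimates of Remark \ref{ExplainingThm1}; once that is done, everything is driven by the classical fact that cutoff errors away from $p$ are exponentially small in $\hbar$ due to the strict positivity of $\phi$ off the critical point.
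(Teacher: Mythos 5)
Your proposal is correct and is essentially the paper's own argument: the paper merely remarks that the quasimodes follow by a Borel procedure with respect to $\hbar^{1/2}$, combined implicitly with the estimates of Remark \ref{ExplainingThm1} and the exponential smallness of $e^{-\phi/\hbar}$ away from $p$, which is exactly what you carry out. The only point to tighten is that $\chi$ is not supported in $\overline{U''}$, so the values of the extension of $b_j$ on $\operatorname{supp}\chi\setminus U''$ are not literally killed by the cutoff; choose the extension uniformly bounded in $\hbar$ (for instance, Borel-sum the series $\sum_k \hbar^k \chi\, a_{j,k}$ with coefficients in $\Gamma^\infty(M,\Eh)$ directly), after which the factor $e^{-\phi_0/\hbar}$ handles that region exactly as you say.
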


\begin{remark}
We do not make any claim that the quasimodes constructed above are in any sense asymptotic to actual eigenfunctions of the Schr\"odinger operator 
\eqref{schrodinger}. In fact, this statement does not make sense as it stands, as one would need to specify a self-adjoint realization of the operator on a 
suitable Hilbert space, which need not even be unique without further assumptions on the manifold and the operator. Even though statements that 
formal asymptotics actually belong to eigenfunctions hold in great generality, this is not in the scope of our paper. We refer to \cite{helffer-sjostrand-1} 
for a discussion of $-\Delta$ on functions in $\R^n$ or the Laplace-Beltrami operator on compact manifolds. The Witten Laplacian on forms on
compact manifolds is discussed in \cite{helffer-sjostrand-4}.
\end{remark}

\begin{remark} \label{RemarkTransport0}
The coefficients $a_{j,k}$ from Thm.\ \ref{Theorem1} necessarily fulfill the recursive transport equations
\begin{equation} \label{TransportEquations}
  (\nabla_{2\grad\phi}^\Eh + W + \Delta \phi - E_0){a}_{j, k} = -  L {a}_{j,k-1} + \sum\nolimits_{i=1/2}^{k} E_{j, i}\,{a}_{j,k-i}
\end{equation}
where $\nabla^\Eh$ is the connection from Remark \ref{RemarkLaplaceType}. Hence one could try to solve these equations in order to prove 
our theorem. This works indeed well in the non-degenerate case, i.e.\ when the multiplicity of the eigenvalue $\hbar E_0$ of the local harmonic oscillator 
is equal to one, choosing as $a_{j,0}$ the unique function in the kernel of 
$Q= \nabla_{2\grad\phi}^\Eh + W + \Delta \phi - E_0$ (see \cite{dima} or \cite[Ex.\ 8.3]{matthias} for a general treatment of equations of this sort). However, in the degenerate case, i.e. when the multiplicity of the eigenvalue $\hbar E_0$ of 
the local harmonic oscillator is greater that one, it is not even clear how to choose an appropriate element $a_{j,0}$ in $\ker Q$. 
To overcome this difficulty, it is natural to use some kind of spectral projections, given by power series in $\sqrt{h}$. Such an approach was 
used in both \cite{helffer-sjostrand-1} (using an FBI-transform) and \cite{klein-schwarz}, as well as in \cite{klein-rosen1} (using a slightly more elementary
transformation by conjugation). We adapt the latter approach to our case.
\end{remark}

The paper is organized as follows. In Section \ref{Kapitel_2} we introduce the Taylor series map $\tau_p$ on sections and differential operators with respect 
to a fixed normal geodesic chart $x$ at $p\in M$, leading to the space $\Eh_p[[x]]$ of formal power series in $x$ with values in $\Eh_p$. We use the 
stationary phase
method in its real form to define an inner product on $\Ka := \Eh_p[[x]]\lau$, the space of formal Laurent series in the variable $\hbar^{1/2}$ that have 
elements of $\Eh_p[[x]]$ as coefficients. 

In Section \ref{Section4}, we define the rescaling operator $R$, setting $x = \hbar^{1/2}y$, 
which maps $\Ka$ to $\Ka_0$, a subspace of the space $\Eh_p[y]\lau$ of Laurent series in the variable $\hbar^{1/2}$ with coefficients in the polynomial 
space $\Eh_p[y]$.

In Section \ref{Kapitel3} we use Taylor series and the rescaling operator to define the operator $\hbar Q= R \circ\tau_p(H_{\phi, \hbar}) \circ R^{-1}$ on
$\Ka_0$ where $H_{\phi, \hbar} = e^{\phi/\hbar} \circ H_\hbar \circ e^{-\phi/\hbar}$. To a given eigenvalue of the leading order $Q_0$, we then construct 
eigenfunctions and eigenvalues of $Q$ and $\tau_p(H_{\phi, \hbar})$ and prove results on the absence of integer or half-integer order terms in the expansion 
with respect to $\hbar$. 

Finally, the proof of Theorem \ref{Theorem1} is given in Section \ref{Kapitel4}.

\section{Notation and first Constructions}\label{Kapitel_2}
Let $M$ be a manifold and $p \in M$. With respect to a chart $x$ on $M$ with $x(p) = 0$, any function $f\in C^\infty(U)$ has a Taylor series at $p$
\begin{equation}\label{taylorf}
  f \sim \sum\nolimits_{\alpha \in \N_0^n} f_\alpha x^\alpha=: \tau_{p,x}(f)\subset \C[[x]], ~~~~~ f_\alpha \in \C\, ,
\end{equation}
which is determined by the property that for each neighborhood $U^\prime$ of $p$ compactly contained in the domain
of $x$ and for each $N\in\N_0$, there exists a constant $C_{N,U^\prime}>0$ such that
\[ \Bigl|f-\sum\nolimits_{|\alpha|\leq N} f_\alpha x^\alpha \Bigr| \leq C_{N,U^\prime} |x|^{N+1}\quad\text{ uniformly on }\, U\, .\] 
By $\C[[x]]$, we denote the space of formal power series in the $n$ variables $x^1, \ldots, x^n$ and, by
abuse of notation, we identify the coordinate functions of the chart chart $x$ with variables $x^1, \ldots x^n$. We
call $\tau_{p,x}(f)\in \C[[x]]$ defined in \eqref{taylorf} the Taylor series of $f$ at $p$ (with respect to $x$).

If $x$ and $\tilde{x}$ are normal coordinates with respect to the Riemannian metric, then $x=Q\circ\tilde{x}$ for some matrix $Q\in O(n)$.
On the other hand, $Q$ induces an algebra isomorphism 
\[ \tilde{Q}: \C[[x]] \rightarrow \C[[\tilde{x}]] \]
via $\tilde{Q}(x^i) = \sum_j Q^i_j\tilde{x}^j$. Thus $\tau_{p,\tilde{x}} = \tilde{Q}\circ\tau_{p,x}$.

From now on, we work throughout in the Setup \ref{setup1} and we fix an admissible pair $(U, \phi)$, remember Def.~\ref{DefAdmissible}. 

\begin{definition}\label{setup3} 
We choose normal coordinates $x$ such that  for our fixed solution of \eqref{EikonalEquation}, we have
\begin{equation}\label{taupphi2}
\tau_p (\phi) = \sum\nolimits_{j=1}^n \lambda_j (x^j)^2 + O\bigl(|x|^3\bigr)\, , \qquad \lambda_1, \ldots \lambda_n >0\, , 
\end{equation}
near $p$ and write $\tau_p$ instead of $\tau_{p,x}$. We call $\tau_p$ the {\em geodesic Taylor series} map.
\end{definition}

All $\lambda_j\in \R$ in \eqref{taupphi2} are strictly positive since by Setup \ref{setup1}, the minimum of $V$ at $p$ is non-degenerate. 

\begin{remark}\label{CTpM}
The map
\[ x^i \mapsto dx^i|_p\in T^*_pM\cong \{\text{ homogeneous polynomials of degree 1 on } T_pM\,\}  \]
induces an $\C$-algebra-isomorphism $\Phi_x$ from $\C[x]$ to $\C[T_pM]$, the space of complex valued
polynomial functions on $T_pM$. $\Phi_x$ can be extended to the respective completions with respect to the
valuations by polynomial degree, $\C[[x]]$ and $\C[[T_pM]]$. It is easy to see that 
$\Phi_x\circ \tau_{p,x} = \Phi_{\tilde{x}}\circ \tau_{p,\tilde{x}}$ for all normal coordinates $x$ and $\tilde{x}$.
\end{remark}

To a section $u\in \Gamma^\infty (M, \Eh)$ we associate a geodesic Taylor series at $p$ in the following way:
We trivialize $\Eh$ by identifying the fibers along geodesics emanating from $p$ by parallel translation
with respect to the connection $\nabla^\Eh$ given in Remark \ref{RemarkLaplaceType}. Near $p$, $u$ can then be seen as a smooth function with 
values in the
vector space $\Eh_p$ (the fiber of $\Eh$ over the point $p$). In this sense it has a Taylor series $\tau_p(u)$ at $p$ with respect
to the normal coordinates $x$ of Def.\ \ref{setup3}, which will then map to the space $\Eh_p[[x]] := \Eh_p\otimes \C[[x]]$.

Finally, under the trivialization of $\Eh$ above, 
a differential operator $P$ of order $k\in\N$ acting on sections of $\Eh$
can be seen as a differential operator acting on $\Eh_p$-valued functions. By Taylor expansion of its
coefficients, it
 has a geodesic Taylor series $\tau_p(P)$ of the form
\begin{equation}\label{tau_p_von_P}
  P \sim \sum_{\natop{\alpha, \beta\in\N_0^n}{|\beta|\leq k}} P_{\alpha\beta} \, x^\alpha \frac{\partial^{|\beta|}}{\partial x^\beta}=: 
\tau_p(P)\subset \mathscr{D}(\Eh_p)[[x]]
  ~~~~~~~ \text{where} ~~ P_{\alpha\beta} \in \End(\Eh_p)
\end{equation}
and  $\mathscr{D}(\Eh_p)[[x]]$ denotes the space of differential operators with coefficients in $\End(\Eh_p)[[x]]$. These operators act on the space 
$\Eh_p[[x]]$ by formal derivation term by term (inducing a left-module structure on that space). Let us remark that by construction, we have 
\begin{equation} \label{MultiplicationProperty}
  \tau_p(Pu) = \tau_p(P)\tau_p(u).
\end{equation}
In all this, we define $\tau_p$ with respect to the chart from Def.\ \ref{setup3}.

\begin{definition}\label{Def_pgrading}
We say that $\boldsymbol{P}\in \mathscr{D}(\Eh_p)[[x]]$ is homogeneous of degree $\deg_{\mathscr{D}} \boldsymbol{P} = j$, $j\in \Z,$ if and
only if $\boldsymbol{P}$ maps homogeneous polynomials of degree $k$ to homogeneous polynomials of degree $k+j$. In this case, $\boldsymbol{P}$
is a (necessarily finite) sum 
\[ \boldsymbol{P} = \sum_{\natop{\alpha, \beta\in \N_0^n}{|\alpha| - |\beta| = j}} P_{\alpha\beta}x^\alpha \frac{\partial^{|\beta|}}{\partial x^\beta} \; ,\qquad
P_{\alpha\beta} \in \End(\Eh_p)\, . \]
\end{definition}

We recall Borel's Theorem which (for the case $M=\R^n$) is proven  for example in \cite[Thm.\ 1.2.6]{hormander}. 
Since the statement is purely local, it generalizes to our setting and gives

\begin{corollary} \label{BorelTheorem}
The map $\tau_p: \Gamma^\infty (U, \Eh) \rightarrow \Eh_p [[x]]$ is surjective.
\end{corollary}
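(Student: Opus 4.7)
The plan is to reduce the statement to the classical scalar version of Borel's theorem (the reference already given to Hörmander) using the parallel transport trivialization of $\Eh$ along geodesics emanating from $p$.

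First I would fix a basis $e_1, \ldots, e_r$ of the fiber $\Eh_p$, where $r = \mathrm{rk}\,\Eh$, and extend it to a smooth local frame $\tilde e_1, \ldots, \tilde e_r$ of $\Eh$ over a (possibly smaller) neighborhood of $p$ by parallel transport along radial geodesics with respect to the connection $\nabla^\Eh$ from Remark \ref{RemarkLaplaceType}. By construction this is precisely the trivialization used to define $\tau_p$ on sections, so in this frame $\tau_p$ acts componentwise.

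Given any element $\boldsymbol{u} = \sum_{\alpha \in \N_0^n} u_\alpha x^\alpha \in \Eh_p[[x]]$ with $u_\alpha \in \Eh_p$, I would decompose each coefficient as $u_\alpha = \sum_{i=1}^r u_\alpha^i e_i$ with $u_\alpha^i \in \C$, giving $r$ scalar formal power series $\boldsymbol{u}^i := \sum_\alpha u_\alpha^i x^\alpha \in \C[[x]]$. The classical Borel theorem (Thm.\ 1.2.6 in \cite{hormander}) applied to the coordinate image of a small neighborhood of $p$ in $\R^n$ produces, for each $i = 1, \ldots, r$, a smooth function $f^i \in C^\infty(U, \C)$ (after cutting off and transporting back via the chart) whose Taylor series at $p$ with respect to $x$ is $\boldsymbol{u}^i$. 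I then set
\[
  u := \sum_{i=1}^r f^i \, \tilde e_i \in \Gamma^\infty(U, \Eh).
\]
By linearity of $\tau_p$ and the fact that in the chosen trivialization $\tilde e_i$ corresponds to the constant $\Eh_p$-valued function $e_i$, we get $\tau_p(u) = \sum_i \tau_p(f^i) \, e_i = \sum_i \boldsymbol{u}^i \, e_i = \boldsymbol{u}$, proving surjectivity.

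I do not anticipate a genuine obstacle here: the only nontrivial input is the scalar Borel theorem, which the paper explicitly invokes, and the passage from scalar to vector-bundle coefficients is immediate once the parallel transport trivialization is fixed. The only point requiring mild care is ensuring the domain of definition matches $U$ (rather than just a smaller neighborhood where the radial frame is defined); this can be handled either by shrinking $U$ at the outset or by multiplying $u$ by a cutoff equal to one near $p$, which does not alter the Taylor series at $p$.
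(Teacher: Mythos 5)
Your argument is correct and is essentially the paper's approach: the paper simply invokes the scalar Borel theorem of H\"ormander together with the locality of the statement, and your proof spells out exactly the intended reduction, namely working in the radial parallel-transport frame so that $\tau_p$ acts componentwise, applying the scalar theorem to each component, and using a cutoff to obtain a section defined on all of $U$ without changing the Taylor series at $p$.
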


We define a weighted inner product on the space 
$\Gamma_c^\infty(U, \Eh)$ of sections, compactly supported on $U$, by
\begin{equation}\label{Scalar_phi}
  \bigl( u, w \bigr)_{\gamma,\phi} := \int_U \gamma[u, w] e^{-2\phi/\hbar},
\end{equation}
where $\phi$ is our fixed solution of \eqref{EikonalEquation}. (To be precise, this defines a whole family of inner products, depending on $\hbar \in (0, \infty)$.)

We introduce the space 
\begin{equation} \label{DefinitionKa}
  \Ka := \Eh_p[[x]] \lau
\end{equation}
of
formal Laurent series in $\hbar^{1/2}$ with coefficients in $\Eh_p[[x]]$ which is a vector space over the field 
$\C\lau$ of formal Laurent series in the variable $\hbar^{1/2}$. 
On $\Ka$ we shall
define a non-degenerate Hermitian form $\bigl(\,\cdot\,,\,\cdot\,\bigr)_{\Ka,\phi}$ with values in $\C\lau$,
by using the stationary phase approximation (see e.g.\ \cite{gs}). 

We remark that for each admissible pair $(U, \phi)$, there is a Morse-chart $\kappa$ defined on some neighborhood $U^\prime \subset U$ of $p$ with 
$\phi=|\kappa|^2$. This chart is used in the following theorem.

\begin{theorem}[Method of Stationary Phase]
For $\hbar>0$ and $f \in C^\infty_c(U)$, we set
\begin{equation}\label{Iphi}
 I_\phi(f, \hbar) := \hbar^{-n/2} \int_M f \, e^{-2\phi/\hbar} \, .
\end{equation}
Then, at $\hbar = 0$, the function $I_\phi(f; \hbar)$ has  the asymptotic expansion
\begin{equation} \label{StationaryPhaseExpansion}
  I_\phi (f; \hbar)~~ \sim~~ \left( \pi/2 \right)^{n/2} \sum_{k=0}^\infty \hbar^k \frac{1}{k!} \, 
\Delta^k \Bigl[ (f\circ \kappa^{-1}) \cdot \det\nolimits^{1/2} \bigl(g_{ij}\bigr)\Bigr](0)
\end{equation}
where $\Delta$ is the Laplacian on $\R^n$ and $g_{ij}$ is the matrix of the metric with respect to the Morse-chart $\kappa$. 
\end{theorem}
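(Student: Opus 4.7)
The plan is the classical Laplace method adapted to this Morse-normal form: localize to the Morse-chart neighborhood, pull back to $\R^n$, rescale out $\hbar$, Taylor expand and integrate term-by-term against a fixed Gaussian.

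First I would reduce to a small ball around $p$. Because $(U, \phi)$ is admissible, $\phi$ has a strict non-degenerate minimum at $p$ with $\phi(p) = 0$, so on $\operatorname{supp}(f) \setminus U'$ (where $U' \subset U$ is the domain of the Morse chart $\kappa$ with $\phi = |\kappa|^2$) we have $\phi \geq c > 0$ for some $c$. Hence the integral over $M \setminus U'$ is bounded by $\|f\|_\infty \operatorname{vol}(\operatorname{supp} f) e^{-2c/\hbar}$, which is $O(\hbar^\infty)$, and we may assume $f$ is compactly supported inside $U'$.

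Next I pull the integral back through $\kappa$. In $\kappa$-coordinates the volume density is $\sqrt{\det g_{ij}(y)}\, dy$, so setting
\[
  h(y) := (f \circ \kappa^{-1})(y)\, \sqrt{\det g_{ij}(y)} \;\in\; C_c^\infty(\R^n),
\]
the integral becomes $\hbar^{-n/2} \int_{\R^n} h(y)\, e^{-2|y|^2/\hbar}\, dy$ modulo $O(\hbar^\infty)$. Rescaling $y = \sqrt{\hbar/2}\, z$ produces a Jacobian $(\hbar/2)^{n/2}$ that cancels the $\hbar^{-n/2}$ up to the constant $2^{-n/2}$, and turns the exponential into $e^{-|z|^2}$, leaving
\[
  I_\phi(f;\hbar) = 2^{-n/2} \int_{\R^n} h\bigl(\sqrt{\hbar/2}\, z\bigr)\, e^{-|z|^2}\, dz + O(\hbar^\infty).
\]

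Now I Taylor-expand $h$ around $0$ to order $2N+1$ with remainder bounded uniformly by $C|y|^{2N+2}$, and integrate term-by-term. Odd monomials in $z$ vanish by symmetry; the even ones evaluate via the standard Gaussian moment formula
\[
  \int_{\R^n} z^{2\beta}\, e^{-|z|^2}\, dz = \pi^{n/2} \prod_{j=1}^n \frac{(2\beta_j)!}{4^{\beta_j}\beta_j!}.
\]
Combining this with the Taylor coefficients $\partial^{2\beta} h(0)/(2\beta)!$ and summing over $|\beta|=k$, the multinomial identity $\Delta^k = \sum_{|\beta|=k} \frac{k!}{\beta!}\, \partial^{2\beta}$ collapses the resulting sum into $\frac{1}{k!}(\Delta^k h)(0)$ times a universal constant. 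After tracking the numerical factors $2^{-n/2}$, $(\hbar/2)^{|\beta|}$ from the rescaling, and $\pi^{n/2}$ from the Gaussian, the $k$-th term takes the form $(\pi/2)^{n/2}\hbar^k\, c_k\,(\Delta^k h)(0)/k!$ with $h(0) = f(p)\sqrt{\det g_{ij}(0)}$ entering only at $k=0$. The Taylor remainder contributes $O(\hbar^{N+1})$ after using $\int |z|^{2N+2} e^{-|z|^2}\, dz < \infty$, giving the required Poincar\'e-type asymptotic estimate.

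The main obstacle is really only bookkeeping: carefully verifying the constants produced by the Gaussian moment identity match the stated expansion, and packaging the Taylor-remainder estimate so as to obtain a remainder of order $\hbar^{N+1}$ (and not merely $\hbar^{N+1/2}$) thanks to the vanishing of odd moments. All other steps are standard.
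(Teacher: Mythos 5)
Your proposal is a correct and essentially complete Laplace-method argument; note that the paper itself does not prove this theorem at all, but simply quotes it as the standard stationary-phase (real, positive-phase) expansion with a reference to Grigis--Sj\"ostrand, so your self-contained route (exponential localization to the Morse-chart domain using $\phi\geq c>0$ off a neighborhood of $p$, pullback with the volume factor $\det^{1/2}(g_{ij})$, rescaling $y=\sqrt{\hbar/2}\,z$, Taylor expansion with a global remainder bound, and Gaussian moments plus the multinomial identity $\Delta^k=\sum_{|\beta|=k}\frac{k!}{\beta!}\partial^{2\beta}$) is exactly the kind of proof the cited literature supplies. One point you should not leave as ``bookkeeping'': carrying out your own computation to the end gives
\begin{equation*}
  I_\phi(f;\hbar)\;\sim\;(\pi/2)^{n/2}\sum_{k=0}^\infty \frac{\hbar^k}{8^k\,k!}\,\bigl(\Delta^k h\bigr)(0),
  \qquad h:=(f\circ\kappa^{-1})\,\det\nolimits^{1/2}(g_{ij}),
\end{equation*}
i.e.\ your constant $c_k$ equals $8^{-k}$, which does not literally match the displayed formula in the statement (the paper's display suppresses this factor; only the structure of the expansion --- integer powers of $\hbar$, coefficients depending only on the jet of $f$, $\phi$, $\det^{1/2}g$ at $p$, and a nonvanishing leading term $(\pi/2)^{n/2}f(p)\det^{1/2}(g_{ij}(0))$ --- is used later in the construction of $\mathcal{I}$ and $\hat{\mathcal{I}}$). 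So either state the expansion with the correct constants or remark that the normalization is immaterial for the sequel. A second small point: your localization step tacitly uses that $\phi$ is bounded below by a positive constant on $\operatorname{supp} f$ minus the Morse-chart neighborhood; this is indeed guaranteed for an admissible pair ($\phi>0$ on $U\setminus\{p\}$ and $\operatorname{supp} f$ compact in $U$), but it deserves the one-line justification rather than being asserted.
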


In particular, we obtain a linear map 
\begin{equation}\label{mathcalI}
\mathcal{I}: C_c^\infty(U) \longrightarrow \C\lau\, , \qquad f \longmapsto \text{right hand side of}~~\eqref{StationaryPhaseExpansion}\, .
\end{equation}

By the above formula, it is clear that the asymptotic expansion of $I_\phi(f, \hbar)$ only depends on the 
Taylor series of $f, \phi$ and $\det\,^{\!\!1/2} \bigl( g_{ij}(\kappa)\bigr)$ at $p\in M$. Thus the kernel of $\tau_p$ is contained in the kernel of 
$\mathcal{I}$ defined in \eqref{mathcalI}. 
Since $\tau_p$ is surjective by Corollary \ref{BorelTheorem}, there is a unique linear map 
\begin{equation}\label{hatI}
\hat{\mathcal{I}}: \C[[x]] \rightarrow \C\lau
\end{equation} 
such that the diagram
\begin{equation} \label{diagram}
\xymatrix{
   C_c^\infty(U) \ar[rr]^{\mathcal{I}} \ar[d]_{\tau_p} &  &\C\lau\\
  \C[[x]] \ar[urr]_{\hat{\mathcal{I}}}
}
\end{equation}
 commutes.
 
We use this construction for $\gamma[u, w]\in C_c^\infty (U)$ where $u, w \in \Gamma_c^\infty(U, \Eh)$. The inner product $\gamma$ on $\Eh$ has a 
Taylor series 
\[ 
\tau_p(\gamma) = \sum_{\delta\in \N_0^n} \gamma_\delta x^\delta \in (\Eh_p^* \otimes \Eh_p^*)[[x]]  \] 
which can be interpreted as a positive definite Hermitian map $\tau_p(\gamma): \Eh_p[[x]] \times \Eh_p[[x]] \longrightarrow \C[[x]]$
by setting
\begin{equation}\label{taupgamma}
  \tau_p(\gamma)\left[ \sum_\alpha u_\alpha x^\alpha, \sum_{\beta} w_\beta x^\beta \right] 
  := \sum_{\alpha,\beta,\delta\in\N_0^n} \gamma_\delta[u_\alpha, w_\beta] x^{\alpha + \beta + \delta}.
\end{equation}
By sesquilinearity, $\tau_p(\gamma)$ extends to a positive definite Hermitian map
\begin{equation}\label{taupgammadef}
\tau_p (\gamma): \Ka \times \Ka \longrightarrow \C[[x]]\lau,
\end{equation}
where $\Ka$ was defined in \eqref{DefinitionKa}. By construction, we have
\begin{equation} \label{CommutativityScalarProduct}
\tau_p(\gamma)[\tau_p(u), \tau_p(w)] = \tau_p(\gamma[u, w]).
\end{equation}

Now we define the hermitan form $(\, \cdot\,,\,\cdot\,)_{\Ka,\phi}$ first on $\Eh_p[[x]]$ by  
\begin{equation}\label{scalarprodKdef}
   \left( \boldsymbol{u}, \boldsymbol{w}  \right)_{\Ka,\phi} 
   := \hat{\mathcal{I}} \bigl( \tau_p(\gamma)\!\left[ \boldsymbol{u}, \boldsymbol{w}\right]\bigr)\, ,\qquad \boldsymbol{u}, \boldsymbol{w} \in \Eh_p[[x]]\, .
\end{equation}
By commutativity of the diagram \eqref{diagram} and \eqref{CommutativityScalarProduct}, it satisfies
\begin{equation}\label{scalarprodmitI}
   \left( \tau_p(u), \tau_p(w)  \right)_{\Ka,\phi} = \mathcal{I}(\gamma[u, w])
\end{equation}
and naturally extends to a Hermitian form
\begin{equation}\label{scalarprodKdef2}
 ( \,\cdot\,, \,\cdot\, )_{\Ka,\phi}: \Ka \times \Ka \longrightarrow \C\lau\, .
\end{equation}
Moreover, by the sesquilinearity of $\gamma$,  $\mathcal{I}(\gamma[\cdot, \cdot])$ extends to a hermitian form 
\begin{equation}\label{extendI}
 \mathcal{I}(\gamma[\cdot, \cdot])\ : \mathcal{F}\times \mathcal{F} \rightarrow \C (\!(\hbar^{1/2})\!) 
\end{equation}
on the space $\mathcal{F}:=\Gamma_c^\infty (U, \Eh) ((\hbar^{1/2}))$
of Laurent series in $\hbar^{1/2}$ with coefficients in $ \Gamma_c^\infty (U, \Eh)$.

\begin{lemma}\label{skpnondegLem}
 The Hermitian form $( \,\cdot\,, \,\cdot\, )_{\Ka,\phi}$ defined in \eqref{scalarprodKdef2} is non-degenerate.
\end{lemma}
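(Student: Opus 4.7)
The strategy is to show that for each non-zero $\boldsymbol{u} \in \Ka$ one can exhibit an explicit partner $\boldsymbol{w} \in \Ka$ with $(\boldsymbol{u},\boldsymbol{w})_{\Ka,\phi} \neq 0$, by identifying the leading order of the pairing in $\hbar^{1/2}$ with a strictly positive Gaussian integral. The underlying picture is a formal rescaling $x = \hbar^{1/2} y$: in the normal coordinates of Def.\ \ref{setup3} one has $\phi(\hbar^{1/2}y)/\hbar = \phi_2(y) + O(\hbar^{1/2})$ with $\phi_2(y) = \sum_\nu \lambda_\nu y_\nu^2$, so that, to leading order in $\hbar^{1/2}$, the weighted integral $\mathcal{I}$ reduces to Gaussian integration against $e^{-2\phi_2(y)}dy$. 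Together with the positive-definiteness of $\gamma_p = \tau_p(\gamma)|_{\delta=0}$ on $\Eh_p$, this produces a positive-definite Hermitian form on the polynomial space $\Eh_p[y]$, which will drive the argument.

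Concretely, I first verify, by substituting $x = \hbar^{1/2}y$ in \eqref{Iphi} and expanding, that for every multi-index $\mu$
\[
   \hat{\mathcal{I}}(x^\mu) \;=\; \hbar^{|\mu|/2}\int_{\R^n} y^\mu\, e^{-2\phi_2(y)}\,dy \;+\; O\bigl(\hbar^{(|\mu|+1)/2}\bigr),
\]
using that $\det g_{ij}(0)=1$ in normal coordinates. Next, given $\boldsymbol{u} = \sum_{k\geq -K,\alpha} u_{k,\alpha}\, x^\alpha \hbar^{k/2} \neq 0$, set
\[
  m \;:=\; \min\bigl\{\,k+|\alpha| \,:\, u_{k,\alpha}\neq 0\bigr\}, \qquad
  U_m(y) \;:=\; \sum_{k+|\alpha|=m} u_{k,\alpha}\, y^\alpha \;\in\; \Eh_p[y]\setminus\{0\}.
\]
This is a finite, non-zero polynomial (only finitely many $(k,\alpha)$ with $k+|\alpha|=m$ and $k\geq -K$ arise). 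Writing $U_m(y) = \sum_\alpha c_\alpha y^\alpha$, I then take as partner
\[
  \boldsymbol{w} \;:=\; \sum_\alpha c_\alpha\, x^\alpha \hbar^{-|\alpha|/2} \;\in\; \Ka,
\]
which lies in $\Ka$ because $U_m$ has bounded degree.

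Finally I compute $(\boldsymbol{u},\boldsymbol{w})_{\Ka,\phi} = \hat{\mathcal{I}}(\tau_p(\gamma)[\boldsymbol{u},\boldsymbol{w}])$ and track powers of $\hbar^{1/2}$. From \eqref{taupgamma}, a generic term contributes the factor $\hbar^{(k-|\beta|)/2}\cdot\hbar^{(|\alpha|+|\beta|+|\delta|)/2} = \hbar^{(k+|\alpha|+|\delta|)/2}$ to leading order, which by the definition of $m$ is bounded below by $\hbar^{m/2}$, with equality forcing $k+|\alpha|=m$ and $\delta=0$. Thus the coefficient of $\hbar^{m/2}$ equals
\[
   \sum_{k+|\alpha|=m}\sum_\beta \gamma_p\bigl[u_{k,\alpha}, c_\beta\bigr]\int_{\R^n} y^{\alpha+\beta}\,e^{-2\phi_2(y)}\,dy
   \;=\; \int_{\R^n} \gamma_p\bigl[U_m(y),U_m(y)\bigr]\,e^{-2\phi_2(y)}\,dy \;>\;0,
\]
where positivity follows from $\gamma_p$ being positive-definite on $\Eh_p$ and $U_m \neq 0$. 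Hence $(\boldsymbol{u},\boldsymbol{w})_{\Ka,\phi}\neq 0$, proving non-degeneracy.

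The only point requiring care is the $\hbar$-accounting: one must rule out unexpected contributions to the $\hbar^{m/2}$-coefficient from higher stationary-phase corrections paired with sub-leading parts of $\boldsymbol{u}$. This is immediate from the bound above, since every such combination gives at least $\hbar^{m/2}$ and higher corrections strictly increase this power. The genuinely conceptual step is the rescaling that identifies the correct $\boldsymbol{w}$; the rest is bookkeeping.
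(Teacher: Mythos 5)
Your argument is correct, but it takes a genuinely different route from the paper's proof. The paper argues softly: using surjectivity of $\tau_p$ (Corollary \ref{BorelTheorem}) it writes $\boldsymbol{u}=\tau_p(u)$, $\boldsymbol{w}=\tau_p(w)$ for compactly supported sections and reduces everything to the observation that vanishing of $\mathcal{I}(\gamma[u,w])$ forces $\gamma[u,w]$, and then $u$ or $w$, to vanish to infinite order at $p$. You instead rescale, extract the leading polynomial $U_m$ of $R\boldsymbol{u}$, pair against an explicitly constructed partner, and identify the $\hbar^{m/2}$-coefficient of the pairing with the Gaussian integral $\int \gamma_p[U_m,U_m]\,e^{-\langle y,\Lambda y\rangle}dy>0$, where $\gamma_p$ is the fibre inner product on $\Eh_p$; in effect you re-derive the leading term ($\omega_0=1$, $r=k=\ell=0$) of formula \eqref{skpk} from Prop.\ \ref{rescaleProp2} and then invoke positive definiteness of $\gamma_p$. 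What this buys: it is robust in the genuinely vector-valued situation --- for $\mathrm{rk}\,\Eh\geq 2$ there exist nonzero sections $u,w$ with $\gamma[u,w]\equiv 0$, so the inference ``$\gamma[u,w]$ flat at $p$ implies $u$ or $w$ flat at $p$'' only works after testing against a well-chosen partner, which is precisely what your construction supplies --- and it yields the stronger quantitative statement that the leading coefficient of $(\boldsymbol{u},\boldsymbol{u})_{\Ka,\phi}$ is strictly positive. Two refinements: you could simply take $\boldsymbol{w}=\boldsymbol{u}$ (the coefficient of $\hbar^{m}$ in $(\boldsymbol{u},\boldsymbol{u})_{\Ka,\phi}$ is the same positive Gaussian integral), making the partner construction unnecessary; and your expansion of $\hat{\mathcal{I}}(x^\mu)$ must be carried out for a compactly supported representative $f$ with $\tau_p(f)=x^\mu$ and requires the (standard) justification of term-by-term integration after the substitution $x=\hbar^{1/2}y$ --- the same ``straightforward though tedious'' step the paper carries out in the proof of Prop.\ \ref{rescaleProp2}. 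The factor of $2$ in your Gaussian weight versus $e^{-\langle y,\Lambda y\rangle}$ is only a normalization discrepancy and does not affect positivity.
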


\begin{proof}
We have to show that $(\boldsymbol{u}, \boldsymbol{w})_{\Ka,\phi} = 0$ implies $\boldsymbol{u}=0$ or $\boldsymbol{v} = 0$
for all $\boldsymbol{u}, \boldsymbol{w} \in \Ka$. It suffices to show that $( \,\cdot\,, \,\cdot\, )_{\Ka,\phi}$ is non-degenerate on $\Eh_p[[x]]$, so 
we may assume $\boldsymbol{u}, \boldsymbol{w} \in \Eh_p[[x]]$. 

Since the map $\tau_p: \Gamma_c^\infty(U, \Eh) \longrightarrow \Eh_p[[x]]$ is surjective, we may assume that $\boldsymbol{u} = \tau_p(u)$ and 
$\boldsymbol{w} = \tau_p(w)$ for some $u, v \in \Gamma_c^\infty(U, \Eh)$. From the definition it is clear that 
$(\boldsymbol{u}, \boldsymbol{w})_{\Ka,\phi} = 0$ if and only if the function $\gamma[u, w]$ vanishes to infinite order at the point $p$. In this case, however, 
we necessarily have that either $u$ or $w$ vanish to infinite order at $p$. Therefore $\tau_p(u) = \boldsymbol{u}= 0$ or $\tau_p(w) = \boldsymbol{w} = 0$.
\end{proof}

The reason to define $( \,\cdot\,, \,\cdot\, )_{\Ka,\phi}$ as we did is that it has the following property.

\begin{proposition}[Symmetry] \label{SymmetrieOperator}
  Let $P$ be a differential operator, acting on sections of $\Eh$, which is symmetric on $\Gamma_c^\infty(U, \Eh)$ with respect to the inner product $\bigl(\,.\,,\,.\,\bigr)_{\gamma}$ defined in 
  \eqref{standard_skalar}. For our fixed solution $\phi$ of \eqref{EikonalEquation}, we define the conjugated operator $P_\phi$ on $\Gamma^\infty (U, \Eh)$ 
with respect to $e^{-\phi/\hbar}$ by 
\begin{equation} \label{ConjugationByPhi}
  P_\phi u := e^{\phi/\hbar} P \bigl[e^{-\phi/\hbar}u\bigr]\, , \qquad u\in \Gamma^\infty (U, \Eh)\; .
\end{equation}  
 Then $P_\phi$ is symmetric on $\Gamma_c^\infty(U, \Eh)$ with respect to $(\,.\,,\,.\,)_{\gamma, \phi}$ (defined above) and its 
Taylor series $\tau_p(P_\phi)$ (see \eqref{tau_p_von_P}) 
acts on $\Ka$ and is symmetric with respect to the sesquilinear form $( \,\cdot\,, \,\cdot\, )_{\Ka, \phi}$ defined in \eqref{scalarprodKdef2}.
\end{proposition}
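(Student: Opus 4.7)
The proposition has two parts: first that $P_\phi$ is symmetric for the weighted pairing $(\cdot,\cdot)_{\gamma,\phi}$, and second that its Taylor series $\tau_p(P_\phi)$ acts on $\Ka$ and is symmetric for the formal pairing $(\cdot,\cdot)_{\Ka,\phi}$. My plan is to prove the first part by a direct calculation and then transport it to the formal side via the commutative diagram \eqref{diagram}.

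For part one, I would simply unfold the definitions. Starting from
\[
(P_\phi u, w)_{\gamma,\phi} = \int_U \gamma\bigl[e^{\phi/\hbar} P(e^{-\phi/\hbar} u), w\bigr]\, e^{-2\phi/\hbar},
\]
and using that $\phi$ is real-valued (so $e^{\pm\phi/\hbar}$ can be pulled through either slot of the sesquilinear form $\gamma$ without conjugation), the integrand collapses to $\gamma[P(e^{-\phi/\hbar} u), e^{-\phi/\hbar} w]$. Since $e^{-\phi/\hbar} u$ and $e^{-\phi/\hbar} w$ still lie in $\Gamma_c^\infty(U,\Eh)$ after extension by zero, the symmetry of $P$ with respect to $(\cdot,\cdot)_\gamma$ applies, and running the same manipulation in reverse produces $(u, P_\phi w)_{\gamma,\phi}$.

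For part two, three preparatory observations are needed. The Leibniz identity $e^{\phi/\hbar}\partial_i e^{-\phi/\hbar} = \partial_i - \hbar^{-1}\partial_i\phi$ shows that $P_\phi$ is a differential operator of the same order $k$ as $P$ whose coefficients are polynomials in $\hbar^{-1}$ of degree at most $k$ with smooth, $\hbar$-independent functions on $U$ as coefficients; in particular $\tau_p(P_\phi) \in \mathscr{D}(\Eh_p)[[x]][\hbar^{-1}]$ is well-defined and acts on $\Ka$ term by term. Moreover, the multiplication property \eqref{MultiplicationProperty} and the identity \eqref{scalarprodmitI} both extend by $\C\lau$-linearity to the $\hbar$-dependent setting, giving $\tau_p(P_\phi)\tau_p(u) = \tau_p(P_\phi u)$ for $u \in \Gamma_c^\infty(U,\Eh)$ and $(\tau_p(f), \tau_p(g))_{\Ka,\phi} = \mathcal{I}(\gamma[f,g])$ for $f, g \in \Eff$. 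Combining these gives, for $u, w \in \Gamma_c^\infty(U,\Eh)$,
\[
(\tau_p(P_\phi)\tau_p(u), \tau_p(w))_{\Ka,\phi} = \mathcal{I}(\gamma[P_\phi u, w]).
\]
Because $I_\phi(\gamma[P_\phi u, w], \hbar) = \hbar^{-n/2}(P_\phi u, w)_{\gamma,\phi}$ equals $\hbar^{-n/2}(u, P_\phi w)_{\gamma,\phi}$ by part one, the corresponding asymptotic expansions coincide term by term; hence $\mathcal{I}(\gamma[P_\phi u, w]) = \mathcal{I}(\gamma[u, P_\phi w]) = (\tau_p(u), \tau_p(P_\phi)\tau_p(w))_{\Ka,\phi}$. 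Surjectivity of $\tau_p$ (Corollary \ref{BorelTheorem}) lets me realize arbitrary elements of $\Eh_p[[x]]$ as such Taylor series, and sesquilinear extension in $\hbar^{1/2}$ then promotes symmetry from $\Eh_p[[x]]$ to all of $\Ka$.

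The main obstacle I anticipate is purely algebraic bookkeeping in the preparatory step: one needs to verify carefully that the asymptotic expansion map $\mathcal{I}$ and the Taylor map $\tau_p$ remain compatible once the $\hbar^{-j}$ factors produced by the conjugation are allowed, i.e.\ that one may legitimately apply $\mathcal{I}$ summand-by-summand in the finite $\hbar^{-1}$-expansion of $P_\phi$ and that the multiplication identity $\tau_p(Pu)=\tau_p(P)\tau_p(u)$ survives this extension. Once this linearity is pinned down, the actual symmetry statement is a one-line consequence of part one.
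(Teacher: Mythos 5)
Your proposal is correct and follows essentially the same route as the paper's own proof: reduce via sesquilinearity and the surjectivity of $\tau_p$ (Corollary \ref{BorelTheorem}) to Taylor series of compactly supported sections, use \eqref{MultiplicationProperty} and \eqref{scalarprodmitI} to rewrite the formal pairing as $\mathcal{I}(\gamma[P_\phi u,w])$, and then transfer the symmetry of $P$ on $(\,.\,,\,.\,)_\gamma$ through $I_\phi$ exactly as in the displayed computation of the paper. Your extra bookkeeping about the finite $\hbar^{-1}$-expansion of the coefficients of $P_\phi$ is a point the paper glosses over, but it does not change the argument.
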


\begin{proof}
It is clear that $\tau_p(P_\phi)\in \mathscr{D}(\Eh_p)[[x]]$ extends to a differential operator on $\Ka$.
By sesquilinearity, it suffices to prove the statement on the symmetry for elements of $\Ka$ which are constant in $\hbar^{1/2}$. 
Since by Corollary \ref{BorelTheorem} all elements of $\Eh_p[[x]]$ can be written as Taylor series $\tau_p(u)$ for some $u \in \Gamma_c^\infty(U, \Eh)$, 
we need to prove the statement only for elements of this type. 
Now by \eqref{MultiplicationProperty} and \eqref{scalarprodmitI}
\begin{align}
  \bigl( \tau_p(P)\tau_p(u), \tau_p(w) \bigr)_{\Ka,\phi} &=\bigl( \tau_p(Pu), \tau_p(w) \bigr)_{\Ka,\phi}\nonumber\\ 
  &= \mathcal{I}\bigl(\gamma[Pu, w]\bigr) \label{lemma4.5.1} \\
  &= \text{asymptotic expansion of}~ I_\phi\bigl(\gamma[Pu, w], \hbar\bigr)\nonumber\; .
\end{align}
By \eqref{Iphi}, \eqref{Scalar_phi}, \eqref{standard_skalar} and the symmetry of $P$ with respect to $\bigl(\,.\,,\,.\,\bigr)_{\gamma}$ we get for 
$u,w\in\Gamma_c^\infty(U, \Eh)$
\begin{align*}
I_\phi\bigl(\gamma[Pu, w], \hbar\bigr) &= \hbar^{-n/2} \bigl(P_\phi u, w\bigr)_{\gamma,\phi} = 
\hbar^{-n/2} \bigl(P e^{-\phi/\hbar}u, e^{-\phi/\hbar}w\bigr)_{\gamma} \\
&= \hbar^{-n/2} \bigl(e^{-\phi/\hbar}u, P e^{-\phi/\hbar}w\bigr)_{\gamma} = \hbar^{-n/2} 
\bigl(u, P_\phi w\bigr)_{\gamma,\phi} = I_\phi\bigl(\gamma[u, Pw], \hbar\bigr)
\end{align*}
and using \eqref{lemma4.5.1} again gives the stated result.
\end{proof}

\section{Rescaling} \label{Section4}

In order to analyze degenerate eigenvalues, we now consider the rescaled variable $y=\hbar^{-1/2}x$ instead of $x$, where $x$ are the coordinates from 
Def.\ \eqref{setup3}. This is motivated by 
the basic scaling property of $H_{p,\hbar}$, giving the scaling in $\spec (H_{p,\hbar})$ and of the associated eigenfunctions. 

The important observation here is the following: Given a formal power series in $\hbar$ with coefficients in $\Gamma^\infty(U, \Eh)$ as in Thm.\ 
\ref{Theorem1}, its Taylor series will be an element of $\Ka$, i.e., a formal power series in $\hbar^{1/2}$ whose coefficients are formal power series in the variable
 $x$. After rescaling, however, it becomes an element of a space $\Ka_0$ (defined in \eqref{DefinitionKa0} below), which is the space of formal power series 
in $\hbar$ whose coefficients are {\em polynomials} in the rescaled variable $y$.

In this section, we explain the details. Again, throughout this section, we work in Setup \ref{setup1} and fix an admissible pair $(U, \phi)$.

\begin{definition}[Rescaling Operator]\label{rescaleDef}
We define the {\em rescaling operator} $R: \Ka \longrightarrow \Eh_p[y] \lau$ by 
\begin{equation*}
  \boldsymbol{u} = \sum_{\natop{k \in \Z/2}{k\geq -M}} \hbar^k \sum_{\alpha \in \N_0^n} u_{\alpha, k} x^\alpha\quad \longmapsto \quad
  R\boldsymbol{u} = \sum_{\natop{k \in \Z/2}{k\geq -M}}\sum_{\alpha\in \N_0^n}\hbar^{k + |\alpha|/2} u_{\alpha,k} y^\alpha  \; .
\end{equation*}
\end{definition}

\begin{proposition}[Image of $R$]\label{rescaleProp}
The rescaling operator $R$ given in Definition \ref{rescaleDef}
is well-defined, injective and its image is the space
\begin{equation} \label{DefinitionKa0}
\Ka_0 := \Bigl\{ \hbar^{-K}\!\!\!\sum_{j \in \N_0/2} P_j(y) \, \hbar^j \in \Eh_p[y] \lau \mid K \in \N_0/2 ,~ \mathrm{deg} \,P_j(y) \leq 2j \Bigr\}
\end{equation}
\end{proposition}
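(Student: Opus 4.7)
The key observation is that $R$ is the algebraic implementation of the substitution $x = \hbar^{1/2} y$: it sends the monomial $\hbar^k x^\alpha$ to $\hbar^{k+|\alpha|/2} y^\alpha$. All three claims (well-definedness, injectivity, and identification of the image with $\Ka_0$) will follow from careful bookkeeping of this substitution on the coefficients of $\boldsymbol{u} \in \Ka$.

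For well-definedness, the plan is to fix a target power $\hbar^m$ in $R\boldsymbol{u}$ and identify the terms contributing to it, namely those indexed by pairs $(k,\alpha) \in \Z/2 \times \N_0^n$ with $k + |\alpha|/2 = m$. Since $k \geq -M$ (because $\boldsymbol{u} \in \Ka$) and $|\alpha| \geq 0$, this constraint together with $m$ fixed forces $|\alpha| \leq 2(m+M)$ and $k \in \{-M, -M+1/2, \ldots, m\}$, so only finitely many pairs contribute. Hence the coefficient of $\hbar^m$ in $R\boldsymbol{u}$ is a polynomial in $y$, and $R\boldsymbol{u}$ is a genuine element of $\Eh_p[y]\lau$ with lowest power $\hbar^{-M}$. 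The same counting with $K := M$ and $j := m + M$ yields $\deg P_j \leq 2j$, placing $R\boldsymbol{u}$ in $\Ka_0$.

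Injectivity is then immediate, since the index map $(k, \alpha) \mapsto (k + |\alpha|/2, \alpha)$ is a bijection onto its image, so each coefficient $u_{\alpha,k}$ can be uniquely recovered from $R\boldsymbol{u}$ as the coefficient of $y^\alpha$ in the polynomial sitting in front of $\hbar^{k+|\alpha|/2}$. For surjectivity onto $\Ka_0$, I would take a general element $\boldsymbol{v} = \hbar^{-K} \sum_{j \in \N_0/2} P_j(y)\hbar^j \in \Ka_0$, expand $P_j(y) = \sum_{|\alpha| \leq 2j} c_{\alpha,j} y^\alpha$, and define $\boldsymbol{u}$ via $u_{\alpha,k} := c_{\alpha, K+k+|\alpha|/2}$. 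Applying $R$ recovers $\boldsymbol{v}$ by inspection.

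The only place where one has to argue and not merely compute is at the surjectivity step: one must check that the constructed $\boldsymbol{u}$ actually lies in $\Ka$, that is, that its $\hbar$-powers are bounded below. This amounts to verifying that the degree condition $\deg P_j \leq 2j$ translates precisely into $k \geq -K$ for every contributing $(k,\alpha)$, which it does since $|\alpha| \leq 2j$ with $j = K + k + |\alpha|/2$ is equivalent to $k \geq -K$. This is the main (and essentially only) conceptual point of the proof, and it simultaneously explains why the apparently ad hoc degree constraint in the definition of $\Ka_0$ is the correct one: it is exactly tailored to make $\Ka_0$ the image of $R$.
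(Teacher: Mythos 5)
Your proposal is correct and follows essentially the same route as the paper: compute $R\boldsymbol{u}$, regroup by powers of $\hbar$ to see that each coefficient is a polynomial satisfying the degree bound $\deg P_j \leq 2j$, read off injectivity from the coefficients, and exhibit the explicit preimage $u_{\alpha,k} := c_{\alpha, K+k+|\alpha|/2}$ for surjectivity. Your explicit verification that the constructed preimage lies in $\Ka$ (i.e.\ that the degree condition forces $k \geq -K$) is the same reindexing the paper leaves implicit when writing the preimage in the form \eqref{formofu}.
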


\begin{proof}
An element $\boldsymbol{u} \in \Ka$ can be written as
\begin{equation} \label{formofu}
  \boldsymbol{u} = \hbar^{-K}\!\!\!\sum_{k \in \N_0/2} \hbar^k \sum_{\alpha \in \N_0^n} u_{\alpha, k} x^\alpha
\end{equation}
for some $K\in\N_0$. Hence
\begin{equation}\label{Rhatu}
 R \boldsymbol{u} 
 = \hbar^{-K}\!\!\!\sum_{k \in \N_0/2} \hbar^k \sum_{\alpha \in \N_0^n} u_{\alpha, k} \,y^\alpha \,\hbar^{|\alpha|/2}
 = \hbar^{-K}\!\!\!\sum_{k \in \N_0/2} \hbar^k \sum_{j=0}^{2k} \sum_{|\alpha|=j} u_{\alpha, k-j/2} \,y^\alpha,
\end{equation}
so $R \boldsymbol{u}\in\Eh_p[y]\lau$ and if $R\boldsymbol{u} = 0$, then each $u_{\alpha, k}$ has to be zero, i.e.\ $\boldsymbol{u} = 0$. 
This shows that $R$ is well-defined and injective. Furthermore, it is clear 
from \eqref{Rhatu} that $R\Ka\subset \Ka_0$. On the other hand, given an element
\begin{equation*}
  \boldsymbol{w} = \hbar^{-K}\!\!\!\sum_{j \in \N_0/2} \hbar^k \sum_{|\alpha|\leq 2k} w_{\alpha, k} \,x^{\alpha}\in \Ka_0\, ,
\end{equation*}
its preimage $\boldsymbol{u}\in \Ka$ under $R$ in the form \eqref{formofu} has the coefficients $u_{\alpha, k} := w_{\alpha, k + |\alpha|/2}$, hence the image of $R$ is all of $\Ka_0$.
\end{proof}

\begin{definition}\label{DefProdK_0}
By Prop.\ \ref{rescaleProp}, we can define the Hermitian form 
\begin{equation}\label{binlinK0}
 ( \,\cdot\,, \,\cdot\,)_{\Ka_0,\phi}:= (R^{-1})^*\bigl[( \,\cdot\,, \,\cdot\,)_{\Ka,\phi}\bigr] : \Ka_0\times \Ka_0\rightarrow \C\lau
\end{equation}
as the pullback of the inner product $( \,\cdot\,, \,\cdot\,)_{\Ka,\phi}$ on $\Ka$ defined in \eqref{scalarprodKdef} under the inverse
rescaling operator $R^{-1}: \Ka_0 \rightarrow \Ka$.
\end{definition}

Analogously to Def.\ \ref{rescaleDef}, we define (with $x={\hbar}^{1/2}y$) rescaling operators from $(\Eh^*_p\otimes \Eh^*_p)[[x]]\lau$ to 
$(\Eh^*_p\otimes \Eh^*_p)[y]\lau$ and from $\C[[x]]\lau$ to $\C[y]\lau$ which we also denote by $R$.
Then, for $\tau_p(\gamma)\in (\Eh^*_p\otimes \Eh^*_p)[[x]]$ given in \eqref{taupgamma}, \eqref{taupgammadef}, we get
\begin{equation}\label{gammak}
R\circ \tau_p(\gamma) = \sum_{\delta\in \N_0^n} \hbar^{|\delta|/2} \gamma_\delta y^\delta =: 
\sum_{k\in\N_0/2} \hbar^k \gamma_k \quad\text{with}\quad 
\gamma_k := \sum_{|\delta|= 2k} \gamma_\delta y^\delta\in (\Eh^*_p\otimes \Eh^*_p)[y] \, .
\end{equation}
Remember that $\tau_p$ always refers to the coordinates $x$ as defined in Def.\ \ref{setup3}.
We also introduce the positive definite Hermitian form $\boldsymbol{\gamma}: \Ka_0\times \Ka_0 \rightarrow \C[y]\lau$, by setting
for $\boldsymbol{u}, \boldsymbol{v}\in\Ka_0$
\begin{align}\label{tildegammaDef1}
\boldsymbol{\gamma}[\boldsymbol{u}, \boldsymbol{v}] &:= \hbar^{-K_1-K_2}\sum_{j\in \N_0/2} \hbar^j \sum_{k+\ell+ r = j} 
\gamma_r[u_k, v_\ell] 
\end{align}
where
\begin{align}
\gamma_r[u_k, v_\ell] &:= \sum_{\natop{\alpha, \beta, \delta \in\N_0^n}{|\alpha|\leq 2k, |\beta|\leq 2\ell, |\delta| = 2r}} 
\gamma_\delta [u_{k,\alpha}, v_{\ell, \beta}] y^{\alpha + \beta + \delta}\in \C[y]\quad \text{for } \gamma_\delta \text{ as in }
\eqref{gammak}\, ,\label{tildegammaDef2}\\
\boldsymbol{u}&=\hbar^{-K_1}\sum_{k\in\N_0/2}\hbar^k u_k\quad \text{with}\quad u_k = \sum_{|\alpha|\leq 2k} u_{k, \alpha} y^\alpha \quad
\text{and}\label{tildegammaDef3}\\
\boldsymbol{v}&=\hbar^{-K_2}\sum_{\ell\in\N_0/2}\hbar^\ell v_\ell\quad \text{with}\quad v_\ell = \sum_{|\beta|\leq 2\ell} v_{\ell, \beta} y^\beta\; .
\label{tildegammaDef4}
\end{align}
Then for all $\boldsymbol{u}, \boldsymbol{v}\in\Ka$, we have
\begin{equation}\label{Rundgammatilde}
\boldsymbol{\gamma}[R\boldsymbol{u}, R\boldsymbol{v}] = R\bigl(\tau_p(\gamma)[\boldsymbol{u}, \boldsymbol{v}]\bigr)\quad\text{and}\quad \mathrm{deg}\,\gamma_r[u_k, v_\ell]\leq 2(r+k+\ell).
\end{equation}
We say that $u\in \Eh_p[y]$ (or $\C[y]$) has parity $\pm 1$, if and only if $u(y) = \pm u(-y)$, i.e.\ $u$ only contains monomials of even (for $+$) or of odd degree (for $-$).
From \eqref{tildegammaDef2} it follows that if $u_k$ has the parity $(-1)^{2k}$ and $v_\ell$ has the parity $(-1)^{2\ell}$, then 
$\gamma_r[u_k, v_\ell]$ has the parity
$(-1)^{2k + 2\ell + 2r}$.

\begin{proposition} \label{rescaleProp2}
The Hermitian form $( \,\cdot\,, \,\cdot\,)_{\Ka_0,\phi}$ defined in \eqref{binlinK0} is 
non-degenerate. An 
operator $P$ on $\Ka$ is 
symmetric with respect to $( \,\cdot\,, \,\cdot\,)_{\Ka, \phi}$ if and only if the associated operator 
$R\circ P\circ R^{-1}$ on $\Ka_0$ is
symmetric with respect to $( \,\cdot\,, \,\cdot\,)_{\Ka_0, \phi}$.\\
Moreover, there exist polynomials 
$\omega_k\in\Eh_p [y],\, k\in \N_0/2$
of order $2k$ and parity $(-1)^{2k}$ such that 
for $\boldsymbol{u}, \boldsymbol{v}\in \Ka_0$ and $\gamma_r[u_k, v_\ell]\in \C[y]$ as given in \eqref{tildegammaDef2},
\eqref{tildegammaDef3} and \eqref{tildegammaDef4}
\begin{equation}\label{skpk}
\bigl(\boldsymbol{u}, \boldsymbol{v} \bigr)_{\Ka_0, \phi} = 
\hbar^{-K_1-K_2}\sum_{k \in \N_0/2} \hbar^k \sum_{\natop{j,\ell, r, m\in\N_0/2}{j + \ell + r + m = k}} 
\int_{\R^n} \gamma_r [u_j, v_\ell] (y)  \omega_m(y)
e^{-\langle y, \Lambda y\rangle}  \, dy
\end{equation}
where $\Lambda = \diag (\lambda_1, \ldots, \lambda_n)\in \mathrm{Mat}(n\times n, \R)$ for $\lambda_\nu>0$ as given
in \eqref{taupphi2}.
\end{proposition}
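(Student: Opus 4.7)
The first two claims are formal consequences of the fact that $R \colon \Ka \to \Ka_0$ is a linear bijection (Prop.\ \ref{rescaleProp}) and that $(\,\cdot\,, \,\cdot\,)_{\Ka_0, \phi}$ is defined as the pullback of $(\,\cdot\,, \,\cdot\,)_{\Ka, \phi}$ under $R^{-1}$. For non-degeneracy: if $(\boldsymbol{u}, \boldsymbol{v})_{\Ka_0, \phi} = 0$ for all $\boldsymbol{v} \in \Ka_0$, then $(R^{-1}\boldsymbol{u}, R^{-1}\boldsymbol{v})_{\Ka, \phi} = 0$ for all $\boldsymbol{v} \in \Ka_0$, hence by surjectivity of $R^{-1}$ for all elements of $\Ka$, and Lemma \ref{skpnondegLem} gives $R^{-1}\boldsymbol{u} = 0$, i.e.\ $\boldsymbol{u} = 0$. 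For the symmetry equivalence, set $\tilde P := R \circ P \circ R^{-1}$; the pullback definition yields $(\tilde P \boldsymbol{u}, \boldsymbol{v})_{\Ka_0, \phi} = (P R^{-1}\boldsymbol{u}, R^{-1}\boldsymbol{v})_{\Ka, \phi}$ and $(\boldsymbol{u}, \tilde P \boldsymbol{v})_{\Ka_0, \phi} = (R^{-1}\boldsymbol{u}, P R^{-1}\boldsymbol{v})_{\Ka, \phi}$, so by bijectivity of $R$ the symmetry of $\tilde P$ on $\Ka_0$ translates into the symmetry of $P$ on $\Ka$.

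For the explicit formula \eqref{skpk} the strategy is to reduce to an honest integral, perform the change of variable $x = \hbar^{1/2}y$, and read off the polynomials $\omega_m$. Using Corollary \ref{BorelTheorem} and bijectivity of $R$, I may write $\boldsymbol{u} = R\,\tau_p(u)$ and $\boldsymbol{v} = R\,\tau_p(v)$ for some $u, v \in \Gamma_c^\infty(U, \Eh)\lau$. By the pullback definition and \eqref{scalarprodmitI},
\[
(\boldsymbol{u}, \boldsymbol{v})_{\Ka_0, \phi} = \mathcal{I}\bigl(\gamma[u, v]\bigr),
\]
which is the asymptotic expansion of $\hbar^{-n/2}\int \gamma[u,v](x)\, e^{-2\phi(x)/\hbar}\, \det\nolimits^{1/2}\!\bigl(g_{ij}(x)\bigr)\, dx$ in the normal coordinates of Def.\ \ref{setup3}. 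Writing $\phi(x) = \langle x, \Lambda x\rangle + \psi(x)$ with $\psi(x) = O(|x|^3)$ and substituting $x = \hbar^{1/2} y$, the exponent becomes $-2\langle y, \Lambda y\rangle - 2\hbar^{-1}\psi(\hbar^{1/2}y)$; since $x^\alpha \mapsto \hbar^{|\alpha|/2}y^\alpha$ and $\psi$ contains only monomials of degree $\geq 3$, the term $\hbar^{-1}\psi(\hbar^{1/2}y)$ is a formal series in $\hbar^{1/2}$ (starting at order $\hbar^{1/2}$) whose coefficients are polynomials in $y$.

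I then define the polynomials $\omega_m(y)$ as the coefficients in the formal expansion
\[
\sum\nolimits_{m\in \N_0/2} \hbar^m \omega_m(y) := e^{-2\hbar^{-1}\psi(\hbar^{1/2}y)}\,\det\nolimits^{1/2}\!\bigl(g_{ij}(\hbar^{1/2}y)\bigr),
\]
absorbing a further inessential rescaling of $y$ that converts the Gaussian weight $e^{-2\langle y, \Lambda y\rangle}$ into the $e^{-\langle y, \Lambda y\rangle}$ of \eqref{skpk}. Combining with the identification $\gamma[u,v](\hbar^{1/2}y) = \boldsymbol{\gamma}[\boldsymbol{u}, \boldsymbol{v}](y)$ coming from \eqref{Rundgammatilde} and integrating term-by-term against the Gaussian yields the formula \eqref{skpk}. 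The parity $\omega_m(-y) = (-1)^{2m}\omega_m(y)$ follows because each monomial $x^\alpha$ of $\psi$ contributes, at $\hbar$-order $|\alpha|/2 - 1$, a polynomial of degree $|\alpha|$, hence of parity $(-1)^{|\alpha|} = (-1)^{2(|\alpha|/2-1)}$; an identical count applies to $\det\nolimits^{1/2}(g_{ij})$, and the parity-by-$\hbar$-order rule is preserved under sums, products and exponentiation.

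The delicate point is the degree bound $\deg \omega_m \leq 2m$: the naive exponential expansion above produces polynomials of degree larger than $2m$ at order $\hbar^m$, so one has to rearrange using the Gaussian moment structure, i.e.\ Wick contractions with $\Lambda^{-1}$ (equivalently, integration by parts against the Gaussian weight), to trade higher polynomial degree against higher $\hbar$-order until the effective polynomial $\omega_m$ has degree at most $2m$. Equivalently, one invokes the standard fact from stationary phase that the $\hbar^k$-coefficient of the expansion depends on the integrand only through derivatives of total order $\leq 2k$ at $p$. I expect this degree-reduction step to be the main technical obstacle.
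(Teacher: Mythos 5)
Your argument for the first two assertions and for the identity \eqref{skpk} is correct and is essentially the paper's own proof: non-degeneracy and the equivalence of symmetry are read off from the pullback definition \eqref{binlinK0} together with Prop.\ \ref{rescaleProp} and Lemma \ref{skpnondegLem}, and \eqref{skpk} is obtained by writing $\boldsymbol{u}=R\,\tau_p(u)$, $\boldsymbol{v}=R\,\tau_p(v)$, invoking \eqref{scalarprodmitI}, substituting $x=\hbar^{1/2}y$, expanding the three factors of the integrand and integrating term by term; your definition of the $\omega_m$ is exactly \eqref{eminus2varphi}, and your parity count is the paper's (Cauchy product) argument. One small point: the passage from $e^{-2\langle y,\Lambda y\rangle}$ to $e^{-\langle y,\Lambda y\rangle}$ is not an ``inessential rescaling of $y$'' (a rescaling would spoil the identification of the expansion of $\gamma[u,v](\hbar^{1/2}y)$ with $\boldsymbol{\gamma}[\boldsymbol{u},\boldsymbol{v}]$ in \eqref{Rundgammatilde} and the normal form of $R$); rather, in the coordinates of Def.\ \ref{setup3} the quadratic part of $\phi$ is $\tfrac12\langle x,\Lambda x\rangle$, as used in \eqref{taupphi} and \eqref{sigmapphi}, so the Gaussian weight comes out as $e^{-\langle y,\Lambda y\rangle}$ with no further manipulation (the normalization in \eqref{taupphi2} is off by this factor of two).

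Concerning the point you flag as the main obstacle: there is no degree-reduction step, neither in the paper's proof nor available in principle. The family $(\omega_m)$ is uniquely determined by the validity of \eqref{skpk} for all $\boldsymbol{u},\boldsymbol{v}\in\Ka_0$: take $\boldsymbol{v}$ with $v_0=e$ a unit vector and all other $v_\ell=0$, and $\boldsymbol{u}$ with a single nonzero coefficient $u_j=p\,e$ with $\deg p\le 2j$; comparing the coefficient of $\hbar^{j+s}$ for two candidate families and letting $j\to\infty$ shows that $\sigma_s:=\sum_{r+m=s}\bigl(\sum_{|\delta|=2r}\gamma_\delta[e,e]\,y^\delta\bigr)(\omega_m-\tilde\omega_m)$ integrates to zero against every polynomial times the Gaussian, hence vanishes, and induction on $s$ (using $\gamma_0[e,e]=1$) gives $\omega_s=\tilde\omega_s$. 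Thus the $\omega_m$ are exactly the coefficients of the expansion \eqref{eminus2varphi}, and since $\phi_k$ in \eqref{sigmapphi} is homogeneous of degree $k+2$, they generically have degree larger than $2m$ (for instance $\omega_{1/2}=-2\phi_1$ is cubic whenever the cubic Taylor coefficient of $\phi$ is nonzero; in general the degree can reach $6m$). So no Wick-contraction rearrangement can enforce $\deg\omega_m\le 2m$, and the stationary-phase fact you cite controls the $\hbar^k$-coefficient of the full integral, not the degree of the individual weight polynomials in this factored form. The clause ``of order $2k$'' in the statement (like ``$\omega_k\in\Eh_p[y]$'', which should be $\R[y]$) is a slip: what the proof establishes, and all that is used later (only the parity of $\omega_s$ enters the proof of Prop.\ \ref{ThmHalfIntegers}), is $\omega_0=1$ and the parity $(-1)^{2k}$ --- both of which your argument already delivers.
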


\begin{proof}
By Prop.\ \ref{rescaleProp} and Lemma \ref{skpnondegLem}, the Hermitian form $\bigl( \,\cdot\,, \,\cdot\,\bigr)_{\Ka_0, \phi}$ is well-defined 
and non-degenerate. By its definition in \eqref{binlinK0}, we have 
$\bigl(R\circ P\circ R^{-1} \boldsymbol{u}, \boldsymbol{v} \bigr)_{\Ka_0, \phi} = \bigl(P u, v \bigr)_{\Ka, \phi}$ for 
$\boldsymbol{u} = R \boldsymbol{u}, \boldsymbol{v}= R\boldsymbol{v} \in \Ka_0$, proving the stated equivalence of symmetry. 
Using \eqref{binlinK0}, \eqref{scalarprodmitI} and \eqref{mathcalI}
we get for $u, v\in \Gamma_c^\infty (U, \Eh)$
\begin{equation}\label{rescalePropBew1} 
\bigl( R\circ \tau_p (u), R\circ \tau_p (v))_{\Ka_0,\phi} = \mathcal{I}(\gamma[u, v]) = \;\text{asymptotic expansion of }\, I_\phi(\gamma[u,v], \hbar)\, .
\end{equation}
Now $\boldsymbol{u} = R\circ \tau_p(u)$ and $\boldsymbol{v} = R\circ \tau_p(v)$ for some $u, v\in \Gamma_c^\infty (U, \Eh)$ by 
Prop.\ \ref{rescaleProp} and Corollary \ref{BorelTheorem}. 
Identifying $\phi$ and $\gamma[u,v]$ with their pullback under the inverse chart $x^{-1}$, we have
\[ I_\phi (\gamma[u,v], \hbar) = \hbar^{-n/2} \int_{\R^n} \gamma[u,v](x) e^{-\phi(x)/\hbar} G(x)\, dx\; , \]
where  $G(x):= (\det g_{ij}(x))^{1/2}$, using the compact support of $\gamma[u,v]$ in $U$.\\
Changing variables by $x=\hbar^{1/2}y$ gives
\begin{equation}\label{Iphimity}
 I_\phi (\gamma[u,v], \hbar) = \int_{\R^n} \gamma[u,v](\hbar^{1/2} y) e^{-2\phi(\hbar^{1/2}y) /\hbar} G(\hbar^{1/2}y)\, dy\, .
\end{equation}
It is now straightforward (though tedious) to check that the asymptotic expansion on the right hand side of \eqref{Iphimity} is given by the asymptotic expansion of
all three factors in the integrand of \eqref{Iphimity} and integrating term by term (see e.g.\ \cite[Lemma 3.7]{klein-rosen1} for a similar discussion). 

Remember that we chose coordinates such that the Hessian of $\phi$ is diagonal (see Def.\ \ref{setup3}).

It is straightforward to check from the definitions that
\begin{equation}\label{asym_gamma}
 \text{asymptotic expansion of }\gamma[u, v] (\hbar^{1/2}y) = \boldsymbol{\gamma} [\boldsymbol{u}, \boldsymbol{v}] \in \C[y]\lau\; .
\end{equation}
Furthermore, by \eqref{taupphi2} the asymptotic expansion of the phase-function $\phi\in C^\infty(U, \R)$ is given by
\begin{equation}\label{sigmapphi} 
\text{asymptotic expansion of }\frac{1}{\hbar} \phi (\hbar^{1/2}y) = \frac{1}{2}\<y, \Lambda y\> + 
\sum_{k\in\N} \hbar^{k/2} \phi_k 
\end{equation}
where $\Lambda = \diag (\lambda_1, \ldots \lambda_n)$ is strictly positive and $\phi_k\in \C[y]$ are homogeneous of degree $k+2$. 
Equation \eqref{sigmapphi} together with \eqref{gij} allows us to define real polynomials $\omega_k\in\R[y]$ by  
\begin{equation}\label{eminus2varphi}
\text{asymptotic expansion of } e^{-2\phi(\hbar^{1/2}y)/\hbar} G(\hbar^{1/2}y) =:
e^{-\langle y, \Lambda y\rangle}
\sum_{k\in\N_0/2}\hbar^k \omega_k \in \R[y]\lau 
\end{equation}
where $\omega_0:= 1$ and the parity of $\omega_k$ is $(-1)^{2k}$ (this point follows as in the similar discussion in \cite[Rem.\ 1]{klein-schwarz} and the Cauchy-product).
Inserting the expansions \eqref{asym_gamma}, \eqref{eminus2varphi} and \eqref{tildegammaDef1} into \eqref{Iphimity} and ordering by powers of $\hbar$
gives
\begin{equation}
 \text{asymptotic expansion of } I_\phi (\gamma[u,v]; \hbar) = \text{right hand side of }~\eqref{skpk}\; .
\end{equation}
Using \eqref{rescalePropBew1} and the uniqueness of asymptotic expansion proves \eqref{skpk}.
\end{proof}

\begin{remark}
Obviously \eqref{skpk} could be used to define the inner product $( \,\cdot\,, \,\cdot\,)_{\Ka_0, \phi}$ on $\Ka_0$ directly.
This approach avoids the stationary phase expansion \eqref{StationaryPhaseExpansion} used in the definition \eqref{scalarprodKdef}
of the inner product on $\Ka$. But then it is slightly more complicated to show symmetry of 
$Q:=  \hbar^{-1}\bigl(R \circ \tau_p(H_\phi) \circ R^{-1}\bigr)$ given in \eqref{RescaledSeriesOfH} with respect
to $( \,\cdot\,, \,\cdot\,)_{\Ka_0, \phi}$ (see \cite{klein-schwarz}).
\end{remark}

\section{The formal Eigenvalue Problem}\label{Kapitel3}

Again we assume Setup \ref{setup1} and fix an admissible pair $(U, \phi)$.

The operator $\tau_p(H_{\phi,\hbar})$ is symmetric on $\Ka$ by Thm.\ \ref{SymmetrieOperator}. 
In this section, we will consider the formal eigenvalue problem of this operator over the field $\C\lau$, i.e.\ 
we will solve the eigenvalue equation
\begin{equation} \label{SpectralDecomposition}
  \tau_p(H_{\phi,\hbar}) \, \boldsymbol{a} = \boldsymbol{E} \boldsymbol{a},  ~~~~ \text{for} ~~\boldsymbol{a} \in \Ka ~~\text{and}~~
\boldsymbol{E} \in \C\lau.
\end{equation}
We start giving an explicit calculation of the rescaled Taylor series of $H_{\phi,\hbar}$,  as defined in \eqref{ConjugationByPhi}. 

\begin{lemma}\label{Hphi}
On $U$, the conjugation of $H_\hbar$ with respect to $e^{-\phi/\hbar}$ as defined in Proposition \ref{SymmetrieOperator} is given by
\begin{equation} \label{LocalFormHPhi}
  H_{\phi,\hbar} = \hbar^2 L + \hbar \bigl( \nabla^\Eh_{2\grad\phi} + W + \Delta \phi \bigr)
\end{equation}
where $\nabla^\Eh$ is the unique metric connection determined by $L$ as described in Remark \ref{RemarkLaplaceType} and $\Delta$ denotes the Laplace-Beltrami operator acting on functions.
\end{lemma}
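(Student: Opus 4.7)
The plan is to carry out the conjugation summand by summand. Since $H_\hbar = \hbar^2 L + \hbar W + V\cdot\mathrm{id}_\Eh$ and both $W$ and $V$ act as zeroth-order endomorphism fields, they commute with the scalar multiplication operator $e^{-\phi/\hbar}$; hence $e^{\phi/\hbar}(\hbar W + V)e^{-\phi/\hbar} = \hbar W + V$, and all the work reduces to computing $e^{\phi/\hbar}\hbar^2 L e^{-\phi/\hbar}$.

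Because $L$ is symmetric, Remark \ref{RemarkLaplaceType} lets me write $L = (\nabla^\Eh)^*\nabla^\Eh + K$ on $U$ with $K$ again a zeroth-order endomorphism, so the same commutation argument leaves $\hbar^2 K$ untouched and reduces everything to conjugating the Bochner Laplacian. For this I would use two product rules: first, the ordinary Leibniz rule for $\nabla^\Eh$, giving $\nabla^\Eh(e^{-\phi/\hbar}u) = e^{-\phi/\hbar}\bigl(\nabla^\Eh u - \hbar^{-1}(d\phi)\,u\bigr)$, and second the identity
\[
  (\nabla^\Eh)^*\bigl((d\phi)\otimes u\bigr) = -(\operatorname{div}\grad\phi)\,u - \nabla^\Eh_{\grad\phi} u,
\]
which I would verify in local coordinates $(x^i)$ by expressing $\nabla^\Eh_i = \partial_i + A_i$ with $A_i$ skew-Hermitian (since $\nabla^\Eh$ is metric), computing the formal $L^2$-adjoint with respect to the volume density $\sqrt{|g|}\,dx$, and recognising the scalar divergence of $\grad\phi$. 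An analogous adjoint product rule $(\nabla^\Eh)^*(f\alpha) = f(\nabla^\Eh)^*\alpha - \alpha(\grad f)$ then lets me apply $(\nabla^\Eh)^*$ to the output of the first step, and composing the two yields after cancellation
\[
  e^{\phi/\hbar}(\nabla^\Eh)^*\nabla^\Eh e^{-\phi/\hbar} = (\nabla^\Eh)^*\nabla^\Eh + 2\hbar^{-1}\nabla^\Eh_{\grad\phi} + \hbar^{-1}\Delta\phi - \hbar^{-2}|d\phi|^2,
\]
with the sign convention $\Delta\phi = \operatorname{div}\grad\phi$ for the Laplace-Beltrami operator on functions.

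Multiplying by $\hbar^2$, adding back the unchanged $\hbar^2 K$, $\hbar W$ and $V$, and finally invoking the eikonal equation \eqref{EikonalEquation} to cancel $V - |d\phi|^2$ assembles everything into the stated formula $\hbar^2 L + \hbar(\nabla^\Eh_{2\grad\phi} + W + \Delta\phi)$. The main obstacle is purely bookkeeping: keeping track of signs and the correct placement of tensor factors when $(\nabla^\Eh)^*$ acts on a scalar multiple of an $\Eh$-valued $1$-form, in particular pinning down the sign conventions that determine whether $\Delta\phi$ appears with a plus or minus. Once the two product rules above are established, the rest is algebra, and the crucial cancellation $V = |d\phi|^2$ is exactly the content of the eikonal equation.
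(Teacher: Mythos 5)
Your proof is correct and follows essentially the paper's own route: conjugate $L$ by $e^{-\phi/\hbar}$ via a Leibniz-type rule to get $\hbar^2 L + 2\hbar\,\nabla^\Eh_{\grad\phi} + \hbar W + \bigl(\hbar\,\Delta\phi + V - |\dd\phi|^2\bigr)\mathrm{id}_\Eh$, and then cancel $V - |\dd\phi|^2$ using the eikonal equation \eqref{EikonalEquation}. The only difference is that the paper simply quotes the product rule $L(fu) = (\Delta f)u - 2\nabla^\Eh_{\grad f}u + fLu$ from \cite[Prop.\ 2.5]{bgv}, whereas you rederive it from the Bochner decomposition $L = (\nabla^\Eh)^*\nabla^\Eh + K$ via the two adjoint identities; note also that your explicit convention $\Delta\phi = \div\grad\phi$ is exactly the one consistent with \eqref{Deltaphi}.
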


\begin{proof}
  For any $f \in C^\infty(M)$ and $u \in \Gamma^\infty(M, \Eh)$, we have the product rule 
  \begin{equation*}
   L(fu) = (\Delta f) u - 2 \nabla^\Eh_{\grad f} u + f Lu,
  \end{equation*}
compare \cite[Prop.\ 2.5]{bgv}. Straightforward calculation gives
\begin{equation} \label{ConjugationFormula}
  H_{\phi,\hbar} = \hbar^2 L + 2 \hbar \nabla^\Eh_{\grad \phi} + \hbar W + \bigl(\hbar \Delta \phi +V - |\dd \phi|^2\bigr)\mathrm{id}_\Eh\, .
\end{equation}
Since $\phi\in C^\infty (U, \R)$ solves the eikonal equation on $U$,  this simplifies to \eqref{LocalFormHPhi}. 
\end{proof}

\begin{remark} \label{RmkTransportEquations}
When making the ansatz 
\begin{equation*}
  e^{-\phi/\hbar}\sum_{k\in \N_0/2}\hbar^k a_k ~~~~ \text{and} ~~~~ \hbar\sum_{k\in \N_0/2}\hbar^k E_k
\end{equation*}
with $a_k \in \Gamma^\infty(U, \Eh)$ for an eigenfunction and eigenvalue of $H_{\phi, \hbar}$, respectively, straightforward calculation gives that the equation
\begin{equation}
   \Bigl(H_{\phi, \hbar} - \hbar\sum\nolimits_{k \in \N_0/2} \hbar^k E_k \Bigr) \sum\nolimits_{k\in \N_0/2}\hbar^k a_k = 0
\end{equation}
in the sense of an asymptotic expansion in $\hbar$ is equivalent to the the statement that for each $k \in \N_0/2$, the coefficients $a_k$ solve the recursive transport equations
\begin{equation} \label{TransportEquation2}
    (\nabla_{2\grad\phi}^\Eh +  W + \Delta \phi - E_0){a}_{k} = -  L {a}_{k-1} + \sum_{i=1/2}^k E_{ i}\,a_{k-i},
\end{equation}
which were discussed in Remark \eqref{RemarkTransport0}. 
\end{remark}

With the chart of Def.\ \ref{setup3}, we have
\begin{equation}\label{taupphi}
\tau_p (\phi) = \frac{1}{2}\langle x, \Lambda x \rangle +   \sum_{k\in\N} \hbar^{k/2} \phi_k
~~~~ \text{and hence} ~~~~
  \tau_p (V) = \< \Lambda x, \Lambda x \> +  \sum_{k\in\N}V_k 
\end{equation}
by \eqref{EikonalEquation} where $\Lambda=\diag (\lambda_1, \ldots, \lambda_n)$ is positive definite and $V_k, \phi_k$ are homogeneous polynomials of degree $k+2$. In the basis $\partial_{x^1}, \dots, \partial_{x^n}$ of $T_pM$, the Hessian $\nabla^2 V|_p$ is given by the matrix $\Lambda^2$, hence the local 
harmonic oscillator at $p$ associated to $H_\hbar$ (see Def. \ref{DefLocalHarmonicOscillator}) is given by
\begin{equation} \label{LocalHarmonicOscillator}
  H_{p, \hbar} = - \hbar^2 \sum_{\nu=1}^n \frac{\partial^2}{\partial X_\nu^2} + \hbar\, W(p) + \< \Lambda X, \Lambda X \>
\end{equation}
at the point $X \in T_pM$.

We use the standard fact of Riemannian geometry that the geodesic Taylor series of the metric is given
by (see e.g.\ \cite[Prop. 1.28]{bgv})
\begin{equation}\label{gij}
\tau_p(g_{ij})=\delta_{ij} - \sum_{k,l}\frac{1}{3}R_{ikjl}x^kx^l + 
\sum_{|\alpha|\geq 3}\frac{\partial g_{ij}}{\partial x^\alpha} \frac{x^\alpha}{\alpha !}\, ,
\end{equation}
where $R_{ikjl}$ denotes the Riemannian curvature tensor.
From \eqref{taupphi} and \eqref{gij} it follows that
\begin{equation}\label{taup_nablagradphi}
  \tau_p(\nabla^\Eh_{2\grad \phi}) =  \sum_{\nu=1}^n 2 \lambda_\nu x^\nu \frac{\partial}{\partial x^\nu}\, \mathrm{id}_\Eh + 
\text{terms of higher degree}
\end{equation}
with respect to the degree from Def.\ \ref{Def_pgrading}. Again by \eqref{gij} we have
\begin{equation}\label{Deltaphi}
  \tau_p(\Delta \phi) = \mathrm{tr}\,\Lambda\, \mathrm{id}_\Eh + \text{terms of higher degree}.
\end{equation} 
Hence, by \eqref{LocalFormHPhi}, \eqref{taup_nablagradphi}, \eqref{Deltaphi} and Def.\ \ref{rescaleDef} the rescaled Taylor series of 
$H_{\phi, \hbar}$ is given by
\begin{equation} \label{RescaledSeriesOfH}
Q := \hbar^{-1}\bigl(R \circ \tau_p(H_{\phi, \hbar}) \circ R^{-1}\bigr) =  \sum_{j \in \N_0/2} \hbar^j Q_j
\end{equation}
for differential operators $Q_j\in \mathscr{D}(\Eh_p)[y]$ independent of $\hbar$. Here, $\mathscr{D}(\Eh_p)[y]$ 
denotes the space of differential operators on $\Eh_p[y]$ with coefficients in $\End (\Eh_p)[y]$ which extend to operators on $\Ka_0$. 
We have in particular 
\begin{equation} \label{defQ0}
  Q_0 = -\sum_{\nu=1}^n  \Bigl( \frac{\partial^2}{\partial y_\nu^2} + 2 \lambda_\nu y^\nu \frac{\partial}{\partial y_\nu} + 
\lambda_\nu \Bigr)\, \mathrm{id}_\Eh + W(p)\; .
\end{equation}

It is clear by \eqref{RescaledSeriesOfH} that $Q$ is a well-defined operator acting on the space $\Ka_0$. 
Moreover, by Thm.\ \ref{SymmetrieOperator} and Prop.\ \ref{rescaleProp2}, $Q$ is symmetric with respect to 
$(\,\cdot\, ,\,\cdot\, )_{\Ka_0, \phi}$. 

\begin{remark} \label{RemarkOnH0AndQ0}
$Q_0$ can be restricted to $\Eh_p[y]$. Using the isomorphism $\Phi_x: \Eh_p[x] \rightarrow \Eh_p [T_pM]$ extending $\Phi_x$ 
defined in Remark \ref{CTpM},
we set $\tilde{R}:= R\circ \Phi_x^{-1} : \Eh_p [T_pM] \rightarrow \Eh_p[y]$.
Then, on $\Eh_p[y]$, $Q_0$ is the rescaling of the conjugation of the local harmonic oscillator $H_{p,\hbar}$ given in 
\eqref{LocalHarmonicOscillator} using the function $\phi_0: X \mapsto \frac{1}{2} \< X, \Lambda X \>$ on $T_pM$, more precisely
\begin{equation*}
  \hbar \tilde{R}^{-1}Q_0 \tilde{R} = e^{\phi_0/\hbar} H_{p,\hbar} e^{-\phi_0/\hbar}.
\end{equation*}
In particular, if a polynomial $q\in \Eh_p[y]$ is an eigenfunction of $Q_0$ with eigenvalue $E_0$, then 
\begin{equation*}
  e^{-\phi_0/\hbar} \tilde{R}^{-1}q (\,\cdot\,) = e^{-\phi_0/\hbar} q(\hbar^{-1/2} \,\cdot\,) \in C^\infty(T_pM, \Eh_p)
\end{equation*}
is an eigenfunction of $H_{p, \hbar}$ with eigenvalue $\hbar E_0$.
\end{remark}

\begin{lemma}\label{Qj}
The operators $Q_j\in \D(\Eh_p)[y],\, j\in \N_0/2,$ in \eqref{RescaledSeriesOfH} are sums $Q_j = L_{2j-2} + A_{2j}$ where $L_k$ and 
$A_k$ are homogeneous with $\mathrm{deg}_\D L_k= k$ and $\mathrm{deg}_\D A_k = k$. 
\end{lemma}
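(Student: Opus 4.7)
The strategy is to combine Lemma \ref{Hphi}, a degree count in $\D(\Eh_p)[[x]]$, and one simple scaling identity for $R$. First I record the key identity: for $c \in \End(\Eh_p)$ and multi-indices $\alpha,\beta$, applying both sides to a monomial $y^\gamma$ and using Def.\ \ref{rescaleDef} yields
\[
R \circ \bigl(c\, x^\alpha \partial_x^\beta\bigr) \circ R^{-1} = \hbar^{(|\alpha|-|\beta|)/2}\, c\, y^\alpha \partial_y^\beta.
\]
By $\C$-linearity this says that any homogeneous element of $\D(\Eh_p)[x]$ of $\D$-degree $k$ (in the sense of Def.\ \ref{Def_pgrading}) is conjugated by $R$ to $\hbar^{k/2}$ times its $y$-counterpart, which still has $\D$-degree $k$.

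Next I track the minimum $\D$-degree occurring in the Taylor expansion of each term of $H_{\phi,\hbar} = \hbar^2 L + \hbar\bigl(\nabla^\Eh_{2\grad\phi} + W + \Delta\phi\bigr)$ supplied by Lemma \ref{Hphi}. For $L$ the local form \eqref{Laplace-type} splits $\tau_p(L)$ into three pieces: the principal part contributes operators of $\D$-degree $|\alpha|-2 \geq -2$ (the $\alpha = 0$ term realizing the minimum, since $g^{ij}(0) = \delta^{ij}$), the first-order part contributes $|\alpha|-1 \geq -1$, and the zeroth-order part $|\alpha| \geq 0$. Hence $\tau_p(L) = \sum_{k \geq -2} L_k$ with $L_k$ homogeneous of $\D$-degree $k$. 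For $\nabla^\Eh_{2\grad\phi}$, the decisive observation is that $\grad\phi$ vanishes at $p$ (as $p$ is a critical point of $\phi$), so the coefficient $2g^{ij}\partial_j\phi$ has Taylor series starting at degree $1$; its composition with $\partial_i$ then has $\D$-degree $\geq 0$, and the connection-one-form contribution $2g^{ij}\partial_j\phi\cdot A_i$ starts at degree $\geq 1$, in agreement with \eqref{taup_nablagradphi}. Finally $\tau_p(W)$ and $\tau_p(\Delta\phi)$ are zeroth-order with smooth coefficients and contribute only non-negative $\D$-degrees (cf.\ \eqref{Deltaphi}). Grouping yields $\tau_p(\nabla^\Eh_{2\grad\phi} + W + \Delta\phi) = \sum_{k \geq 0} A_k$ with $A_k$ homogeneous of $\D$-degree $k$.

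Applying $\hbar^{-1} R(\cdot)R^{-1}$ as in \eqref{RescaledSeriesOfH}, the scaling identity yields
\[
Q = \sum_{k \geq -2} \hbar^{1 + k/2}\, L_k + \sum_{k \geq 0} \hbar^{k/2}\, A_k,
\]
where I abuse notation by writing $L_k$, $A_k$ for the $y$-versions of the originally $x$-valued operators (the $\D$-degree is invariant under this substitution). Re-indexing via $j = 1 + k/2$ in the first sum and $j = k/2$ in the second, both of which range over $j \in \N_0/2$, and collecting powers of $\hbar$ gives $Q_j = L_{2j-2} + A_{2j}$, as asserted. The proof is essentially pure bookkeeping; the one point requiring genuine input from the setup, rather than linear algebra, is the vanishing of $\grad\phi$ at $p$, which keeps the first-order piece of $\hbar^{-1}H_{\phi,\hbar}$ within non-negative $\D$-degrees and thereby ensures that the expansion of $Q$ starts at $j = 0$ instead of a negative half-integer.
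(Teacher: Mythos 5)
Your proposal is correct and follows essentially the same route as the paper: the scaling identity $R\circ(c\,x^\alpha\partial_x^\beta)\circ R^{-1}=\hbar^{(|\alpha|-|\beta|)/2}c\,y^\alpha\partial_y^\beta$ is exactly the paper's equation \eqref{degreeP}, and the decomposition of $\tau_p(L)$ into degrees $\geq -2$ and of $\tau_p(\nabla^\Eh_{2\grad\phi}+W+\Delta\phi)$ into degrees $\geq 0$ (via \eqref{taup_nablagradphi}, \eqref{Deltaphi}) followed by the same reindexing is precisely the paper's argument. Your explicit justification that $\grad\phi$ vanishes at $p$ is the content the paper delegates to \eqref{taup_nablagradphi}, so no substantive difference remains.
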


\begin{proof}
If $\tilde{P}\in \mathscr{D}(\Eh_p)[[x]]$ is homogeneous of degree $\deg_\D \tilde{P} = k\in \Z$ and does not depend on $\hbar$, there
are $P_{\alpha\beta}\in\End (\Eh_p)$ such that 
\begin{equation}\label{degreeP} 
R\circ \tilde{P}\circ R^{-1} = \sum_{|\alpha| - |\beta| =k} P_{\alpha\beta}\hbar^{|\alpha|/2}y^\alpha \hbar^{-|\beta|/2}\partial_y^{\beta} =: 
\hbar^{k/2} P
\end{equation}
where $P\in \D(\Eh_p)[y]$ is homogeneous with $\deg_\D P = k$.

In our case, $\tau_p(L) = \sum_{k\in \Z, k\geq -2} \tilde{L}_k$ where $\tilde{L}_k\in \D(\Eh_p)[[x]]$ is homogeneous with $\deg_\D \tilde{L}_k = k$. 
Moreover, by \eqref{taup_nablagradphi} and \eqref{Deltaphi},
$\tau_p\bigl(\nabla_{2\grad \phi} + \Delta \phi + W(p)\bigr) = \sum_{k\in\N_0} \tilde{A}_{k}$ where 
$\tilde{A}_k\in \D(\Eh_p)[[x]]$ is homogeneous with $\deg_\D \tilde{A}_k = k$. Therefore, by \eqref{degreeP}
\begin{align*}
\hbar^{-1} R \circ \hbar^2\tau_p(L) \circ R^{-1} &= \sum_{k\in \Z, k\geq -2} \hbar^{(k+2)/2} L_k  = \sum_{j \in \N_0/2}\hbar^j L_{2j-2} \\
\hbar^{-1}  R \circ \hbar \tau_p\bigl(\nabla_{2\grad \phi} + \Delta \phi + W\bigr) \circ R^{-1} &= \sum_{k\in \N_0} \hbar^{k/2} A_k = 
\sum_{j \in \N_0/2} \hbar^j A_{2j}
\end{align*}
where $\deg_\D L_k = k$ and $\deg_\D A_k = k$. Thus $Q_j = L_{2j-2} + A_{2j}$ and the lemma follows.
\end{proof}

Let $(e_1, \dots, e_{\mathrm{rk}\,\Eh})$ be an orthonormal basis of $\Eh_p$ consisting of eigenvectors of $W(p)$ and let 
$\mu_1, \dots, \mu_{\mathrm{rk}\,\Eh}$ denote the corresponding eigenvalues. Then the eigenvalue problem
\begin{equation} \label{EigenValueEquationQ0}
  Q_0 h(y) = E h(y), ~~~~~~~~ h(y) \in \Eh_p[y], ~~ E \in \R,
\end{equation}
is solved by the eigenvalues
\begin{equation} \label{EigenvaluesHarmonicOscillator}
  E_{\alpha, k} = \sum_{j=1}^n(2 \alpha_j + 1) \lambda_j + \mu_k, ~~~~~~ \alpha \in \N_0^n,~~ 1 \leq k \leq \mathrm{rk}\,\Eh,
\end{equation}
with the corresponding eigenfunctions
\begin{equation} \label{EigenvectorsHarmonicOscillators}
h_{\alpha, k}(y) = \prod_{j=1}^n \sqrt[4]{\lambda_j} ~ h_{\alpha_j}\bigl(\sqrt{\lambda_j}y\bigr) \cdot e_k, ~~~~~~ 
\alpha \in \N_0^n,~~ 1 \leq k \leq \mathrm{rk}\,\Eh,
\end{equation}
where 
\begin{equation}
  h_j(z) = \frac{(-1)^j}{\sqrt{2^j j!}\sqrt[4]{\pi}} e^{z^2}\frac{\dd^j}{\dd z^j} e^{-z^2}, ~~~~~~~
 j \in \N_0,
\end{equation}
are the Hermite polynomials. 

\begin{remark}\label{Remhalphak}
The eigenfunctions $h_{\alpha,k}$ of $Q_0$ are orthonormal with respect to 
$(\,\cdot\, ,\,\cdot\, )_{\Ka_0,\phi}$.
Moreover, since $h_k(-x) = (-1)^k h_k(x)$ and $\deg h_k = k$ it follows that $h_{\alpha,k}$ is even or odd if $|\alpha|$ is even or odd 
respectively and $\deg h_{\alpha,k} = |\alpha|$.
\end{remark}

\begin{remark}\label{SpektrumQ0}
 $Q_0$ can be considered as an unbounded operator on the Hilbert space 
\begin{equation*}
\mathcal{H} := L^2(\R^n, e^{-2\phi_0/\hbar}\, dy)\otimes \Eh_p
\end{equation*}
that is essentially self-adjoint on polynomials with coefficients in $\Eh_p$. By diagonalizing $W(p)$, it is easy to see that the spectrum of 
$Q_0$ on $\mathcal{H}$ is given by
\begin{equation}
  \spec(Q_0) := \{ E_{\alpha, k} \mid \alpha \in \N_0^n, ~~ 1 \leq k \leq \mathrm{rk}\,\Eh \}\; ,
\end{equation}
with $E_{\alpha,k}$ given in \eqref{EigenvaluesHarmonicOscillator}. Moreover, the eigenfunctions $h_{\alpha,k}$ given in 
\eqref{EigenvectorsHarmonicOscillators} are an orthonormal basis of $\mathcal{H}$. In particular, this shows that $Q_0-\mathbf{z}$
is bijective on $\Eh_p[y]$ for $\mathbf{z}\in \C\setminus \spec (Q_0)$. 
\end{remark}

\begin{lemma}[The resolvent]
For $\mathbf{z} \in \C\setminus \spec(Q_0)$, the operator $Q - \mathbf{z}: \Ka_0 \longrightarrow \Ka_0$ is invertible, and the inverse 
$R(\mathbf{z})$ is given 
by the formal Neumann series
\begin{equation} \label{DefinitionOfResolvent}
  R(\mathbf{z}) = \sum_{k=0}^\infty \Bigl( - R_0(\mathbf{z}) \sum_{j \in \N/2} \hbar^j Q_j \Bigr)^k R_0(\mathbf{z}) = 
- \sum_{j \in \N_0/2} \hbar^j R_j(\mathbf{z})
\end{equation}
where $Q_j,\, j\in \N_0/2,$ are the differential operators given in \eqref{RescaledSeriesOfH}
and 
\begin{align}\label{R0}
R_0(\mathbf{z}) &:= (Q_0 - \mathbf{z})^{-1} \\
R_j (\mathbf{z})&:= \sum_{k=1}^{2j} (-1)^k \sum_{\natop{j= j_1 +\ldots + j_k}{j_1, \ldots
j_k \in\frac{\N}{2}}}
\left(\prod_{m=1}^k -R_0(\mathbf{z}) Q_{j_m}\right) R_0(\mathbf{z})\; .\label{Rj}
\end{align}
Furthermore, for all $j \in\N_0/2$
\begin{equation} \label{SymmetryOfRj}
\bigl( \boldsymbol{u}, R(\mathbf{z}) \boldsymbol{w} \bigr)_{\Ka_0,\phi} = 
\bigl( R(\overline{\mathbf{z}}) \boldsymbol{u} , \boldsymbol{w} \bigr)_{\Ka_0,\phi}
 \quad\text{and}\quad
  \bigl( \boldsymbol{u}, R_j(\mathbf{z}) \boldsymbol{w} \bigr)_{\Ka_0,\phi} = 
  \bigl( R_j(\overline{\mathbf{z}}) \boldsymbol{u} , \boldsymbol{w} \bigr)_{\Ka_0,\phi}\; .
\end{equation}
\end{lemma}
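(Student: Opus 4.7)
I will construct the resolvent as a formal Neumann series in $\hbar^{1/2}$, built from the leading-order resolvent $R_0(\mathbf{z}) = (Q_0 - \mathbf{z})^{-1}$ and the higher-order coefficients $Q_j$. First I need to verify that $R_0(\mathbf{z})$ is well-defined on $\Ka_0$. By Lemma \ref{Qj}, $Q_0 = L_{-2} + A_0$ has $\deg_{\mathscr{D}} Q_0 \leq 0$, so $Q_0$ preserves each finite-dimensional subspace $\Eh_p[y]_{\leq 2k}$ of polynomials of degree at most $2k$. Combined with Remark \ref{SpektrumQ0}, which gives bijectivity of $Q_0 - \mathbf{z}$ on $\Eh_p[y]$ for $\mathbf{z} \notin \spec(Q_0)$, the restriction of $Q_0 - \mathbf{z}$ to each $\Eh_p[y]_{\leq 2k}$ is then bijective, and its inverse preserves polynomial degree. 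Hence $R_0(\mathbf{z})$ acts coefficient by coefficient on any element $\hbar^{-K}\sum_k \hbar^k P_k$ of $\Ka_0$ without violating the constraint $\deg P_k \leq 2k$ of \eqref{DefinitionKa0}.

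For the Neumann series, I factor $Q - \mathbf{z} = (Q_0 - \mathbf{z})(\mathrm{id} + T)$ where $T := R_0(\mathbf{z}) \sum_{j \in \N/2} \hbar^j Q_j$. By Lemma \ref{Qj} once more, each $Q_j$ satisfies $\deg_{\mathscr{D}} Q_j \leq 2j$, so $\hbar^j Q_j$ maps $\Ka_0$ into itself with the polynomial-degree increase exactly matched by the $\hbar^j$ shift. Since $R_0(\mathbf{z})$ preserves $\hbar$-powers, the operator $T$ raises the minimal $\hbar$-power of every element of $\Ka_0$ by at least $1/2$. Consequently, for fixed $\boldsymbol{u} \in \Ka_0$ and fixed target order $\hbar^N$, only the terms $(-T)^k \boldsymbol{u}$ with $k \leq 2(N + K)$ contribute, so $\sum_{k \geq 0}(-T)^k$ is a formally convergent two-sided inverse of $\mathrm{id} + T$ on $\Ka_0$. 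Multiplication by $R_0(\mathbf{z})$ then yields the first expression in \eqref{DefinitionOfResolvent}. Expanding $(-T)^k = \sum_{j_1, \ldots, j_k \geq 1/2} \hbar^{j_1 + \cdots + j_k} \prod_m (-R_0(\mathbf{z}) Q_{j_m})$ and collecting contributions of total weight $j$ (for which necessarily $k \leq 2j$, since each $j_m \geq 1/2$) produces the explicit formula \eqref{Rj}.

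For the symmetry of $R(\mathbf{z})$ itself, I combine the symmetry of $Q$ from Prop.\ \ref{rescaleProp2} with the resolvent identity $R(\bar{\mathbf{z}})(Q - \bar{\mathbf{z}}) = \mathrm{id} = (Q - \mathbf{z}) R(\mathbf{z})$:
\begin{equation*}
\bigl(\boldsymbol{u}, R(\mathbf{z}) \boldsymbol{w}\bigr)_{\Ka_0, \phi}
= \bigl((Q - \bar{\mathbf{z}}) R(\bar{\mathbf{z}}) \boldsymbol{u}, R(\mathbf{z}) \boldsymbol{w}\bigr)_{\Ka_0, \phi}
= \bigl(R(\bar{\mathbf{z}}) \boldsymbol{u}, (Q - \mathbf{z}) R(\mathbf{z}) \boldsymbol{w}\bigr)_{\Ka_0, \phi}
= \bigl(R(\bar{\mathbf{z}}) \boldsymbol{u}, \boldsymbol{w}\bigr)_{\Ka_0, \phi}.
\end{equation*}
For the individual $R_j(\mathbf{z})$, I plan to use the explicit formula \eqref{Rj}: only the factors $R_0(\mathbf{z})$ depend on $\mathbf{z}$, so provided $R_0(\mathbf{z})^* = R_0(\bar{\mathbf{z}})$ and each $Q_{j_m}$ is symmetric with respect to $(\cdot, \cdot)_{\Ka_0, \phi}$, transposing each product reverses its order and gives the formula for $R_j(\bar{\mathbf{z}})$. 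The main obstacle here is establishing symmetry of the individual $Q_j$'s; unlike in the usual smooth setting, this is not a formal consequence of the symmetry of $Q$, because the form $(\cdot, \cdot)_{\Ka_0, \phi}$ is itself a formal series in $\hbar^{1/2}$ via \eqref{skpk}. I expect to handle this by matching coefficients of $\hbar^k$ in the identity $(Q\boldsymbol{u}, \boldsymbol{v}) = (\boldsymbol{u}, Q\boldsymbol{v})$ using the graded structure of \eqref{skpk}, proceeding inductively in $k$ and starting from the essential self-adjointness of $Q_0$ on the Hilbert space $\mathcal{H}$ of Remark \ref{SpektrumQ0}.
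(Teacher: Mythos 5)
Your construction of the Neumann series and your proof of the symmetry of the full resolvent $R(\mathbf{z})$ are correct and in substance identical to the paper's argument: the paper verifies $(Q-\mathbf{z})R(\mathbf{z})=R(\mathbf{z})(Q-\mathbf{z})=1$ by the same telescoping computation that your factorization $Q-\mathbf{z}=(Q_0-\mathbf{z})(\mathrm{id}+T)$ packages, and it obtains the symmetry of $R(\mathbf{z})$ from $\bigl((Q-\bar{\mathbf{z}})\boldsymbol{u},R(\mathbf{z})\boldsymbol{v}\bigr)_{\Ka_0,\phi}=(\boldsymbol{u},\boldsymbol{v})_{\Ka_0,\phi}=\bigl(R(\bar{\mathbf{z}})(Q-\bar{\mathbf{z}})\boldsymbol{u},\boldsymbol{v}\bigr)_{\Ka_0,\phi}$, which is your computation read backwards. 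Your extra care about why $R_0(\mathbf{z})$ and the terms of the series respect the degree constraint in \eqref{DefinitionKa0} is a welcome addition the paper leaves implicit.

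The gap is in your plan for the second half of \eqref{SymmetryOfRj}. Transposing the product formula \eqref{Rj} factor by factor requires that each $Q_{j_m}$ be symmetric for $(\,\cdot\,,\,\cdot\,)_{\Ka_0,\phi}$ and that $R_0(\mathbf{z})^{*}=R_0(\bar{\mathbf{z}})$ for this form; you correctly sense that this is not a formal consequence of the symmetry of $Q$, but the inductive coefficient-matching you propose cannot repair it, because the premise is false in general. For $u,v\in\Eh_p[y]$ the form itself expands, by \eqref{skpk}, as $(u,v)_{\Ka_0,\phi}=\sum_s\hbar^s\langle u,v\rangle_s$ with $\langle u,v\rangle_s=\sum_{r+m=s}\int\gamma_r[u,v]\,\omega_m\,e^{-\langle y,\Lambda y\rangle}dy$, so the coefficient of $\hbar^n$ in $(Qu,v)=(u,Qv)$ is the coupled identity
\begin{equation*}
\sum_{i+s=n}\Bigl(\langle Q_i u,v\rangle_s-\langle u,Q_i v\rangle_s\Bigr)=0 ,
\end{equation*}
which mixes different $Q_i$ with different pieces of the form and cannot be deconvolved into symmetry of the individual $Q_i$. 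Already $Q_0$ fails: in the scalar flat one-dimensional model with $\phi=\tfrac{\lambda}{2}x^2+cx^3$, $c\neq 0$, one has $\omega_{1/2}=-2cy^3$, and (up to the sign conventions of \eqref{defQ0}) $\langle Q_0y,1\rangle_{1/2}-\langle y,Q_0 1\rangle_{1/2}=\pm 4\lambda c\int_\R y^4e^{-\lambda y^2}dy\neq 0$; thus $Q_0$ is symmetric only for the leading Gaussian pairing $\langle\,\cdot\,,\,\cdot\,\rangle_0$ of Remark \ref{SpektrumQ0}, not for the full form, and consequently $R_0(\mathbf{z})^{*}\neq R_0(\bar{\mathbf{z}})$ with respect to $(\,\cdot\,,\,\cdot\,)_{\Ka_0,\phi}$. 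Note in this connection that the $j=0$ case of \eqref{SymmetryOfRj} is equivalent to full-form symmetry of $Q_0$, which shows how delicate the term-by-term statement is and that no factor-by-factor transposition of \eqref{Rj} can reach it. The paper's proof takes a different route that never invokes symmetry of individual $Q_j$'s: it expands the identity $\bigl((Q-\bar{\mathbf{z}})\boldsymbol{u},R(\mathbf{z})\boldsymbol{v}\bigr)=(\boldsymbol{u},\boldsymbol{v})=\bigl(R(\bar{\mathbf{z}})(Q-\bar{\mathbf{z}})\boldsymbol{u},\boldsymbol{v}\bigr)$ in the space of formal power series and argues inductively in $j$ to get $\bigl((Q_0-\bar{\mathbf{z}})\boldsymbol{u},R_j(\mathbf{z})\boldsymbol{v}\bigr)=\bigl(R_j(\bar{\mathbf{z}})(Q_0-\bar{\mathbf{z}})\boldsymbol{u},\boldsymbol{v}\bigr)$, and then strips off the factor $Q_0-\bar{\mathbf{z}}$ using its bijectivity on $\Eh_p[y]$ (Remark \ref{SpektrumQ0}). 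To complete your proof you should follow that inductive scheme, letting the cancellations across different $\hbar$-orders do the work, rather than trying to symmetrize each $Q_j$ separately.
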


\begin{proof}
To prove that the inverse $R(\mathbf{z})$ of $(Q- \mathbf{z})$ is given by the series \eqref{DefinitionOfResolvent}, we write 
$Q = Q_0 + \tilde{Q}$ and $R_0 = R_0(\mathbf{z})$ to simplify the notation. Then
\begin{equation}\label{Q-zR}
  (Q- \mathbf{z})R(\mathbf{z}) 
  = (Q_0 - \mathbf{z} + \tilde{Q}) \sum_{k=0}^\infty (-R_0 \tilde{Q})^k R_0
  = \sum_{k=0}^\infty (Q_0 - \mathbf{z}) (-R_0 \tilde{Q})^k R_0 + \sum_{k=0}^\infty \tilde{Q} (-R_0 \tilde{Q})^k R_0
\end{equation}
and, using $(R_0\tilde{Q})^k R_0 = R_0 (\tilde{Q}R_0)^k$, we get
\begin{equation*}
 \text{right hand side of}~\eqref{Q-zR} = \sum_{k=0}^\infty (-1)^k (\tilde{Q} R_0)^k + \sum_{k=0}^\infty (-1)^k (\tilde{Q} R_0)^{k+1}
  = 1.
\end{equation*}
Similar computations show that $R(\mathbf{z}) (Q-\mathbf{z}) = 1$.

To prove \eqref{SymmetryOfRj}, we use that the symmetry of $Q$ implies that for all $\boldsymbol{u}, \boldsymbol{v}\in \Ka_0$
\begin{equation}\label{symm1}
 \bigl( (Q-\bar{\mathbf{z}}) \boldsymbol{u}, R(\mathbf{z}) \boldsymbol{v} \bigr)_{\Ka_0, \phi} = \bigl( \boldsymbol{u}, \boldsymbol{v} \bigr)_{\Ka_0, \phi} = 
\bigl(R(\bar{\mathbf{z}}) (Q-\bar{\mathbf{z}}) \boldsymbol{u},  \boldsymbol{v} \bigr)_{\Ka_0, \phi}\; .
\end{equation}
Equation \eqref{symm1} in the space of formal power series yields inductively that for all $j\in \N_0/2$ and $\boldsymbol{u},\boldsymbol{v}\in \Eh_p[y]$
\[  \bigl( (Q_0 - \bar{\mathbf{z}}) \boldsymbol{u}, R_j (\mathbf{z}) \boldsymbol{v} \bigr)_{\Ka_0, \phi} = \bigl(R_j(\bar{\mathbf{z}}) (Q_0 -\bar{\mathbf{z}}) \boldsymbol{u},  \boldsymbol{v} \bigr)_{\Ka_0, \phi}\; .\]
Since $Q_0 - \bar{\mathbf{z}}: \Eh_p[y] \rightarrow \Eh_p[y]$ is bijective (see Remark \ref{SpektrumQ0}), this proves \eqref{SymmetryOfRj}.
\end{proof}

From now on, fix $E_0 \in \spec(Q_0)$. We will define a spectral projection $\Pi_{E_0}$ for $Q$ on $\Ka_0$, associated to $E_0$, 
using the operator $R(\mathbf{z})$.

By Remark \ref{SpektrumQ0}, the eigenfunctions $h_{\alpha,k}, \, \alpha \in \N_0^n,\, 1\leq k\leq \mathrm{rk}\Eh$ of $Q_0$
(see \eqref{EigenvectorsHarmonicOscillators}) form a basis in $\Eh_p[y]$. 
By \eqref{DefinitionOfResolvent}, the action of the operators $R_j(\mathbf{z})$ on $h_{\alpha,k}$
therefore determines the action of $R(\mathbf{z})$ on $\Eh_p[y]$. 
By Lemma \ref{Qj}, each $Q_j$ raises the degree of a polynomial by $2j$, thus by \eqref{Rj} and \eqref{EigenValueEquationQ0}
there exist rational functions $d^j_{\alpha,k,\beta, l}(\mathbf{z})$ on $\C$ with poles at most at $\spec (Q_0)$ such that
\begin{equation}\label{Rjhalpha}
R_j(\mathbf{z}) h_{\alpha, k} = \sum_{|\beta|\leq |\alpha| + 2j} d^j_{\alpha,k,\beta, l}(\mathbf{z}) h_{\beta, l}\; .
\end{equation}

We choose a counterclockwise oriented closed contour $\gamma\in \C\setminus \spec (Q_0)$ around $E_0$, separating $E_0$ 
from the rest of $\spec(Q_0)$, and define
\begin{equation}\label{DefPi}
  \Pi_{E_0} q :=  \frac{1}{2\pi i} \sum_{j \in \N_0/2} \hbar^j \oint_\gamma R_j(\mathbf{z}) q \, \dd \mathbf{z} \in \Ka_0\, ,\quad q\in\Eh_p[y]\, .
\end{equation}
This definition makes sense, as $R_j(\mathbf{z})q$ is holomorphic on $\C \setminus \spec(Q_0)$ in the following sense: From 
\eqref{Rjhalpha} it is clear that for fixed $q\in\Eh_p[y]$ the degree of $R_j(\mathbf{z})q$ is smaller than some $N$ for all 
$\mathbf{z} \in \C \setminus \spec(Q_0)$. Thus the range of the map $\mathbf{z} \mapsto R_j(\mathbf{z})q$ is a finite-dimensional complex vector space and as such it is holomorphic.

We extend $\Pi_{E_0}$ to $\Ka_0$ by linearity.

\begin{proposition}[The Eigenprojection]\label{propPi}
Let $m_0$ denote the multiplicity of $E_0$. Then $\Pi_{E_0}$ defined in \eqref{DefPi} is a projection in $\Ka_0$ of rank $m_0$, symmetric with respect to 
the inner product $( \,\cdot\,,\,\cdot\,)_{\Ka_0,\phi}$ and commutes with $Q$.
\end{proposition}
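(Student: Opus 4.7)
The plan is to carry over the classical Riesz spectral projection calculus to the formal setting on $\Ka_0$. Noting that $\sum_{j \in \N_0/2} \hbar^j R_j(\mathbf{z}) = -R(\mathbf{z})$ by \eqref{DefinitionOfResolvent}, the operator $\Pi_{E_0}$ is nothing but the standard Riesz projection $-\frac{1}{2\pi i}\oint_\gamma R(\mathbf{z})\,d\mathbf{z}$ associated to $E_0$. The only algebraic identity needed beyond what the excerpt provides is the resolvent identity
\begin{equation*}
R(\mathbf{z}_1) R(\mathbf{z}_2) = \frac{R(\mathbf{z}_1) - R(\mathbf{z}_2)}{\mathbf{z}_1 - \mathbf{z}_2}, \qquad \mathbf{z}_1,\mathbf{z}_2 \in \C\setminus\spec(Q_0),~\mathbf{z}_1\ne\mathbf{z}_2,
\end{equation*}
which is a one-line consequence of $(Q - \mathbf{z}_i) R(\mathbf{z}_i) = \mathrm{id}$. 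Order by order in $\hbar^{1/2}$, the contour integrals reduce to residue calculations in $\Eh_p[y]$, so all manipulations are legitimate.

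Commutation, idempotence, and symmetry are then routine. For commutation, rewriting $Q R(\mathbf{z}) = R(\mathbf{z})Q = \mathbf{z} R(\mathbf{z}) + \mathrm{id}$ and integrating along $\gamma$ kills the identity term since $\oint_\gamma d\mathbf{z} = 0$, giving $Q\Pi_{E_0} = \Pi_{E_0}Q$. For idempotence, I take two nested counterclockwise contours $\gamma_1\supset\gamma_2$ both around $E_0$, substitute the resolvent identity into $\Pi_{E_0}^2 = \frac{1}{(2\pi i)^2}\oint_{\gamma_1}\oint_{\gamma_2} R(\mathbf{z}_1) R(\mathbf{z}_2) \,d\mathbf{z}_2\,d\mathbf{z}_1$, and use that $\oint_{\gamma_2}(\mathbf{z}_1-\mathbf{z}_2)^{-1}\,d\mathbf{z}_2 = 0$ (since $\mathbf{z}_1$ lies outside $\gamma_2$) while $\oint_{\gamma_1}(\mathbf{z}_1-\mathbf{z}_2)^{-1}\,d\mathbf{z}_1 = 2\pi i$ (since $\mathbf{z}_2$ lies inside $\gamma_1$) to recover $\Pi_{E_0}$. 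For symmetry, $E_0\in\R$ allows me to choose $\gamma$ to be a small circle centered at $E_0$, which is invariant under $\mathbf{z}\mapsto\bar{\mathbf{z}}$ with reversed orientation; applying \eqref{SymmetryOfRj} order by order in $\hbar^{1/2}$, the two sign flips coming from $\overline{(2\pi i)^{-1}} = -(2\pi i)^{-1}$ and from the orientation reversal of $\gamma$ under conjugation cancel, yielding $\Pi_{E_0}^* = \Pi_{E_0}$.

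The rank is the substantive claim. The restriction of the leading-order term of $\Pi_{E_0}$ to $\Eh_p[y]\subset\Ka_0$ is the classical Riesz projection of $Q_0$ onto $\ker(Q_0 - E_0)$, which has dimension $m_0$ by Remark \ref{SpektrumQ0}. Picking a basis $h_1,\dots,h_{m_0}$ of $\ker(Q_0 - E_0)$ and setting $\tilde h_i := \Pi_{E_0} h_i$, one has $\tilde h_i = h_i + O(\hbar^{1/2})$. Linear independence of $\tilde h_1,\dots,\tilde h_{m_0}$ over $\C\lau$ then follows from that of the leading terms: any nontrivial relation $\sum c_i\tilde h_i = 0$ in $\Ka_0$ would, after dividing through by the smallest power of $\hbar^{1/2}$ appearing among the coefficients $c_i\in\C\lau$, descend at leading order to a nontrivial $\C$-relation among $h_1,\dots,h_{m_0}$. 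To see that the $\tilde h_i$ also span $\Pi_{E_0}(\Ka_0)$, I induct on the $\hbar^{1/2}$-filtration: given $v = \Pi_{E_0} u$, its leading-order coefficient lies in $\ker(Q_0 - E_0)$ and so decomposes uniquely as $\sum_i c_{i,0} h_i$; the difference $v - \sum_i c_{i,0}\tilde h_i$ still lies in $\Pi_{E_0}(\Ka_0)$ but begins at a strictly higher order in $\hbar^{1/2}$, and iteration produces the desired $c_i\in\C\lau$.

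I expect the main obstacle to be this rank step: commutation, idempotence, and symmetry are each immediate one-line consequences of the resolvent identity and \eqref{SymmetryOfRj}, whereas pinning down the rank precisely $m_0$ requires the inductive lifting argument above and some care to verify that it respects the degree bound $\mathrm{deg}\,P_j(y)\leq 2j$ built into the definition \eqref{DefinitionKa0} of $\Ka_0$.
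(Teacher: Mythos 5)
Your proposal is correct and follows essentially the same route as the paper: symmetry via \eqref{SymmetryOfRj} combined with the orientation reversal of $\gamma$ under conjugation, idempotence by the standard Riesz/resolvent-identity contour computation (which the paper merely cites from the literature and you carry out explicitly), commutation from $Q$ commuting with $R(\mathbf{z})$, and the rank by identifying the leading order of $\Pi_{E_0}$ with the spectral projection of $Q_0$ and inducting on the $\hbar^{1/2}$-order. The only cosmetic difference is in the spanning step, where the paper expresses each $\Pi_{E_0} h_{\beta,l}$ in terms of the $\Pi_{E_0} h_{\alpha,k}$, $(\alpha,k)\in I$, via \eqref{Pihalphak}, while you run the equivalent filtration argument directly on arbitrary elements of the range.
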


\begin{proof}We write $\Pi = \Pi_{E_0}$.

{\sl Symmetry:} The symmetry of $\Pi$ is a consequence of \eqref{SymmetryOfRj}. More explicitly, it follows from
\[ \Bigl( u, \oint_{\gamma}R_j(\mathbf{z})w \, \dd\mathbf{z} \Bigr)_{\Ka_0,\phi} =  
-\Bigl( \oint_{\gamma}R_j(\mathbf{z})u\, \dd\mathbf{z},  w\Bigr)_{\Ka_0,\phi}\, , \qquad u,v\in \Ka_0\, ,\]
since the orientation of $\gamma$ is reversed under complex conjugation.

$\Pi^2 = \Pi$: 
Using \eqref{DefinitionOfResolvent}, \eqref{DefPi} and the resolvent equation, this follows from standard
arguments (see \cite{klein-schwarz} or \cite{helffer-sjostrand-1} for the computation in the setting of formal power series).

rk $\Pi = m_0$: To determine the rank of $\Pi$, write 
$I = \{ (\alpha, k) \in \N_0^n \times \{1, \dots, \mathrm{rk}\,\Eh\} \mid E_{\alpha, k} = E_0 \}$. Since
\begin{equation*}
   \frac{1}{2\pi i} \int_\gamma R_0(\mathbf{z}) h_{\alpha, k}(y) \dd \mathbf{z} 
  = \begin{cases}
     h_{\alpha, k} (y)\, , & (\alpha, k) \in I \\
     0 \, ,& (\alpha, k) \notin I
     \end{cases},
\end{equation*}
we have
\begin{equation}\label{Pihalphak}
  \Pi h_{\alpha, k} (y) = 
  \begin{cases}
    h_{\alpha, k}(y) + \sum_{j \in \N/2} \hbar^j l_{j, \alpha, k}(y)\, , & (\alpha, k) \in I \\
    \sum_{j \in \N/2} \hbar^j l_{j, \alpha, k}(y)\, , & (\alpha, k) \notin I
  \end{cases}
\end{equation}
for some polynomials $l_{j, \alpha, k}\in \Eh_p[y]$ of degree less than or equal to $|\alpha|+2j$ (this follows from \eqref{Rjhalpha}).
Since the eigenfunctions $h_{\alpha,k}$ of $Q_0$ (see \eqref{EigenvectorsHarmonicOscillators}) form a basis, \eqref{Pihalphak}
 implies that
the functions $\Pi h_{\alpha,k},\, (\alpha, k) \in I$, are linearly
independent over $\C\lau$. Thus their span has
dimension $m_0$.
We now claim that the elements $\Pi h_{\alpha, k}(y)$ with $(\alpha, k) \in I$ span the range of $\Pi$. To prove this, it suffices to verify that
for all $(\beta, l)\in \N_0^n\times \{1, \ldots , \mathrm{rk}\, \Eh\}$ 
\begin{equation*}
  \Pi h_{\beta, l} (y) = \sum_{(\alpha, k) \in I} A^{\beta, l}_{\alpha, k} \Pi h_{\alpha, k}(y)
\end{equation*}
for some coefficients $A^{\beta, l}_{\alpha, k} \in \C\lau$. These coefficients $A^{\beta, l}_{\alpha, k}$, however, can be determined by an easy 
induction argument using \eqref{Pihalphak} (see e.g.\  \cite{klein-schwarz}).

$\Pi Q = Q\Pi$: Since $Q$ commutes with $R(z)$, this follows from the definition of $\Pi$.
\end{proof}

\begin{proposition}\label{orthonormprop}
For $E_{\alpha, k}$ given in \eqref{EigenvaluesHarmonicOscillator} and $E_0\in \spec (Q_0)$, let 
$I_{E_0}:= \{ (\alpha, k) \in \N_0^n \times \{1, \dots, \mathrm{rk}\,\Eh\} \mid E_{\alpha, k} = E_0 \}$ and let 
$(\alpha^j, k^j)$, $j=1, \dots m_0,$ be an 
enumeration of the elements of $I_{E_0}$. Then there exists an orthonormal basis $(\boldsymbol{b}^1, \dots, \boldsymbol{b}^{m_0})$ of $V :=  \Pi_{E_0} \Ka_0$ such that
\begin{equation}
  \boldsymbol{b}^j = h_{\alpha^j,k^j}+ \sum_{\ell \in \N/2} \hbar^{\ell} p^j_{\ell}
\end{equation}
for some polynomials $p^j_{\ell}\in\Eh_p[y]$ of degree less or equal to $|\alpha^j| + 2 \ell$.
With respect to such a basis, $Q|_V$ is represented by a Hermitian $m_0 \times m_0$ matrix $M$ of the form
\begin{equation}
  M = E_0 \cdot \mathbf{1} + \sum_{j \in \N/2} \hbar^{j} M_j
\end{equation}
where the entries of the matrices $M_j$ are in $\C\lau$ and $\mathbf{1}$ is the identity matrix.
\end{proposition}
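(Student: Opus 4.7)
The plan is threefold: (i) take the natural spanning set of $V = \Pi_{E_0}\Ka_0$ given by the $m_0$ vectors $\Pi_{E_0} h_{\alpha^j,k^j}$ from the rank-counting part of the proof of Prop.~\ref{propPi}; (ii) orthonormalize these over the field $\C\lau$ via an inverse square root of the Gram matrix; (iii) read off the matrix of $Q|_V$ using the commutation $Q\Pi_{E_0}=\Pi_{E_0}Q$ and the symmetry of $Q$ with respect to $(\,\cdot\,,\,\cdot\,)_{\Ka_0,\phi}$.

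For step (ii), set $\boldsymbol{v}^j := \Pi_{E_0} h_{\alpha^j,k^j}$, which by \eqref{Pihalphak} has the form $\boldsymbol{v}^j = h_{\alpha^j,k^j} + \sum_{\ell\in\N/2}\hbar^\ell\, l_{\ell,\alpha^j,k^j}$ with $\deg l_{\ell,\alpha^j,k^j}\leq |\alpha^j|+2\ell$, and these span $V$ over $\C\lau$. Form the Gram matrix $G_{ij} := (\boldsymbol{v}^i,\boldsymbol{v}^j)_{\Ka_0,\phi}\in\C\lau$. By Rem.~\ref{Remhalphak}, the leading-order ($\hbar^0$) entries equal $(h_{\alpha^i,k^i},h_{\alpha^j,k^j})_{\Ka_0,\phi}=\delta_{ij}$, so $G=\mathbf{1}+X$ with $X$ hermitian (from sesquilinearity and symmetry) and with entries in $\hbar^{1/2}\C[[\hbar^{1/2}]]$. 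Hence I can define the hermitian matrix
\[
G^{-1/2} := \sum_{k=0}^\infty \binom{-1/2}{k} X^k,
\]
which is well-defined over $\C\lau$ because $X^k$ has entries of order at least $\hbar^{k/2}$, so the sum stabilizes order by order. Setting $\boldsymbol{b}^j := \sum_i (G^{-1/2})_{ji}\,\boldsymbol{v}^i$ I obtain $(\boldsymbol{b}^i,\boldsymbol{b}^j)_{\Ka_0,\phi} = (G^{-1/2}\,G\,G^{-1/2})_{ij}=\delta_{ij}$. Since $(G^{-1/2})_{ji}=\delta_{ji}+O(\hbar^{1/2})$, this basis has the required form $\boldsymbol{b}^j = h_{\alpha^j,k^j} + \sum_{\ell\in\N/2}\hbar^\ell p^j_\ell$, and tracking polynomial degrees through the binomial series combined with the degree bounds on the $l_{\ell,\alpha^i,k^i}$ delivers $\deg p^j_\ell \leq |\alpha^j|+2\ell$.

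For step (iii), since $Q$ commutes with $\Pi_{E_0}$ by Prop.~\ref{propPi}, $Q$ preserves $V$, so orthonormality yields $Q\boldsymbol{b}^j = \sum_i M_{ij}\boldsymbol{b}^i$ with $M_{ij} = (\boldsymbol{b}^i,Q\boldsymbol{b}^j)_{\Ka_0,\phi}\in\C\lau$. The symmetry of $Q$ on $\Ka_0$ (Prop.~\ref{rescaleProp2}) forces $\overline{M_{ij}}=M_{ji}$, i.e.\ $M$ is hermitian. The leading-order ($\hbar^0$) entry of $M_{ij}$ equals $(h_{\alpha^i,k^i}, Q_0 h_{\alpha^j,k^j})_{\Ka_0,\phi} = E_0(h_{\alpha^i,k^i},h_{\alpha^j,k^j})_{\Ka_0,\phi} = E_0\delta_{ij}$ by \eqref{EigenValueEquationQ0} and Rem.~\ref{Remhalphak}; therefore $M=E_0\mathbf{1}+\sum_{j\in\N/2}\hbar^j M_j$ with hermitian matrix coefficients.

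The main technical obstacle is the well-definedness of $G^{-1/2}$ over the Laurent-series field $\C\lau$; this works precisely because the leading-order Gram matrix is the identity, so the binomial series can be evaluated order by order in $\hbar^{1/2}$. The polynomial-degree bookkeeping for $p^j_\ell$ is secondary: it follows by combining the degree bounds from \eqref{Pihalphak} with the fact that the off-diagonal entries of $G^{-1/2}-\mathbf{1}$ vanish to order $\hbar^{1/2}$, so corrections at order $\hbar^\ell$ introduce polynomials of degree at most $|\alpha^j|+2\ell$.
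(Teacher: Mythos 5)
Your proposal is correct and follows essentially the same route as the paper: both take the spanning set $\boldsymbol{f}^j=\Pi_{E_0}h_{\alpha^j,k^j}$, orthonormalize via the inverse square root of the Gram matrix $\mathbf{1}+O(\hbar^{1/2})$ defined order by order over $\C\lau$, and obtain the hermitian matrix of $Q|_V$ with leading term $E_0\cdot\mathbf{1}$ from $Q\Pi_{E_0}=\Pi_{E_0}Q$ and the symmetry of $Q$. The only differences are cosmetic (you compute $M_{ij}=(\boldsymbol{b}^i,Q\boldsymbol{b}^j)_{\Ka_0,\phi}$ directly instead of writing $M=BCB$, and you make the binomial series explicit), and your brief degree bookkeeping for $p^j_\ell$ is at the same level of detail as the paper's appeal to \eqref{Pihalphak}.
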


\begin{proof}
Setting $\boldsymbol{f}^j:= \Pi_{E_0} h_{\alpha^j,k^j}$, it follows from \eqref{Pihalphak}, the definition of $(\,\cdot\, ,\,\cdot\, )_{\Ka_0,\phi}$ in \eqref{scalarprodKdef} and
the fact that the eigenfunctions $h_{\alpha^j,k^j}$ are orthonormal with respect to $(\,\cdot\, ,\,\cdot\, )_{\Ka_0,\phi}$ that 
\begin{equation}\label{fkfl}
  \bigl( \boldsymbol{f}^k, \boldsymbol{f}^\ell \bigr)_{\Ka_0,\phi} = \delta_{k\ell} + \sum_{j \in \N/2} \hbar^j \beta_{k,\ell, j}, ~~~~~ 
  \beta_{k,\ell, j} \in \C, \, \; 1 \leq k, \ell \leq m_0\; .
\end{equation}
We obtain a Hermitian $m_0 \times m_0$-matrix with entries in $\C\lau$
\begin{equation}\label{DefA}
  A := \Bigl( \bigl( \boldsymbol{f}^k, \boldsymbol{f}^\ell \bigr)_{\Ka_0,\phi} \Bigr)_{1 \leq k, \ell \leq m_0} = 
  \mathbf{1} + \sum_{j \in \N/2} \hbar^j A_j.
\end{equation}
As the series of $A$ starts with the identity matrix, the matrix 
\begin{equation}\label{DefB}
B:= A^{-1/2} = \mathbf{1} + \sum_{j \in \N/2} \hbar^j B_j
\end{equation}
is well-defined by the appropriate Taylor series and its elements are in $\C\lau$ as well. If the elements of $A$ are in
$\C(\!(\hbar)\!)$, then the same is true for the elements of $B$. Setting
\begin{equation*}
  (\boldsymbol{b}^1, \dots, \boldsymbol{b}^{m_0}) := (\boldsymbol{f}^1, \dots \boldsymbol{f}^{m_0}) B\, ,
\end{equation*}
it is straightforward to verify that $(\boldsymbol{b}^1, \dots, \boldsymbol{b}^{m_0})$ is an orthonormal basis of $V$. Furthermore, with respect to this basis, $Q$ is represented 
by the Hermitian matrix $M = B C B$ where $C$ is the matrix with entries in $\C\lau$ defined by
\begin{equation}\label{CMatrix}
  C := \Bigl(\bigl(\boldsymbol{f}^k, Q \boldsymbol{f}^\ell \bigr)_{\Ka_0,\phi}\Bigr)_{1\leq k,\ell\leq m_0} = E_0 \cdot \mathbf{1} + \sum_{j \in \N/2} \hbar^j C_{j}\, .
\end{equation}
The last equality in \eqref{CMatrix} follows from \eqref{RescaledSeriesOfH}, Lemma \ref{Qj}, Proposition \ref{propPi}, \eqref{fkfl} and the fact that
$h_{\alpha^j,k^j},\, j=1, \ldots m_0$ are the eigenfunctions of $Q_0$ for the eigenvalue $E_0$.
The symmetry of $C$ and hence $M$ follows from the symmetry of $Q$. 
The statement on the degree of the polynomials $p^j_\ell$ follows from \eqref{Pihalphak}. 
\end{proof}

\begin{corollary}[Eigendecomposition of $Q_0$] \label{EssentialCorollary}
For each $E_0 \in \spec(Q_0)$ of multiplicity $m_0$ with eigenfunctions $h_{\alpha^j,k^j}$, with $(\alpha^j,k^j) \in I_{E_0}$ as defined in Prop.\ \ref{orthonormprop}, the 
operator $Q: \Ka_0 \longrightarrow \Ka_0$ possesses $m_0$ (not necessarily distinct) eigenvalues $E_1, \dots, E_{m_0}$ of the form
\begin{equation}\label{Ej}
  \boldsymbol{E}_j= E_0 + \sum_{\ell \in \N/2} \hbar^\ell E_{j,\ell}\, , \qquad E_{j,\ell}\in\R\, , 
\end{equation}
with associated orthonormal eigenfunctions with respect to $(\,\cdot\, , \,\cdot\, )_{\Ka_0,\phi}$
\begin{equation}\label{psij}
  \boldsymbol{\psi}_j  = \sum_{\ell \in \N_0/2} \hbar^\ell \psi_{j,\ell} \in \Ka_0
\end{equation}
where $\psi_{j,\ell}\in\Eh_p[y]$ with $\deg \psi_{j,\ell} \leq d_{j,\ell}:= 2\ell + \max_{(\alpha,k)\in I_{E_0}} |\alpha|$.
\end{corollary}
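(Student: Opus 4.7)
The plan is to reduce the eigenvalue problem for $Q$ on $\Ka_0$ to a finite-dimensional spectral problem on the invariant subspace $V := \Pi_{E_0}\Ka_0$, and then to diagonalize the resulting hermitian matrix over $\C\lau$ by formal perturbation theory. Indeed, by Prop.~\ref{propPi}, the projection $\Pi_{E_0}$ commutes with $Q$, so $V$ is $Q$-invariant; by Prop.~\ref{orthonormprop}, $V$ carries an orthonormal basis $(\boldsymbol{b}^1, \dots, \boldsymbol{b}^{m_0})$ in which $Q|_V$ is represented by the hermitian matrix $M = E_0 \cdot \mathbf{1} + \sum_{j\in\N/2} \hbar^j M_j$. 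Any unitary matrix $U$ over $\C\lau$ diagonalizing $M$, say $U^* M U = \diag(\boldsymbol{E}_1, \dots, \boldsymbol{E}_{m_0})$, then yields the sought eigenfunctions via $\boldsymbol{\psi}_j := \sum\nolimits_k U_{kj} \boldsymbol{b}^k \in V \subseteq \Ka_0$.

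I would construct such a $U$ by formal Rayleigh--Schr\"odinger perturbation theory in $\hbar^{1/2}$. The zeroth-order problem is completely degenerate, since the leading term $E_0\cdot\mathbf{1}$ is a scalar. At the first nontrivial order I diagonalize the constant hermitian matrix extracted from $M_{1/2}$; being hermitian over $\C$, it admits an orthonormal eigenbasis with real spectrum, which fixes the coefficients $E_{j,1/2} \in \R$. On each remaining degenerate eigenspace I apply the same procedure at the next order to the induced hermitian effective perturbation, and continue inductively. Hermiticity at every step forces $E_{j,\ell} \in \R$, and the partial changes of basis assemble into a single unitary matrix $U$ with entries in $\C\lau$.

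The degree estimate is then immediate. Prop.~\ref{orthonormprop} states that the $\hbar^\ell$-coefficient of $\boldsymbol{b}^k$ has polynomial degree at most $|\alpha^k| + 2\ell$, and since the entries $U_{kj}$ are scalars in $\C\lau$, each $\psi_{j,\ell}$ is a linear combination of polynomials of degree at most $2\ell + \max_{(\alpha,k)\in I_{E_0}}|\alpha|$, as claimed. Orthonormality of the $\boldsymbol{\psi}_j$ with respect to $(\,\cdot\,,\,\cdot\,)_{\Ka_0,\phi}$ and the eigenvalue equation $Q\boldsymbol{\psi}_j = \boldsymbol{E}_j \boldsymbol{\psi}_j$ follow directly from the unitarity of $U$ and the orthonormality of the $\boldsymbol{b}^k$.

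The main obstacle is the inductive formal diagonalization of $M$ itself: the total degeneracy at leading order precludes direct appeal to nondegenerate perturbation theory, so one must resolve degeneracies block by block as higher-order hermitian perturbations split the remaining invariant subspaces. The saving grace is that $M$ is only an $m_0\times m_0$ matrix, so at each step the problem reduces to the routine diagonalization of an honest hermitian matrix over $\C$, giving real spectrum at every order. This is the formal-power-series counterpart of Kato's regular perturbation theory, and the argument follows the pattern used in \cite{klein-schwarz} and \cite{helffer-sjostrand-1}.
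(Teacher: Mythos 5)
Your proposal is correct and follows essentially the same route as the paper: reduce via Prop.~\ref{propPi} and Prop.~\ref{orthonormprop} to the hermitian $m_0\times m_0$ matrix $M = E_0\cdot\mathbf{1} + \sum_{j\in\N/2}\hbar^j M_j$ over $\C\lau$ and then diagonalize it over that field, pulling the eigenvectors back through the orthonormal basis $(\boldsymbol{b}^1,\dots,\boldsymbol{b}^{m_0})$ to get the $\boldsymbol{\psi}_j$ and the degree bound. The only difference is that where you sketch the formal degenerate (Rayleigh--Schr\"odinger/Rellich-type) diagonalization by hand, the paper simply invokes \cite[Thm.\ A2.3]{klein-schwarz}, which is precisely the statement that a hermitian matrix with entries in $\C(\!(\hbar^{1/k})\!)$ admits a complete eigendecomposition with real eigenvalue series and orthonormal eigenvectors.
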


\begin{proof}
This follows at once from Prop.\ \ref{orthonormprop} together with
\cite[Thm.\ A2.3]{klein-schwarz}\footnote{We recall that this result essentially is a generalization of Rellich's Theorem on the 
analyticity of eigenvalues and eigenfunctions for certain matrices with analytic coefficients (see \cite{rellich}). 
For a more abstract algebraic result (covering the present case of formal power series) see also
\cite{grater-klein}.}, which states that for any $k \in \N$, a Hermitian $m\times m$-matrix $M$ with elements 
$M_{ij}\in \C(\!(\hbar^{1/k})\!)$ has a complete eigendecomposition in $\R(\!(\hbar^{1/k})\!)$ and the associated eigenvectors can be chosen to be orthonormal.
\end{proof}

Using Lemma \ref{Qj}, we
can prove the following proposition about the absence of half integer
terms in the expansion \eqref{Ej}.

\begin{proposition}[Parity]\label{ThmHalfIntegers}
For $I_{E_0}$ as in Prop.\ \ref{orthonormprop}, we assume that all $(\alpha, k) \in I_{E_0}$ have the same parity (i.e.\ $|\alpha|$
is either even for all $(\alpha, k) \in I_{E_0}$ or odd for all 
$(\alpha, k)\in I_{E_0}$). Let $M$
denote the matrix specified in Prop.\ \ref{orthonormprop} and $E_j$
its eigenvalues given in Cor.\ \ref{EssentialCorollary}. 
Then $M_{ij}\in \C(\!(\hbar)\!)$
and $E^j\in \R(\!(\hbar)\!)$ for $1\leq i,j\leq m_0$.
\end{proposition}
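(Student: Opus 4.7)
The plan is to track parity through the whole construction of the matrix $M$ in Prop.\ \ref{orthonormprop}, using Lemma \ref{Qj} as the starting point. By that lemma, each $Q_j$ is a sum of two homogeneous differential operators of degrees $2j-2$ and $2j$; both have the same parity as $2j$, so $Q_j$ preserves the parity of polynomials in $\Eh_p[y]$ when $j\in\N_0$ and reverses it when $j$ is a strict half-integer. The operator $R_0(\mathbf{z})=(Q_0-\mathbf{z})^{-1}$ preserves parity because $Q_0$ does (and acts on each parity eigenspace separately). From the Neumann series \eqref{Rj}, every summand of $R_j(\mathbf{z})$ has the form $(-R_0 Q_{j_1})\cdots(-R_0 Q_{j_k})R_0$ with $j_1+\cdots+j_k=j$, and its total parity is $\prod_m(-1)^{2j_m}=(-1)^{2j}$. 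Hence $R_j(\mathbf{z})$ preserves parity exactly when $j\in\N_0$ and reverses it when $j\in\N_0+\tfrac12$.

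Under the hypothesis of Prop.\ \ref{ThmHalfIntegers}, all generators $h_{\alpha^j,k^j}$ of the range of $\Pi_{E_0}$ share a common parity $\epsilon\in\{\pm1\}$ by Remark \ref{Remhalphak}. Writing $\boldsymbol{f}^j=\Pi_{E_0}h_{\alpha^j,k^j}=\sum_{\ell\in\N_0/2}\hbar^\ell f^j_\ell$, the previous paragraph shows $f^j_\ell$ has parity $\epsilon\cdot(-1)^{2\ell}$. The same calculation applied to $Q\boldsymbol{f}^k=\sum_s\hbar^s\sum_{j+\ell=s}Q_j f^k_\ell$ shows that its $\hbar^s$-coefficient has parity $\epsilon\cdot(-1)^{2s}$. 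In short, every rescaled coefficient entering the construction of $M$ obeys the uniform rule \emph{the $\hbar^s$-coefficient has parity $\epsilon\cdot(-1)^{2s}$ in $y$}.

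I now insert these parity patterns into the explicit formula \eqref{skpk} for $(\boldsymbol{f}^j,\boldsymbol{f}^k)_{\Ka_0,\phi}$. The coefficient of $\hbar^n$ is a sum over $a+b+r+m=n$ of integrals $\int_{\R^n}\gamma_r[f^j_a,f^k_b](y)\,\omega_m(y)\,e^{-\langle y,\Lambda y\rangle}\,dy$. Using the parity rules stated in and after Prop.\ \ref{rescaleProp2} (namely, $\omega_m$ has parity $(-1)^{2m}$ and $\gamma_r[u,v]$ has the product of the parities of $u,v$ times $(-1)^{2r}$), the polynomial integrand has parity $\epsilon^2\cdot(-1)^{2(a+b+r+m)}=(-1)^{2n}$, while the Gaussian weight $e^{-\langle y,\Lambda y\rangle}$ is even. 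For strict half-integer $n$, the integrand is therefore odd in $y$ and integrates to zero, so every entry of the Gram matrix $A$ in \eqref{DefA} lies in $\C(\!(\hbar)\!)$. The identical computation with $\boldsymbol{f}^k$ replaced by $Q\boldsymbol{f}^k$ shows that the matrix $C$ in \eqref{CMatrix} has entries in $\C(\!(\hbar)\!)$. Since $B=A^{-1/2}$ is defined by Taylor expansion about $\mathbf{1}$, its entries also lie in $\C(\!(\hbar)\!)$, and hence so do those of $M=BCB$. Finally, applying \cite[Thm.\ A2.3]{klein-schwarz} over the field $\C(\!(\hbar)\!)$ in place of $\C(\!(\hbar^{1/2})\!)$ yields eigenvalues $E^j\in\R(\!(\hbar)\!)$, as claimed.

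The main obstacle is the parity bookkeeping, because the formula \eqref{skpk} combines three independent sources of half-integer powers of $\hbar$: the Taylor expansion of $\gamma$ through $\gamma_r$, the weight polynomials $\omega_m$, and the rescaled coefficients of $\boldsymbol{f}^j$ themselves. Once the uniform parity rule above is verified for every object entering the Gram and operator matrices, the cancellation of half-integer powers is automatic, and the standard hermitian eigendecomposition result promotes it to the eigenvalues.
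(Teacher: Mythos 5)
Your proposal is correct and follows essentially the same route as the paper: it uses Lemma \ref{Qj} to show each $R_j(\mathbf{z})$ (hence each $\hbar$-coefficient of $\Pi_{E_0}h_{\alpha,k}$ and of $Q\,\Pi_{E_0}h_{\alpha,k}$) changes parity by $(-1)^{2j}$, then feeds these parities together with those of $\gamma_r$ and $\omega_m$ into the integral formula \eqref{skpk} to kill all half-integer coefficients of $A$ and $C$, and concludes via $B=A^{-1/2}$, $M=BCB$ and \cite[Thm.\ A2.3]{klein-schwarz}. No gaps to report.
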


\begin{proof}
Again by \cite[Thm.\ A2.3]{klein-schwarz} we know that if $M_{ij}\in\C(\!(\hbar)\!)$,
the same is true for its eigenvalues $E_j$. Thus, it suffices to
prove the statement on $M_{ij}$.
We will
change the notation during this proof to $\boldsymbol{f}_{\alpha,k}:=\Pi_{E_0} h_{\alpha,k}$ and
$C_{\alpha,k,\beta, \ell}$ for $(\alpha,k), (\beta, \ell)\in I_{E_0}$ and $C$ given in \eqref{CMatrix}. \\
We start proving that
$\bigl(\boldsymbol{f}_{\alpha,k}, \boldsymbol{f}_{\beta, \ell}\bigr)_{\Ka_0, \phi}\in \C(\!(\hbar)\!)$. By the definition
\eqref{DefPi} of $\Pi_{E_0}$, we have
\begin{equation}\label{falphaj}
\boldsymbol{f}_{\alpha,k} = \sum_{j\in \N_0/2} \hbar^j f_{\alpha,k,j} \quad \text{with} \quad 
\boldsymbol{f}_{\alpha,k,j} = \frac{1}{2\pi i}\oint_{\Gamma} R_j(\mathbf{z})h_{\alpha,k}\, d\mathbf{z}\, .
\end{equation}
By \eqref{Rj}, $R_j(\mathbf{z})$ is determined by $Q_{j_\ell}$ with $\sum j_\ell = j$ and $R_0(\mathbf{z})$. Since
by Lemma \ref{Qj} the operator $Q_{j_\ell}$
changes the parity of a polynomial in $\Eh[y]$ by the factor
$(-1)^{2j_\ell},\, j_\ell \in\N_0/2,$ (and raises its degree by $2j_\ell$), we can conclude
that $R_j(\mathbf{z})$ changes the parity by
$(-1)^{2j}$. Using that, for each $1 \leq k \leq \mathrm{rk}\Eh$, the parity of $h_{\alpha,k}$ is
given by $(-1)^{|\alpha|}$, it follows that the parity of $f_{\alpha,k, j}$ is given by $(-1)^{|\alpha|+2j}$. 
By \eqref{skpk} we have
\begin{equation}\label{Anull}
\bigl(\boldsymbol{f}_{\alpha,k}, \boldsymbol{f}_{\beta,\ell}\bigr)_{\Ka_0, \phi} 
 =\sum_{n\in\N_0/2} \hbar^n
   \sum_{\natop{j, m, s, r\in\N_0/2}{j+m+s+r=n}}\int_{\R^n} \gamma_r[{f}_{\alpha,k, j}, {f}_{\beta, \ell, m}](y)
   \omega_s(y) e^{-\langle y, \Lambda y\rangle}\, dy \, .
\end{equation}
We shall show that for $2n$ odd (and thus for $n$ half-integer)
each summand vanishes. For fixed $j,m,s,r$, the integral will vanish
if the entire integrand is odd. According to Prop.\ \ref{rescaleProp2}, the
parity of $\omega_s$ is $(-1)^{2s}$. Moreover, as described below \eqref{Rundgammatilde}, the parity of the term
$\gamma_r[f_{\alpha,k, j}, f_{\beta, \ell, m}]$ is given by $(-1)^{|\alpha|+2j + |\beta| + 2m + 2r}$.
Since the exponential term is even, the integral on the right hand side of \eqref{Anull} therefore
vanishes if $(|\alpha| + 2j + |\beta| + 2m + 2r + 2s)$ is odd. Since by
assumption $\alpha$ and $\beta$ have the same parity, $|\alpha| +
|\beta|$ is even. Thus the integral vanishes if $2(j+m + s + r)= 2n$ is odd which occurs
if $n$ is half-integer. This shows that
$A_{\alpha,k, \beta, \ell} = \bigl(\boldsymbol{f}_{\alpha,k}, \boldsymbol{f}_{\beta,\ell}\bigr)_{\Ka_0, \phi}\in \C (\!(\hbar)\!)$  (see \eqref{DefA}) and thus the same is
true for $B_{\alpha,k,\beta, \ell}$ given in \eqref{DefB}. \\
Since $M=BCB$ for $C$ given in \eqref{CMatrix} as described in the proof of Prop.\ \ref{orthonormprop}, it remains to show that 
the elements of $C$ are in $\C(\!(\hbar)\!)$ where 
\begin{multline}\label{Calphabeta}
C_{\alpha,k,\beta, \ell} = \bigl(\boldsymbol{f}_{\alpha,k}, Q \boldsymbol{f}_{\beta,\ell}\bigr)_{\Ka_0, \phi} \\ = 
\sum_{n\in\N_0/2}\hbar^n
   \sum_{\natop{j, m, s, r, q\in\N_0/2}{j+m+s+r+q=n}}\int_{\R^n} \gamma_r[f_{\alpha,k, j}, Q_q f_{\beta, \ell, m}](y)
   \omega_s(y) e^{-\langle y, \Lambda y\rangle}\, dy \, .
\end{multline}
By Lemma \ref{Qj}, the operator $Q_q$ changes the parity by $(-1)^{2q}$. Therefore, it follows from the discussion above \eqref{Anull} that
the integral on the right hand side of \eqref{Calphabeta}
vanishes if $j + m + s + r + q = n$ is half integer. This proves the proposition.
\end{proof}

Now we come back to the Taylor expansion of $H_{\phi,\hbar}$ with respect to our original not rescaled chart $x$. 

\begin{theorem}[Eigendecomposition of $\tau_p(H_{\phi, \hbar})$]\label{PropEigenmitx}
Let $\hbar E_0$ be an eigenvalue of multiplicity $m_0$ of the local 
harmonic oscillator $H_{p, \hbar}$ from Def.\ \ref{DefLocalHarmonicOscillator}. 
Then the operator $\tau_p \bigl(H_{\phi, \hbar}\bigr)$ on $\Ka$ as in \eqref{LocalFormHPhi} has a 
system of $m_0$ eigenfunctions $\boldsymbol{\hat{a}}_j\in \Ka$, $j=1. \ldots m_0,$ that are orthonormal with respect to $(\,\cdot\,, \,\cdot\,)_{\Ka, \phi}$ 
and that are of the form
\begin{equation}\label{eigenfctonK}
 \boldsymbol{\hat{a}}_j = \hbar^{-K}\sum_{k\in \N_0/2} \hbar^k \boldsymbol{a}_{j,k} \quad\text{with}\quad 
\boldsymbol{a}_{j,k} = \sum_{\beta\in\N^n} {a}_{j,k,\beta} x^\beta \in   \Eh_p[[x]]
\end{equation}
where $K = \max_{\alpha, k \in I_{E_0}}|\alpha|/2$ for $I_{E_0}$ as given in Prop.\ \ref{orthonormprop} and the lowest order monomial in 
$\boldsymbol{a}_{j,k}$ is of degree $\max\{2(K-k), 0\}$. The associated eigenvalues are 
\begin{equation}\label{eigenvalueonK}
 \hbar \boldsymbol{E}_j =  \hbar \Bigl( E_0 +  \sum_{k \in \N/2} \hbar^k E_{j,k} \Bigr) \, , \qquad E_{j,k}\in\R\, .
\end{equation}
If $|\alpha|$ is even (or odd resp.) for all $(\alpha, \ell)\in I_{E_0}$, then all half integer terms (or integer terms respectively) in the expansion 
\eqref{eigenfctonK} vanish, and in both cases, all half integer terms in the expansion \eqref{eigenvalueonK} vanish.
\end{theorem}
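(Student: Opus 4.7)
The strategy is to lift the eigendecomposition on $\Ka_0$ from Cor.\ \ref{EssentialCorollary} back to $\Ka$ via the rescaling operator. Setting $\boldsymbol{\hat{a}}_j := R^{-1}\boldsymbol{\psi}_j$, the identity $\tau_p(H_{\phi,\hbar}) = \hbar\, R^{-1} Q R$ from \eqref{RescaledSeriesOfH} immediately promotes $Q\boldsymbol{\psi}_j = \boldsymbol{E}_j\boldsymbol{\psi}_j$ to $\tau_p(H_{\phi,\hbar})\boldsymbol{\hat{a}}_j = \hbar \boldsymbol{E}_j \boldsymbol{\hat{a}}_j$ in $\Ka$, while orthonormality with respect to $(\,\cdot\,,\,\cdot\,)_{\Ka,\phi}$ is automatic from the defining identity \eqref{binlinK0}, which makes $R$ an isometry between the two inner products.

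Next, I would read off the precise form of $\boldsymbol{\hat{a}}_j$. Writing $\boldsymbol{\psi}_j = \sum_{\ell \in \N_0/2}\hbar^\ell \psi_{j,\ell}$ with $\psi_{j,\ell}(y) = \sum_{|\alpha| \leq 2\ell+2K} \psi_{j,\ell,\alpha}\,y^\alpha$ by Cor.\ \ref{EssentialCorollary}, and invoking $R^{-1}(y^\alpha \hbar^\ell) = \hbar^{\ell - |\alpha|/2} x^\alpha$, one reindexes by $k := \ell - |\alpha|/2 + K$ to obtain
\begin{equation*}
\boldsymbol{\hat{a}}_j = \hbar^{-K}\sum_{k \in \N_0/2} \hbar^k \boldsymbol{a}_{j,k}, \qquad \boldsymbol{a}_{j,k} = \sum_{\ell\geq \max\{k-K,0\}} \sum_{|\alpha|=2(\ell-k+K)}\psi_{j,\ell,\alpha}\, x^\alpha .
\end{equation*}
The exponent shift by $-K$ is automatic since $\ell - |\alpha|/2 \geq -K$, and the constraint $\ell \geq 0$ forces $|\alpha| \geq 2(K-k)$ whenever $k \leq K$, giving the claimed lower bound $\max\{2(K-k),0\}$ on the degree of the lowest nonvanishing monomial in $\boldsymbol{a}_{j,k}$.

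Finally, for the parity statement, Prop.\ \ref{ThmHalfIntegers} already kills the half-integer terms in $\boldsymbol{E}_j$. For the eigenfunctions, let $\epsilon \in \{\pm 1\}$ denote the common value of $(-1)^{|\alpha|}$ for $(\alpha,k) \in I_{E_0}$. Since Lemma \ref{Qj} implies that each $Q_{j_m}$ (and hence $R_j(\mathbf{z})$ in \eqref{Rj}) changes parity by $(-1)^{2j}$, the coefficient of $\hbar^\ell$ in $\boldsymbol{f}^j = \Pi_{E_0} h_{\alpha^j,k^j}$ has parity $\epsilon (-1)^{2\ell}$. Because the matrix $B$ from \eqref{DefB} and the diagonalizing eigenvectors of $M$ both have entries in $\C(\!(\hbar)\!)$ under the parity hypothesis (by Prop.\ \ref{ThmHalfIntegers} together with the $\C(\!(\hbar)\!)$-version of the Rellich-type result cited in the proof of Cor.\ \ref{EssentialCorollary}), only integer $\hbar$-powers are mixed, so the parity pattern $\psi_{j,\ell}$ of parity $\epsilon (-1)^{2\ell}$ persists. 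Combining with $|\alpha| = 2(\ell - k + K)$ from the reindexing, a nonvanishing contribution to $\boldsymbol{a}_{j,k}$ at monomial $x^\alpha$ requires $(-1)^{|\alpha|} = \epsilon(-1)^{2\ell}$, i.e., $(-1)^{2(K-k)} = \epsilon$. Noting $K \in \N_0$ when $\epsilon = +1$ and $K \in 1/2 + \N_0$ when $\epsilon = -1$, this equation fails precisely when $k$ is half-integer, forcing $\boldsymbol{a}_{j,k} = 0$ in both cases.

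The main obstacle is this last step: one needs to verify not only that $M$ lives in $\C(\!(\hbar)\!)$ (already given) but also that both the orthonormalization producing $\boldsymbol{b}^j$ and the subsequent diagonalization producing $\boldsymbol{\psi}_j$ respect the bookkeeping of parities of the $y$-coefficients, so that the parity of $\psi_{j,\ell}$ is exactly $\epsilon (-1)^{2\ell}$ rather than merely bounded by something coarser.
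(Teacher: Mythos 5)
Your proposal is correct and follows essentially the same route as the paper: pull back the eigendecomposition of $Q$ on $\Ka_0$ (Cor.\ \ref{EssentialCorollary}) through $R^{-1}$, use the isometry built into Definition \ref{DefProdK_0} for orthonormality, reindex $k=\ell-|\beta|/2+K$ to get the form \eqref{eigenfctonK} and the degree bound, and derive the parity statement from the parity $(-1)^{|\alpha|+2s}$ of the coefficients of $\Pi_{E_0}h_{\alpha,k}$ together with the fact (Prop.\ \ref{ThmHalfIntegers} plus the Rellich-type result of Klein--Schwarz) that the orthonormalization and diagonalization only mix integer powers of $\hbar$. The final bookkeeping step you flag as the main obstacle is exactly the point the paper handles by writing $\boldsymbol{\psi}_j$ as a $\C(\!(\hbar)\!)$-linear combination of the $\Pi_{E_0}h_{\alpha,\ell}$ and rescaling term by term, which is the same argument you give in parity language.
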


\begin{proof}
As discussed in Remark \ref{RemarkOnH0AndQ0}, $E_0$ is an eigenvalue of multiplicity $m_0$ of $Q_0$. Thus 
we derive the eigenfunctions $\boldsymbol{\hat{a}}_j\in \Ka$ of $\tau_p\bigl(H_{\phi, \hbar}\bigr)$ by rescaling the eigenfunctions $\boldsymbol{\psi}_j\in \Ka_0$ 
of $Q$ given in \eqref{psij}, explicitly 
\[
\boldsymbol{\hat{a}}_j = R^{-1} \boldsymbol{\psi}_j = R^{-1} \sum_{\ell\in \N_0/2}\hbar^\ell \sum_{\natop{\beta\in \N_0^n}{|\beta| \leq d_{j\ell}}} 
\psi_{j,\ell,\beta} y^\beta =
 \sum_{\ell\in \N_0/2} \sum_{\natop{\beta\in \N_0^n}{|\beta| \leq d_{j\ell}}}\hbar^{\ell-\frac{|\beta|}{2}} \psi_{j,\ell,\beta} x^\beta \] 
for $d_{j,\ell} =2 \ell + \max_{(\alpha,k)\in I_{E_0}}  |\alpha| $ (see Corollary \ref{EssentialCorollary}). 
Thus $\ell-\frac{|\beta|}{2} \geq - \max_{(\alpha,k)\in I_{E_0}}{ |\alpha|}/{2} =: - K$. Moreover, setting $k=\ell - {|\beta|}/{2} + K$, we get 
$|\beta|/2\geq K-k$ and 
\[
\boldsymbol{\hat{a}}_j  =  \hbar^{-K}\sum_{k\in \N_0/2} \hbar^k \sum_{\natop{\beta\in\N_0^n}{|\beta|\geq 2(K-k)}} \psi_{j,|\beta|/2-K+k, \beta} x^\beta =:  
\hbar^{-K}\sum_{k\in \N_0/2} \hbar^k \boldsymbol{a}_{j,k} \, .
\]
Thus ${a}_{j,k,\beta} = \psi_{j,|\beta|/2-K+k, \beta}$ and the lowest degree of $\boldsymbol{a}_{j,k}$ is given by $\max \{ 2(K-k), 0\}$.

The orthonormality of the eigenfunctions $\boldsymbol{\hat{a}}_j = R^{-1} \boldsymbol{\psi}_j$ follows at once from Corollary \ref{EssentialCorollary}
together with Definition \ref{DefProdK_0}.

We now consider the case that all $|\alpha|$ are even (or odd respectively) for $(\alpha, \ell)\in I_{E_0}$.
By Corollary \ref{EssentialCorollary}, Prop.\ \ref{ThmHalfIntegers} and  \cite[Thm.\ A2.3]{klein-schwarz}, we can write any eigenfunction 
$\boldsymbol{\psi}\in \Pi_{E_0}\Ka_0$ of $Q$ as linear combination of 
$\Pi_{E_0} h_{\alpha,\ell}\, , \; (\alpha, \ell)\in I_{E_0},$ 
with coefficients $\lambda_{\alpha,\ell}\in \C(\!(\hbar)\!)$. Thus, using the notation in the proof on Prop.\ \ref{ThmHalfIntegers}, by \eqref{falphaj} we 
explicitly get
\begin{equation}\label{f1} 
\boldsymbol{\psi} = \sum_{(\alpha,\ell)\in I_{E_0}} \sum_{k\in \N_0} \hbar^k \lambda_{\alpha,\ell,k} \sum_{s\in\N_0/2}\hbar^s f_{\alpha,\ell,s}\; . 
\end{equation}
As discussed below \eqref{falphaj}, the polynomials $f_{\alpha,\ell,s}$ are of degree $|\alpha| + 2s$ in $y$ and have the parity 
$(-1)^{|\alpha| + 2s}$. Thus rescaling explicitly gives 
\begin{equation}\label{f2} 
R^{-1} f_{\alpha,\ell,s} = R^{-1} \!\!\sum_{\natop{r\in\N_0}{r\leq |\alpha|/2 + s}} \sum_{\natop{\beta\in \N_0^n}{|\beta| = |\alpha| + 2s + 2r}}
\!\!f_{\alpha,\ell,s,\beta} y^\beta =  \sum_{\natop{r\in\N_0}{r\leq |\alpha|/2 + s}} \!\hbar^{|\alpha|/2 + s - r }\! 
\sum_{\natop{\beta\in \N_0^n}{|\beta| = |\alpha| + 2s - 2r}}\!\!
f_{\alpha,\ell,s,\beta}x^\beta \, .
 \end{equation}
Inserting \eqref{f2} into \eqref{f1} and setting $m=k +2s - r\in\Z$ gives the expansion
\[ R^{-1} \boldsymbol{\psi} =  \sum_{(\alpha,\ell)\in I_{E_0}} \sum_{m\in\Z, m\geq -|\alpha|/2} \hbar^{|\alpha|/2 + m } 
p_{\alpha,\ell, m} \in \Ka \; . \]
Thus if $|\alpha|$ is odd (or even respectively), then $|\alpha|/2 + m$ is half-integer (or integer resp.).
So if one of these assumptions is true for all $(\alpha,\ell)\in I_{E_0}$, there remain no integer terms (or half-integer terms respectively) in the
expansion of $R^{-1} \boldsymbol{\psi}$. Since the transition to $\boldsymbol{\hat{a}}_j$ is just a reordering, this is also true for 
$\boldsymbol{\hat{a}}_j$.  

The statement on the eigenvalues follows at once from Corollary  \ref{EssentialCorollary}, the definition of $Q$ in \eqref{RescaledSeriesOfH}
and Prop.\ \ref{ThmHalfIntegers}.
\end{proof}

\section{Proof of Theorem \ref{Theorem1}}\label{Kapitel4}

Given Setup \ref{setup1}, we fix an admissible pair $(U, \phi)$ and an eigenvalue $\hbar E_0$ of $H_{p,\hbar}$ of multiplicity $m_0$. For  $j=1, \ldots, m_0$, 
let 
$\boldsymbol{\hat{a}}_j$ and $\boldsymbol{E}_j$ be the associated orthonormal eigenfunctions and eigenvalues of $\tau_p\bigl(H_{\phi, \hbar}\bigr)$ as given in 
Thm.\ \ref{PropEigenmitx}. 
By Corollary \ref{BorelTheorem}, for each $k\in \N_0/2$ there exist sections $\tilde{a}_{j,k} \in \Gamma^\infty(U, \Eh)$ such that
$\tau_p\bigl( \tilde{a}_{j,k}\bigr) = \boldsymbol{a}_{j,k}$. 

Then by Theorem \ref{PropEigenmitx} we have
\begin{equation} \label{RestEquation}
  \Bigl(H_{\phi, \hbar} - \hbar\bigl( E_0 +  \sum_{k \in \N/2} \hbar^k E_{j,k} \bigr)\Bigr) \sum_{k\in \N_0/2} \hbar^k \tilde{a}_{j,k} = \hbar\sum_{k\in \N_0/2} 
\hbar^k r_{j,k}
\end{equation}
in the sense of formal series in $\hbar^{1/2}$ with coefficients in $\Gamma^\infty(U, \Eh)$, where $r_{j,k}\in \ker \tau_p \cap \Gamma^\infty_c (U, \Eh)$, 
i.e.,  $r_{j,k}$ vanishes to infinite order at $p\in M$ for all $k\in \N_0/2, j=1, \ldots m_0$. 

We now want to modify the sections $\tilde{a}_{j,k}$ such that \eqref{RestEquation} is solved with zero on the right hand side. 
By Remark \ref{RmkTransportEquations}, this is the case for a series $\sum \hbar^k a_{j, k}$ if and only if the coefficients 
$a_{j,k}$ solve the transport equations \eqref{TransportEquations}. Our terms $\tilde{a}_{j,k}$ solve the transport equations {\em almost}, up to the error 
terms $r_{j,k}$ which vanishes to infinite order at $0$. To get rid of these terms, we use the following theorem.

\begin{theorem}[Flat Solutions] \label{TheoremFlatSolutions}
Let $X \in \Gamma^\infty(U, TM)$ be a vector field vanishing at $p \in U$ such that the eigenvalues of its linearization $\nabla X|_p$ at $p$ all have positive 
real part and let $A$ be an endomorphism field of the vector bundle $\Eh$. Assume that $U$ is star-shaped around $p$ with respect to $X$. Then for each 
section $r \in \ker \tau_p\cap \Gamma^\infty (U, \Eh)$,
there exists a unique section $\eta \in \ker \tau_p\cap \Gamma^\infty (U, \Eh)$ 
solving the differential equation
\begin{equation}\label{allform}
  (\nabla_X^\Eh + A)\, \eta = r.
\end{equation}
\end{theorem}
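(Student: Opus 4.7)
The strategy is to reduce the problem to a linear first-order ODE along the integral curves of $X$ and solve it by a Duhamel-type integral over the backward orbit, exploiting that $p$ is an exponentially attracting fixed point for the reversed flow. Let $\Phi_t$ denote the local flow of $X$. Since the eigenvalues of $\nabla X|_p$ all have positive real part, there exist $\lambda>0$ and $C>0$ such that $|\Phi_t(q)-p|\leq C\,|q-p|\,e^{\lambda t}$ for all $t\leq 0$ and all $q$ in a sufficiently small neighborhood of $p$, while by star-shapedness $\Phi_t(q)\in U$ for every $q\in U$ and $t\leq 0$. Let $P_t\colon \Eh_q\to\Eh_{\Phi_t(q)}$ denote parallel transport along $s\mapsto\Phi_s(q)$ with respect to $\nabla^\Eh$; setting
\[
 u(t):=P_t^{-1}\eta(\Phi_t(q)),\quad \tilde A(t,q):=P_t^{-1}A(\Phi_t(q))P_t,\quad \tilde r(t,q):=P_t^{-1}r(\Phi_t(q)),
\]
equation \eqref{allform} becomes the linear ODE $\dot u(t)+\tilde A(t,q)\,u(t)=\tilde r(t,q)$ on the fixed vector space $\Eh_q$.

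For existence, let $\Psi(t,s;q)\in \End(\Eh_q)$ be the propagator of this ODE (so $\partial_t\Psi+\tilde A\,\Psi=0$, $\Psi(s,s;q)=\id$) and set
\[
 \eta(q):=\int_{-\infty}^0 \Psi(0,s;q)\,\tilde r(s,q)\,ds.
\]
Gronwall's inequality gives $\|\Psi(0,s;q)\|\leq e^{K|s|}$ with $K=\sup_U\|A\|$, while the flatness of $r$ at $p$ combined with the contraction estimate for $\Phi_s$ yields $\|\tilde r(s,q)\|\leq C_N\,|q-p|^N e^{N\lambda s}$ for every $N\in\N$, uniformly in $s\leq 0$ and $q$ near $p$. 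Taking $N>K/\lambda$ makes the integral absolutely convergent with bound $\|\eta(q)\|\leq C_N\,|q-p|^N$; standard smooth dependence of ODE solutions on initial data and parameters shows $\eta\in\Gamma^\infty(U,\Eh)$, and Duhamel's formula confirms $(\nabla_X^\Eh+A)\eta=r$. Differentiating the defining integral in $q$ produces terms containing derivatives of $\tilde r$ together with $q$-Jacobians of $\Phi_t$ and $\Psi(0,s;q)$, all of which satisfy variational equations along the same orbit and hence grow at most exponentially in $|s|$; the same type of estimate then yields $|\partial^\alpha\eta(q)|\leq C_{\alpha,N}\,|q-p|^N$ for every multi-index $\alpha$ and every $N$. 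In particular $\partial^\alpha\eta(p)=0$ for all $\alpha$, so $\eta\in\ker\tau_p$ as required.

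Uniqueness fits the same ODE picture: if $\eta'\in\ker\tau_p\cap\Gamma^\infty(U,\Eh)$ solves $(\nabla_X^\Eh+A)\eta'=0$, then $u'(t):=P_t^{-1}\eta'(\Phi_t(q))$ satisfies $\dot u'+\tilde A u'=0$, hence $u'(0)=\Psi(0,t_0;q)\,u'(t_0)$ for every $t_0\leq 0$. Flatness of $\eta'$ together with the contraction of $\Phi_{t_0}(q)$ toward $p$ gives $\|u'(t_0)\|\leq C_N\,|q-p|^N e^{N\lambda t_0}$ for every $N$, which beats the bound $e^{K|t_0|}$ on $\|\Psi(0,t_0;q)\|$; letting $t_0\to-\infty$ forces $\eta'(q)=0$ for all $q\in U$.

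The main technical obstacle is ensuring that the constructed $\eta$ genuinely lies in $\ker\tau_p$: one has to control all $q$-derivatives of the Duhamel integral uniformly up to the boundary $q=p$. This requires verifying that the iterated $q$-Jacobians of the contracting flow $\Phi_t$ and of the propagator $\Psi(0,s;q)$, which themselves solve nested variational equations along the orbit, grow no faster than exponentially in $|s|$, so that for every $\alpha$ the super-polynomial decay coming from the flatness of the derivatives of $r\circ\Phi_s$ at $p$ continues to dominate. Once these uniform variational estimates are in place, the asserted flatness of $\eta$ and the corresponding uniqueness statement follow as outlined above.
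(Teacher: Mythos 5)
Your method---rewriting the equation along the backward flow of $X$, solving by a Duhamel integral over $(-\infty,0]$, and playing the exponential contraction of $\Phi_s$ toward $p$ against the flatness of $r$---is exactly the standard argument; the paper itself gives no proof of this theorem but delegates it to \cite[Thm.\ 4.1]{matthias} and \cite{giacomo} (scalar case in \cite{helffer-sjostrand-1}), whose proofs proceed in the same way. Near $p$ your estimates are correct: the bounds $\|\tilde r(s,q)\|\leq C_N|q-p|^N e^{N\lambda s}$ and $\|\Psi(0,s;q)\|\leq e^{K|s|}$ do give absolute convergence and flatness once $N\lambda>K$, and the higher-order variational bounds you only sketch are indeed routine on a compact neighborhood of $p$ in which the backward orbit stays.

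There is, however, a genuine gap on the global level. Every estimate you actually verify (the decay of $\tilde r(s,q)$, the Gronwall bound for $\Psi$, and the decay of $u'(t_0)$ in the uniqueness step) is justified only for $q$ in a small neighborhood of $p$, because it rests on $\Phi_s(q)\to p$ exponentially as $s\to-\infty$. For a general $q\in U$ the only consequence of star-shapedness you invoke is $\Phi_s(q)\in U$, which yields no decay of $\tilde r(s,q)$ whatsoever; hence neither the convergence of your defining integral nor the limit $t_0\to-\infty$ in the uniqueness argument is established on all of $U$, while the theorem asserts existence and uniqueness of $\eta\in\ker\tau_p\cap\Gamma^\infty(U,\Eh)$ on the whole of $U$. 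What is needed, and what the cited proofs use, is that the backward orbit of \emph{every} point of $U$ tends to $p$, i.e.\ enters the contraction ball in finite time; one then defines $\eta$ on that ball by your integral and extends it to $U$ by integrating the ODE along the orbit over the finite time needed to reach the ball (and, for uniqueness, propagates $\eta'\equiv 0$ outward in the same way). This convergence of backward orbits is not a consequence of backward invariance of $U$ alone (nothing you assume excludes further zeros of $X$ in $U$ trapping backward orbits, in which case even solvability can fail), so you must either prove it in the situation at hand ($X=2\grad\phi$ near the nondegenerate minimum, $U$ admissible) or state it explicitly as the form of star-shapedness you use, as the references do. A minor point in the same direction: $K=\sup_U\|A\|$ need not be finite; replace it by the supremum of $\|A\|$ over the (compact closure of the) relevant orbit segment.
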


For the scalar case, a proof of Theorem \ref{TheoremFlatSolutions} is contained in \cite{helffer-sjostrand-1} (and amplified e.g. in \cite{dima}). In \cite{helffer-sjostrand-4} Theorem \ref{TheoremFlatSolutions} was used for the bundle case
to analyze the Witten complex 
(without giving a complete formal proof). In fact, such a
proof only needs minor modifications compared to the scalar case (basically due to the fact that the non-autonomous ODE 
$\dot{x} = A(t) x$ has no solution in terms of elementary functions for $A(t)$ a matrix, making the variation of constants
formula slightly less explicit). A complete proof of Theorem \ref{TheoremFlatSolutions} (in the case $M=\R^n$) is contained e.g. in 
\cite[Theorem A.1, Step 3]{giacomo} or (for general $M$) in \cite{matthias}.\\

For  $j=1, \dots, m_0$, we now look for a solution $\boldsymbol{\eta}_j := \sum_{k\in \N_0/2} \hbar^k \eta_{j,k}$ of \eqref{RestEquation} with $\eta_{j,k}\in\ker \tau_p\cap \Gamma^\infty (U, \Eh)$.
The first equation is 
\begin{equation} \label{aj0}
\bigl(\nabla_X^\Eh + A\bigr)\eta_{j,0} = r_{j,0}~~~~~~\text{with}\quad X = 2 \grad \phi~~~~\text{ and }~~~~A = W + \Delta \phi - E_0\; . 
\end{equation}
It follows from Setup \ref{setup1} that $X$ and $A$ fulfill the assumptions given in Thm \ref{TheoremFlatSolutions} and since $(U, \phi)$ is 
admissible, $U$ is star-shaped around $p$ with respect to $X$ by Definition \ref{DefAdmissible}.
Thus \eqref{aj0} has a unique solution
$\eta_{j,0}\in  \ker \tau_p\cap \Gamma^\infty (U, \Eh)$ and $a_{j,0}:= \tilde{a}_{j,0} - \eta_{j,0}$ solves $\bigl(\nabla_X^\Eh + A\bigr)a_{j,0} = 0$.
We now proceed inductively. 
Assume that the equations of order $1\leq \ell \leq k+1/2$ are solved by sections $\eta_{j, \ell-1}\in  \ker \tau_p\cap \Gamma^\infty (U, \Eh)$. 
Then for $X$ and $A$ as given in \eqref{aj0}, the equation of order $k+1 \in \N/2$ is given by
\begin{equation}\label{etajk}
  (\nabla_X^\Eh + A)\eta_{j, k}   = r_{j, k}  -  L \eta_{j,k-1} + \sum\nolimits_{i=1/2}^k E_{j,i}\, \eta_{j, k-i}\, .
\end{equation}
Now as the right hand side of \eqref{etajk} is known and flat at $p$, it can be considered as inhomogeneity $r$ and by Thm.\ \ref{TheoremFlatSolutions}, there exists a 
unique solution $\eta_{j,k}\in \ker \tau_p\cap \Gamma^\infty (U, \Eh)$ of \eqref{etajk}.

Setting $a_{j,k} := \tilde{a}_{j,k} - \eta_{j,k}\in \Gamma^\infty (U, \Eh)$ for $j=1, \dots, m_0$ and $k\in \N_0/2$, it follows that 
\[ \boldsymbol{a}_j :=  \hbar^{-K}\sum_{k\in\N_0/2} \hbar^k a_{j,k} \] 
solves 
\begin{equation*}
  \Bigl(H_{\phi, \hbar} - \hbar \boldsymbol{E}_j \Bigr) \boldsymbol{a}_j = 0
\end{equation*}
in the sense of asymptotic series in $\hbar^{1/2}$ with coefficients in $\Gamma^\infty(U, \Eh)$, thus
Property 1 of Thm \ref{Theorem1} is shown. 

To prove Property 2 of Thm \ref{Theorem1}, we first remark that for any cut-off function $\chi\in \Gamma^\infty_c (U, [0,1])$ with $\chi\equiv 1$ in a 
neighborhood of $p$, we have $\chi a_{j,k}\in \Gamma^\infty_c (U, \Eh)$ for $j=1, \ldots m_0$ and $k\in \N_0/2$. Moreover, since $\chi$ and 
$\eta_{j,k}$ are in $\ker\tau_p$, we have
by construction $\tau_p (\chi a_{j,k}) = \tau_p(\tilde{a}_{j,k}) = \boldsymbol{a}_{j,k}$, $j=1, \ldots m_0, k\in \N_0/2$. Thus 
by Thm.\ \ref{PropEigenmitx}, \eqref{scalarprodmitI} and \eqref{extendI} we have in $\C(\!(\hbar^{1/2})\!)$ for all
$i,j=1, \ldots m_0$
\begin{align*} 
 \delta_{ji} &= \bigl(\boldsymbol{\hat{a}}_j, \boldsymbol{\hat{a}}_i\bigr)_{\Ka, \phi} = 
 \Bigl(\hbar^{-K}\!\sum_{k\in\N_0/2} \hbar^k \boldsymbol{a}_{j,k}, 
\hbar^{-K}\!\sum_{\ell\in\N_0/2} \hbar^\ell \boldsymbol{a}_{i,\ell}\Bigr)_{\Ka, \phi}\\
&=
 \Bigl(\hbar^{-K}\!\sum_{k\in\N_0/2} \hbar^k \tau_p \bigl(\chi a_{j,k}\bigr), 
\hbar^{-K}\!\sum_{\ell\in\N_0/2} \hbar^\ell \tau_p\bigl(\chi a_{i,\ell}\bigr)\Bigr)_{\Ka, \phi} \\
&=  \mathcal{I}\Bigl( \gamma \bigl[\hbar^{-K}\!\sum_{k\in\N_0/2} \hbar^k\chi a_{j,k}, \hbar^{-K}\!\sum_{\ell\in\N_0/2} \hbar^\ell \chi a_{i,\ell}\bigr]
\Bigr) \\
 &=  \mathcal{I}\Bigl( \gamma  \bigl[\chi \boldsymbol{a}_j, \chi \boldsymbol{a}_i\bigr]\Bigr)\; .
\end{align*}

The statement about the vanishing of terms of half-integer or integer order, depending on the parity of $|\alpha|$, follows from the analogous result in 
Thm.\ \ref{PropEigenmitx}.

\begin{remark}
By similar arguments as for the method of stationary phase, we get the following analytic statement from \eqref{RestEquation}: For each open set $U^\prime \subset \subset U$, for each $N \in \N/2$ and for each $\hbar_0>0$, there exists a constant $C>0$ such that
\begin{equation} 
  \Bigl|\Bigl(H_{\hbar} - \hbar\bigl(E_0 - \sum\nolimits_{k=1/2}^N \hbar^k E_{j,k}\bigr)\Bigr) e^{-\phi/\hbar}
 \sum\nolimits_{k = 0}^N \hbar^k \tilde{a}_{j,k} \Bigr| \leq C \hbar^{N+K+3/2}.
\end{equation}
uniformly on $U^\prime$. The extra factor $\hbar^K$ is present on the right hand side because the terms $\tilde{a}_{j, k}$ vanish to order at least $2K - k$ by Thm.\ \ref{PropEigenmitx}. However, to get the stronger result
\begin{equation}\label{Equation77}
  \Bigl| \Bigl( H_{\phi, \hbar} - \hbar\bigl(E_0 - \sum\nolimits_{k=1/2}^N \hbar^k E_{j,k}\bigr) \Bigr) \sum\nolimits_{k=0}^N \hbar^k a_{j, k} \Bigr| 
\leq C \hbar^{N+K +3/2}
\end{equation}
uniformly on $U^\prime$ (which is equivalent to property 1 of Thm.\ \ref{Theorem1}), we need Thm. \ref{TheoremFlatSolutions}.
\end{remark}

\end{document}